\newtheorem{theorem}{Theorem}
\newtheorem{lemma}{Lemma}
\theoremstyle{definition}
\newtheorem{conditionA}{Condition}
\newtheorem{conditionB}{Condition}
\newtheorem{conditionC}{Condition}
\newtheorem{conditionD}{Condition}
\newtheorem{conditionE}{Condition}
\newtheorem{remark}{Remark}
\newenvironment{condition2}[1]
{%
    \addtocounter{conditionC}{-1}%
    \begin{conditionC}}
    {\end{conditionC}}
\newenvironment{conditionsufficient}[1]
{%
    \addtocounter{conditionC}{-1}%
    \begin{conditionC}}
    {\end{conditionC}}
\newenvironment{conditionsufficient2}[1]
{%
    \addtocounter{conditionC}{-1}%
    \begin{conditionC}}
    {\end{conditionC}}
\newtheorem{example}{Example}
\newcommand{\ind}{\mathbbm{1}}
\newcommand{\real}{\mathbb{R}}
\newcommand{\expect}{\mathbb{E}}
\newcommand{\Var}{\textnormal{Var}}
\newcommand{\expit}{\text{expit}}
\newcommand{\id}{\mathcal{I}}
\newcommand{\vnorm}{{\mathrm{v}}}
\DeclareMathOperator*{\argmin}{argmin}
\newcommand{\bfitem}[1]{\noindent\textbf{#1}}
\title{Universal sieve-based strategies for efficient estimation\\using machine learning tools}
\author[*]{Hongxiang Qiu}
\author[$\dagger$]{Alex Luedtke}
\author[*]{Marco Carone}
\affil[*]{Department of Biostatistics, University of Washington}
\affil[$\dagger$]{Department of Statistics, University of Washington}
\date{}
\begin{document}
\maketitle

\begin{abstract}
    Suppose that we wish to estimate a finite-dimensional summary of one or more function-valued features of an underlying data-generating mechanism under a nonparametric model. One approach to estimation is by plugging in flexible estimates of these features. Unfortunately, in general, such estimators may not be asymptotically efficient, which often makes these estimators difficult to use as a basis for inference. Though there are several existing methods to construct asymptotically efficient plug-in estimators, each such method either can only be derived using knowledge of efficiency theory or is only valid under stringent smoothness assumptions. Among existing methods, sieve estimators stand out as particularly convenient because efficiency theory is not required in their construction, their tuning parameters can be selected data adaptively, and they are universal in the sense that the same fits lead to efficient plug-in estimators for a rich class of estimands. Inspired by these desirable properties, we propose two novel universal approaches for estimating function-valued features that can be analyzed using sieve estimation theory. Compared to traditional sieve estimators, these approaches are valid under more general conditions on the smoothness of the function-valued features by utilizing flexible estimates that can be obtained, for example, using machine learning.
\end{abstract}

\section{Introduction}

\subsection{Motivation}

A common statistical problem consists of using available data in order to learn about a summary of the underlying data-generating mechanism. In many cases, this summary involves function-valued features of the distribution that cannot be estimated at a parametric rate under a nonparametric model --- for example, a regression function or the density function of the distribution. Examples of useful summaries involving such features include average treatment effects \citep{Rubin1974}, average derivatives \citep{Hardle1989}, moments of the conditional mean function \citep{Shen1997}, variable importance measures \citep{Williamson2017} and treatment effect heterogeneity measures \citep{Levy2018}. For ease of implementation and interpretation, in traditional approaches to estimation, these features have typically been restricted to have simple forms encoded by parametric or restrictive semiparametric models. However, when these models are misspecified, both the interpretation and validity of subsequent inferences can be compromised. To circumvent this difficulty, investigators have increasingly relied on machine learning (ML) methods to flexibly estimate these function-valued features.

Once estimates of the function-valued features are obtained, it is natural to consider plug-in estimators of the summary of interest. However, in general, such estimators are not root-$n$-consistent and asymptotically normal, and hence not asymptotically efficient (referred to as \emph{efficient} henceforth). Lacking this property is problematic since it often forms the basis for constructing valid confidence intervals and hypothesis tests \citep{Bickel2003,Newey2004}. When the function-valued features are estimated by ML methods, in order for the plug-in estimator to be CAN, the ML methods must not only estimate the involved function-valued features well, but must also satisfy a small-bias property with respect to the summary of interest \citep{Newey2004,VanderLaan2018}. Unfortunately, because ML methods generally seek to optimize out-of-sample performance, they seldom satisfy the latter property.

\subsection{Existing methodological frameworks} \label{section: intro existing methods}

The targeted minimum loss-based estimation (TMLE) framework provides a means of constructing efficient plug-in estimators \citep{VanderLaan2006,VanderLaan2018}. Given an (almost arbitrary) initial ML fit that provides a good estimate of the function-valued features involved, TMLE produces an adjusted fit such that the resulting plug-in estimator has reduced bias and is efficient. This adjustment process is referred to as targeting since a generic estimate of the function-valued features is modified to better suit the goal of estimating the summary of interest. Though TMLE provides a general template for constructing efficient estimators, its implementation requires specialized expertise, namely knowledge of the analytic expression for an influence function of the summary of interest. Influence functions arise in semiparametric efficiency theory and are key to establishing efficiency, but can be difficult to derive. Furthermore, even when an influence function is known analytically, additional expertise is needed to construct a TMLE for a given problem.

Alternative approaches for constructing efficient plug-in estimators have been proposed in the literature, including the use of undersmoothing \citep{Newey1998}, twicing kernels \citep{Newey2004}, and sieves \citep{Chen2007,Newey1997,Shen1997}. These methods neither require knowing an influence function nor performing any targeting of the function-valued feature estimates. Hence, the same fits can be used to simultaneously estimate different summaries of the data-generating distribution, even if these summaries were not pre-specified when obtaining the fit. These approaches also circumvent the difficulties in obtaining an influence function. However, these methods all rely on smoothness conditions on derivatives of the functional features that may be overly stringent. In addition, undersmoothing provides limited guidance on the choice of the tuning parameter; the twicing kernel method requires the use of a higher-order kernel, which may lead to poor performance in small to moderate samples \citep{Marron1994}.

In contrast, under some conditions, sieve estimation can produce a flexible fit with the optimal out-of-sample performance while also yielding an efficient --- and therefore root-$n$-consistent and asymptotically normal --- plug-in estimator \citep{Shen1997}. In this paper, we focus on extensions of this approach. In sieve estimation, we first assume that the unknown function falls in a rich function space, and construct a sequence of approximating subspaces indexed by sample size that increase in complexity as sample size grows. We require that, in the limit, the functions in the subspaces can approximate any function in the rich function space arbitrarily well. These approximating subspaces are referred to as \textit{sieves}. By using an ordinary fitting procedure that optimizes the estimation of the function-valued feature within the sieve, the bias of the plug-in estimator can decrease sufficiently fast as the sieve grows in order for that estimator to be efficient. Thus sieve estimation requires no explicit targeting for the summary of interest.

The series estimator is one of the best known and most widely used sieve techniques. These sieves are taken as the span of the first finitely many terms in a basis that is chosen by the user to approximate the true function well. Common choices of the basis include polynomials, splines, trigonometric series and wavelets, among others. However, series estimators usually require strong smoothness assumptions on derivative of the unknown function in order for the flexible fit to converge at a sufficient rate to ensure the resulting plug-in estimator is efficient. As the dimension of the problem increases, the smoothness requirement may become prohibitive. Moreover, even if the smoothness assumption is satisfied, a prohibitively large sample size may be needed for some series estimators to produce a good fit. For example, if the unknown function is smooth but is a constant over a region, estimation based on a polynomial series can perform poorly in small to moderate samples.

Series estimators may also require the user to choose the number of terms in the series in such a way that results in a sufficient convergence rate. The rates at which the number of terms should grow with sample size have been thoroughly studied (e.g. \citep{Chen2007,Newey1997,Shen1997}). However, these results only provide minimal guidance for applications because there is no indication on how to select the actual number of terms for a given sample size. In practice, the number of terms is the series is often chosen by CV. Upper bounds on the convergence rate of the series estimator as a function of sample size and the number of terms have been derived, and it has been shown that the optimal number of terms that minimizes the bound can also lead to an efficient plug-in estimator \citep{Shen1997}. However, CV tends to select the number of terms that optimizes the actual convergence rate \citep{Vanderlaan2003cv}, which may differ from the number of terms minimizing the derived bound on the convergence rate. Even though the use of CV-tuned sieve estimators has achieved good numerical performance, to the best of our knowledge, there is no theoretical guarantee that they lead to an efficient plug-in estimator.

Two variants of traditional series estimators were proposed in \cite{Bickel2003}. These methods can use two bases to approximate the unknown function-valued features and the corresponding gradient separately, whereas in traditional series estimators, only one basis is used for both approximations. Consequently, these variants may be applied to more general cases than traditional series estimators. However, like traditional series estimators, they also suffer from the inflexibility of the pre-specified bases.

\subsection{Contributions and organization of this article}

In this paper we present two approaches that can partially overcome these shortcomings.
\begin{enumerate}
    \item \textit{Estimating the unknown function with Highly Adaptive Lasso (HAL)} \citep{benkeser2016,VanderLaan2017}.\\
    If we are willing to assume the unknown functions have a finite variation norm, then they may be estimated via HAL. If the tuning parameter is chosen carefully, then we may obtain an efficient plug-in estimator. This method can help overcome the stringent smoothness assumptions on derivatives that are required by existing series estimators, as we discussed earlier.
    
    \item \textit{Using data-adaptive series based on an initial ML fit}.\\
    As long as the initial ML algorithm converges to the unknown function at a sufficient rate, we show that, for certain types of summaries, it is possible to obtain an efficient plug-in estimator with a particular data-adaptive series. The smoothness assumption on the unknown function can be greatly relaxed due to the introduction of the ML algorithm into the procedure. Moreover, for summaries that are highly smooth, we show that the number of terms in the series can be selected by CV.
\end{enumerate}
Although the first approach is not an example of sieve estimation, both approaches are motivated by the sieve literature and can be shown to lead to asymptotically efficient plug-in estimators using the sieve estimation theory derived in \cite{Shen1997}. The flexible fits of the functional features from both approaches can be plugged in for a rich class of estimands.

We remark that, although we do not have to restrict ourselves to the plug-in approach in order to construct an asymptotically efficient estimator, other estimators do not overcome the shortcomings described in Section~\ref{section: intro existing methods} and can have other undesirable properties. For example, the popular one-step correction approach (also called debiasing in the recent literature on high-dimensional statistics) \citep{Pfanzagl1982} constructs efficient estimators by adding a bias reduction term to the plug-in estimator. Thus, it is not a plug-in estimator itself, and as a consequence, one-step estimators may not respect known constraints on the estimand --- for example, bounds on a scalar-valued estimand (e.g., the estimand is a probability and must lie in $[0,1]$) or shape constraints on a vector-valued estimand (e.g., monotonicity constraints). This drawback is also typical for other non-plug-in estimators, such as those derived via estimating equations \citep{VanderLaan2003} and double machine learning \citep{Chernozhukov2017,Chernozhukov2018}. Additionally, as with the other procedures described above, the one-step correction approach requires the analytic expression of an influence function.

Our paper is organized as follows. We introduce the problem setup and notation in Section~\ref{section: setup}. We consider plug-in estimators based on HAL in Section~\ref{section: hal}, data-adaptive series in Section~\ref{section: simple case}, and its generalized version that is applicable to more general summaries in Section~\ref{section: general case}. Section~\ref{section: discussion} concludes with a discussion. Technical proofs of lemmas and theorems (Appendix~\ref{appendix: proof}), simulation details (Appendix~\ref{appendix: simulation}) and other additional details are provided in the Appendix.

\section{Problem setup and traditional sieve estimation review} \label{section: setup}

Suppose we have independent and identically distributed observations $V_1,\ldots,V_n$ drawn from $P_0$. Let $\Theta$ be a class of functions, and denote by $\theta_0 \in \Theta$ a (possibly vector-valued) functional feature of $P_0$ --- for example, $\theta_0$ may be a regression function. Throughout this paper we assume that the generic data unit is $V=(X,Z) \sim P_0$, where $X$ is a (possibly vector-valued) random variable corresponding to the argument of $\theta_0$, and $Z$ may also be a vector-valued random variable. In some cases $V=X$ and $Z$ is trivial. We use $\mathcal{X}$ to denote the support of $X$. The estimand of interest is a finite-dimensional summary $\Psi(\theta_0)$ of $\theta_0$. We consider a plug-in estimator $\Psi(\hat{\theta}_n)$, where $\hat{\theta}_n$ is an estimator of $\theta_0$, and aim for this plug-in estimator to be asymptotically linear, in the sense that $\Psi(\hat{\theta}_n)=\Psi(\theta_0) + n^{-1} \sum_{i=1}^n \text{IF}(V_i) + o_p(n^{-1/2})$ with $\text{IF}$ an influence function satisfying $\expect_{P_0}[\text{IF}(V)]=0$ and $\expect_{P_0}[\text{IF}(V)^2]<\infty$. This estimator is efficient under a nonparametric model if the estimator is also regular. By the central limit theorem and Slutsky's theorem, it follows that $\Psi(\hat{\theta}_n)$ is a CAN estimator of $\Psi(\theta_0)$, and therefore, $\sqrt{n} [\Psi(\hat{\theta}_n) - \Psi(\theta_0)] \overset{d}{\rightarrow} N(0, \expect_{P_0}[\text{IF}(V)^2])$. This provides a basis for constructing valid confidence intervals for $\Psi(\theta_0)$.

We now list some examples of such problems.

\begin{example} \label{example: moments}
    Moments of the conditional mean function \citep{Shen1997}: Let $\theta_0: x \mapsto \expect_{P_0}[Z|X=x]$ be the conditional mean function. The $\kappa$-th moment of $\theta_0(X)$, $X \sim P_0$, namely $\Psi_\kappa(\theta_0)=\expect_{P_0}[\theta_0^\kappa(X)]$, can be a summary of interest. The values of $\Psi_1(\theta_0)$ and $\Psi_2(\theta_0)$ are useful for defining the proportion of $\Var_{P_0}(Z)$ that is explained by $X$, which may be written as $\Var_{P_0}(\theta_0(X))/\Var_{P_0}(Z)$. This proportion is a measure of variable importance \citep{Williamson2017}. Generally, we may consider $\Psi(\theta_0)=\expect_{P_0}[f(\theta_0(X))]$ for a fixed function $f$.
\end{example}

\begin{example} \label{example: average derivative}
    Average derivative \citep{Hardle1989}: Let $X$ follow a continuous distribution on $\real^d$ and $\theta_0: x \mapsto \expect_{P_0}[Z|X=x]$ be the conditional mean function. Let $\theta_0'$ denote the vector of partial derivatives of $\theta_0$. Then $\Psi(\theta_0)=\expect_{P_0}[\theta_0'(X)]$ summarizes the overall (adjusted) effect of each component of $X$ on $Y$. Under certain conditions, we can rewrite $\Psi(\theta_0)=\expect_{P_0}[\theta_0(X) p_0'(X)/p_0(X)]$, where $p_0$ is the Lebesgue density of $X$ and $p_0'$ is the vector of partial derivatives of $p_0$. This expression clearly shows the important role of the Lebesgue density of $X$ in this summary.
\end{example}

\begin{example} \label{example: ATE}
    Mean counterfactual outcome \citep{Rubin1974}:  Suppose that $Z=(A,Y)$ where $A$ is a binary treatment indicator and $Y$ is the outcome of interest. Let $\theta_0: x \mapsto \expect_{P_0}[Y|A=1,X=x]$ be the outcome regression function under treatment value 1. Under causal assumptions, the mean counterfactual outcome corresponding to the intervention that assigns treatment 1 to the entire population can be nonparametrically identified by the G-computation formula $\Psi(\theta_0)=\expect_{P_0}[\theta_0(X)]$.
\end{example}

\begin{example} \label{example: treatment effect heterogeneity}
    Treatment effect heterogeneity measures \citep{Levy2018}: Similarly to Example~\ref{example: ATE}, suppose that $A$ is a binary treatment indicator and $Z$ is the outcome of interest. Let $\theta_0=(\mu_{00},\mu_{01})^\top$, where $\mu_{0a}: x \mapsto \expect_{P_0}[Z|A=a,X=x]$ is the outcome regression function for treatment arm $a\in\{0,1\}$. Then, $\Psi(\theta_0)=\Var_{P_0}(\mu_{01}(X)-\mu_{00}(X))$ is an overall summary of treatment effect heterogeneity.
\end{example}

To obtain an asymptotically linear plug-in estimator, $\hat{\theta}_n$ must converge to $\theta_0$ at a sufficiently fast rate and approximately solve an estimating equation to achieve the small bias property with respect to the summary of interest \citep{Newey2004,VanderLaan2017,VanderLaan2018}. For simplicity, we assume the estimand to be scalar-valued --- when the estimand is vector-valued, we can treat each entry as a separate estimand, and the plug-in estimators of all entries are jointly asymptotically linear if each estimator is asymptotically linear. Therefore, this leads to no loss in generality if the same fits are used for all entries in the summary of interest.

Sieve estimation allows us to obtain an estimator $\Psi(\hat{\theta}_n)$ with the small bias property with respect to $\Psi(\theta_0)$ while maintaining the optimal convergence rate of $\hat{\theta}_n$ \citep{Chen2007,Shen1997}. The construction of sieve estimators is based on a sequence of approximating spaces $\Theta_n$ to $\Theta$. These approximating spaces are referred to as \textit{sieves}. Usually $\Theta_n$ is much simpler than $\Theta$ to avoid over-fitting but complex enough to avoid under-fitting. For example, $\Theta_n$ can be the space of all polynomials with degree $K$ or splines with $K$ knots with $K=K(n) \rightarrow \infty$ as $n \rightarrow \infty$. In this paper, with a loss function $\ell$ such that $\theta_0 \in \argmin_{\theta \in \Theta} \expect_{P_0}[\ell(\theta)(V)]$, we consider estimating $\theta_0$ by minimizing an empirical risk based on $\ell$, i.e., $\hat{\theta}_n \in \argmin_{\theta \in \Theta_n} n^{-1} \sum_{i=1}^{n} \ell(\theta)(V_i)$. Under some conditions, the growth rate of $\Theta_n$ can be carefully chosen so that $\Psi(\hat{\theta}_n)$ is an asymptotically linear estimator of $\Psi(\theta_0)$ while $\hat{\theta}_n$ converges to $\theta_0$ at the optimal rate.

Throughout this paper, for a probability distribution $P$ and an integrable function $f$ with respect to $P$, we define $Pf := \int f(v) dP(v)=\expect_P[f(V)]$. We use $P_n$ to denote the empirical distribution. We take $\langle \cdot, \cdot \rangle$ to be the $L^2(P_0)$-inner product, i.e., $\langle \theta_1, \theta_2 \rangle=P_0 (\theta_1 \theta_2)$, where $L^2(P_0)$ is the set of real-valued $P_0$-squared-integrable functions defined on the support of $P_0$. When the functions are vector-valued, we take $\langle \theta_1, \theta_2 \rangle=P_0 (\theta_1^\top \theta_2)$. We use $\| \cdot \|$ to denote the induced norm of $\langle \cdot, \cdot \rangle$. We assume that $\Theta \subseteq L^2(P_0)$. We remark that we have committed to a specific choice of inner product and norm to fix ideas; other inner products can also be adopted, and our results will remain valid upon adaptation of our upcoming conditions. We discuss this explicitly via a case study in Appendix~\ref{appendix: change norm}.

For the methods we propose in this article, we assume that $\Theta$ is convex. Throughout this paper, we will further require a set of conditions similar to those in \cite{Shen1997}. For any $\theta \in \Theta$, let $\ell_{0}'[\theta-\theta_0](v) := \lim_{\delta \rightarrow 0} [\ell(\theta_0 +\delta(\theta-\theta_0))(v) - \ell(\theta_0)(v)]/\delta$ be the G\^{a}teaux derivative of $\ell$ at $\theta_0$ in the direction $\theta-\theta_0$ and $r[\theta-\theta_0](v) := \ell(\theta)(v) - \ell(\theta_0)(v) - \ell_{0}'[\theta-\theta_0](v)$ be the corresponding remainder.
\begin{conditionA}[Linearity and boundedness of G\^{a}teaux derivative operator of loss function]
    \label{Adloss}
    For all $\theta \in\Theta$, $\ell_{0}'[\theta-\theta_0]$ exists and $\ell_{0}'[\theta-\theta_0](v) - P_0 \ell_{0}'[\theta-\theta_0]$ is linear and bounded in $\theta-\theta_0$.
\end{conditionA}
\begin{conditionA}[Local quadratic behavior of loss function]
    \label{Aquadraticloss}
    There exists a constant $\alpha_{0,\ell} \in (0,\infty)$ such that, for all $\theta \in \Theta$ such that $P_0 \{ \ell(\theta) - \ell(\theta_0) \}$ or $\| \theta - \theta_0 \|$ is sufficiently small, it holds that $P_0 \{ \ell(\theta) - \ell(\theta_0) \} = \alpha_{0,\ell} \| \theta - \theta_0 \|^2 /2 + o(\| \theta - \theta_0 \|^2)$.
\end{conditionA}

\begin{remark}
    We now present an equivalent form of \ref{Aquadraticloss} that may be easier to verify in practice. For all $\theta\in \Theta\backslash\{\theta_0\}$, define $h_\theta:=(\theta-\theta_0)/\|\theta-\theta_0\|$ and $a_{\theta}:= \frac{d^2}{d\delta^2} P_0 \ell(\theta_0 + \delta h_\theta) |_{\delta=0}$. Requiring Condition~\ref{Aquadraticloss} is equivalent to requiring that $a_{\theta_1}=a_{\theta_2}$ for all $\theta_1,\theta_2\in\Theta\backslash\{\theta_0\}$ and that
    \begin{align*}
    \sup_{\theta\in\Theta} \left|P_0 \ell(\theta_0 + \delta h_\theta) - P_0 \ell(\theta_0) - \frac{a_\theta}{2}\right|&= o(\delta^2).
    \end{align*}
    Moreover, if \ref{Aquadraticloss} holds, then, for any $\theta\in\Theta\backslash\{\theta_0\}$, it is true that $\alpha_{0,\ell}=a_\theta$.
\end{remark}

A large class of loss functions satisfy Conditions~\ref{Adloss} and \ref{Aquadraticloss}. For example, in the regression setting where $Z$ is the outcome, the squared-error loss $\ell(\theta): v \mapsto [z-\theta(x)]^2$ and the logistic loss $\ell(\theta): v \mapsto -z \theta(x) + \log\{1+\exp(\theta(x))\}$ both satisfy these conditions; a negative working log-likelihood usually also satisfies these conditions. In Examples~\ref{example: moments}--\ref{example: treatment effect heterogeneity}, the unknown functions are all conditional mean functions, which can be estimated with the above loss functions. Thus, Conditions~\ref{Adloss} and \ref{Aquadraticloss} hold. Examples~\ref{example: ATE} and \ref{example: treatment effect heterogeneity} require a slight modification discussed in more details in Appendix~\ref{appendix: change norm}. We also note that Condition~\ref{Aquadraticloss} is sufficient for Condition~B in \cite{Shen1997}.

\begin{conditionA}[Differentiability of summary of interest]
    \label{AdPsi}
    $\Psi_{\theta_0}'[\theta-\theta_0] := \lim_{\delta \rightarrow 0} [\Psi(\theta_0+\delta (\theta-\theta_0)) - \Psi(\theta_0)]/\delta$ exists for all $\theta \in \Theta$ and is a linear bounded operator.
\end{conditionA}
If Condition \ref{AdPsi} holds, then, by the Riesz representation theorem, $\Psi_{\theta_0}'[\theta-\theta_0] = \langle \theta-\theta_0, \dot{\Psi} \rangle$ for a gradient function $\dot{\Psi}=\dot{\Psi}_{\theta_0}$ in the completion of the space spanned by $\Theta-\theta_0 := \{x \mapsto \theta(x)-\theta_0(x): \theta \in \Theta\}$.
\begin{conditionA}[Locally quadratic remainder]
    \label{APsiremainder}
    There exists a constant $C>0$ so that, for all $\theta$ with sufficiently small $\| \theta - \theta_0 \|$, it holds that
    $$|\Psi(\theta)-\Psi(\theta_0)-\Psi_{\theta_0}'[\theta-\theta_0]| \leq C \| \theta - \theta_0 \|^2.$$
\end{conditionA}
The above condition states that the remainder of the linear approximation to $\Psi$ is locally bounded by a quadratic function.

Conditions~\ref{AdPsi} and \ref{APsiremainder} hold for Examples~\ref{example: moments}--\ref{example: treatment effect heterogeneity}. For the generalized moment of the conditional mean function in Example~\ref{example: moments}, it holds that $\dot{\Psi}=f' \circ \theta_0$. For the average derivative of the conditional mean function in Example~\ref{example: average derivative}, it holds that $\dot{\Psi}=p_0'/p_0$. For the average treatment effect and the treatment effect heterogeneity measure in Examples~\ref{example: ATE} and \ref{example: treatment effect heterogeneity}, as we show in Appendix~\ref{appendix: change norm}, $\dot{\Psi}$ also exists and depends on the propensity score function $x \mapsto P_0(A=1 \mid X=x)$.

\section{Estimation with Highly Adaptive Lasso} \label{section: hal}

\subsection{Brief review of Highly Adaptive Lasso} \label{section: hal review}

Recently, the Highly Adaptive Lasso (HAL) was proposed as a flexible ML algorithm that only requires a mild smoothness condition on the unknown function and has a well-described implementation \citep{benkeser2016,VanderLaan2017}. In this subsection, we briefly review HAL. We first heuristically introduce its definition and desirable properties, and then introduce the definition and implementation more formally. For ease of presentation, for the moment, we assume that $\theta_0$ is real-valued.

In HAL, $\theta_0$ is assumed to fall in the class of c\`{a}dl\`{a}g functions (right-continuous with left limits) defined on $\mathcal{X} \subseteq \real^d$ with variation norm bounded by a finite constant $M$. In this section, we denote this function class by $\Theta_{\vnorm,M}$. The variation norm of a c\`{a}dl\`{a}g function $\theta$, denoted by $\| \theta \|_{\vnorm}$, characterizes the total variability of $\theta$ as its argument ranges over the domain, so $\| \cdot \|_{\vnorm}$ is a global smoothness measure and $\Theta_{\vnorm,M}$ is a large function class that even contains functions with discontinuities. Fig.~\ref{Fcadlag} presents some examples of univariate c\`{a}dl\`{a}g functions with finite variation norms for illustration. Because $\Theta_{\vnorm,M}$ is a rich class, it can be plausible that $\theta_0 \in \Theta_{\vnorm,M}$ for some $M < \infty$. The HAL estimator of $\theta_0$ is then $\hat{\theta}_n=\hat{\theta}_{n,M} \in \argmin_{\theta \in \Theta_{\vnorm,M}} n^{-1} \sum_{i=1}^n \ell(\theta)(V_i)$. Under this assumption, it has been shown that $\| \hat{\theta}_n - \theta_0 \| = o_p(n^{-1/4})$ regardless of the dimension of $X$ under additional mild conditions \citep{VanderLaan2017}. Thus, estimation with HAL replaces the usual smoothness requirement on derivatives of traditional series estimators by a requirement on global smoothness, namely $\theta_0 \in \Theta_{\vnorm,M}$ for some $M$.

\begin{figure}[h]
    \centering
    \includegraphics[scale=0.9]{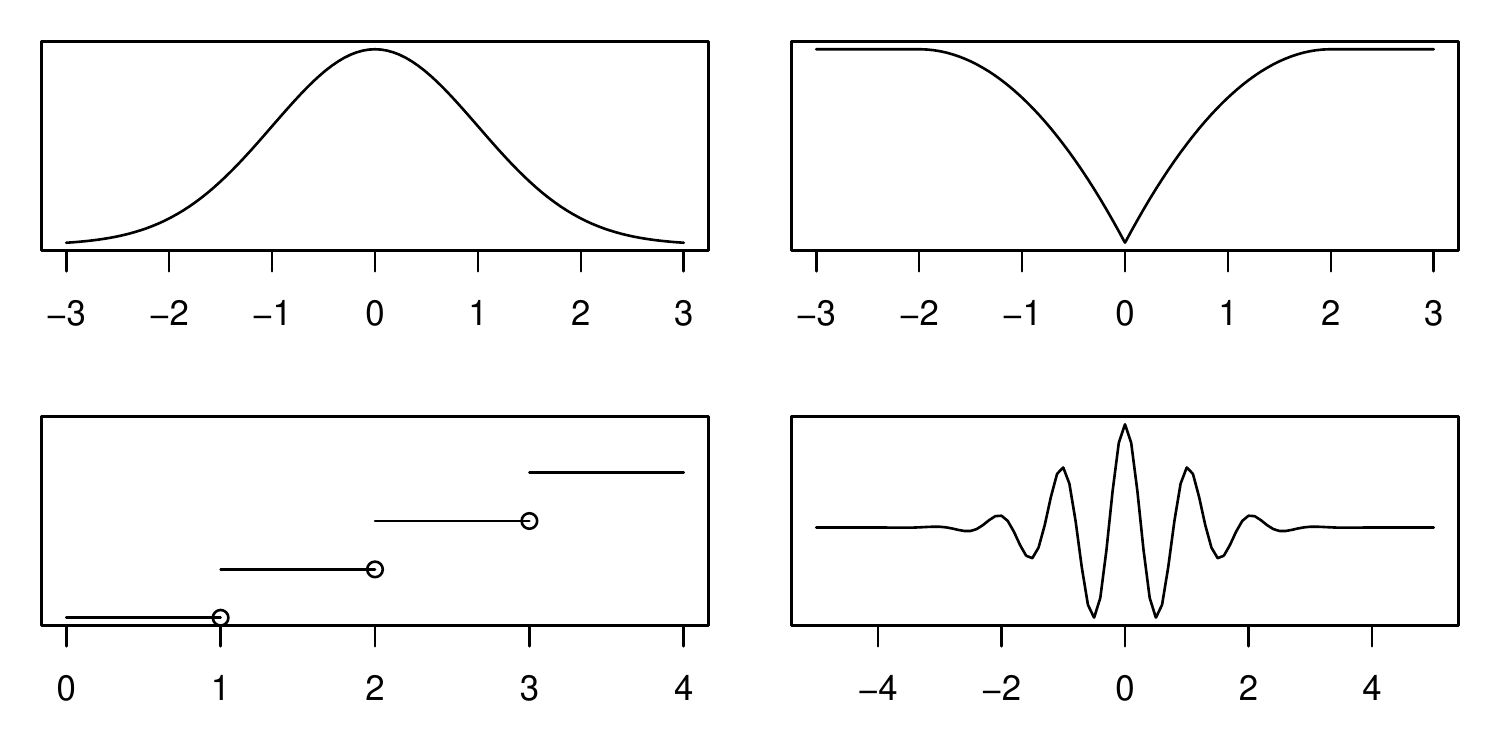}
    \caption{Examples of univariate c\`{a}dl\`{a}g functions with finite variation norms. The top-left, top-right, bottom-left and bottom-right plots present the standard normal density function, a minimax concave penalty function \citep{Zhang2010}, a step function and the real part of a Morlet wavelet \citep{Mallat2009} respectively.}
    \label{Fcadlag}
\end{figure}

We next formally present the definition of variation norm of a c\`{a}dl\`{a}g function $\theta: [ x^{(\ell)},x^{(u)} ] \subseteq \real^d \rightarrow \real$. Here, $x^{(\ell)}$ and $x^{(u)}$ are vectors in $\real^d$; with $\leq$ being entrywise, $[ x^{(\ell)},x^{(u)} ] := \{x \in \real^d: x^{(\ell)} \leq x  \leq x^{(u)} \}$.

For any nonempty index set $s \subseteq \{1,2,\ldots,d\}$ and any $x=(x_1,x_2,\ldots,x_d) \in [ x^{(\ell)},x^{(u)} ]$, we define $x_s := \{x_j: j \in s\}$ and $x_{-s} := \{x_j: j \in \{1,2,\ldots,d\} \setminus s \}$ to be entries of $x$ with indices in and not in $s$ respectively. We defined the $s$-section of $\theta$ as $\theta_s := \theta(x_1 \ind(1 \in s), x_2 \ind(2 \in s), \ldots, x_d \ind(d \in s))$. We can subsequently obtain the following representation of $\theta$ at any $x \in [ x^{(\ell)},x^{(u)} ]$ in terms of sums and integrals of the variation of $s$-sections of $\theta$ \citep{gill1993}:
$$\theta(x) = \theta ( x^{(\ell)} ) + \sum_{s \in \{1,\ldots,d\}, s \neq \emptyset} \int_{( x^{(\ell)},x ]} \theta_s(d \tilde{x}).$$
The variation norm is then subsequently defined as
$$\| \theta \|_\vnorm := | \theta ( x^{(\ell)} ) | + \sum_{s \in \{1,\ldots,d\}, s \neq \emptyset} \int_{( x^{(\ell)},x^{(u)} ]} | \theta_s(d \tilde{x}) |.$$
We refer to \cite{benkeser2016} and \cite{VanderLaan2017} for more details on variation norm. Notably, this notion of variation norm coincides with that of Hardy and Krause \citep{owen2005}.

We finally briefly introduce the algorithm to compute a HAL estimator. It can be shown that an empirical risk minimizer in $\Theta_{\vnorm,M}$ is a step function that only jumps at sample points, namely
$$x \mapsto \beta_0 + \sum_{s \subseteq \{1,\ldots,d\}, s \neq \emptyset} \sum_{j=1}^n \ind(X_{j,s} \leq x_s) \beta_{s,j}.$$
Here, $\beta_0$ and all $\beta_{s,j}$ are real numbers. To find an empirical risk minimizer in $\Theta_{\vnorm,M}$ in the above form, we may solve the following optimization problem:
\begin{align*}
& \min_{\theta} \sum_{i=1}^n \ell(\theta)(V_i) \\
\text{subject to} \quad & \theta: x \mapsto \beta_0 + \sum_{s \subseteq \{1,\ldots,d\}, s \neq \emptyset} \sum_{j=1}^n \ind(X_{j,s} \leq x_s) \beta_{s,j} \\
& |\beta_0| + \sum_{s \subseteq \{1,\ldots,d\}, s \neq \emptyset} \sum_{j=1}^n |\beta_{s,j}| \leq M.
\end{align*}
The constraint imposes an upper bound on the $\ell_1$ norm of a vector. Therefore, for common loss functions, we may use software for LASSO regression \citep{Tibshirani1996}. For example, if the loss function is the squared-error loss, then we may run a LASSO linear regression to obtain a HAL estimate.

\subsection{Estimation with an oracle tuning parameter} \label{section: hal oracle}

In this section, we consider plug-in estimators based on HAL. For ease of illustration, for the rest of this section, we consider scalar-valued $\Psi$, and will discuss vector-valued $\Psi$ only at the end of this subsection. We further introduce the following conditions needed to establish that the HAL-based plug-in estimator is efficient.
\begin{conditionB}[C\`{a}dl\`{a}g functions]
    \label{Acadlag}
    $\theta_0$ and $\dot{\Psi}$ are c\`{a}dl\`{a}g.
\end{conditionB}
\begin{conditionB}[Bound on variation norm]
    \label{AM}
    For some $M<\infty$, $\| \theta_0 \|_{\vnorm} + \| \dot{\Psi} \|_{\vnorm} \leq M$.
\end{conditionB}

Condition~\ref{AM} ensures that certain perturbations of $\theta_0$ still lie in $\Theta_{\vnorm,M}$, a crucial requirement for proving the asymptotic linearity of our proposed plug-in estimator. In addition, since $\dot{\Psi}$ may depend on components of $P_0$ other than $\theta_0$ as in Examples~\ref{example: average derivative}--\ref{example: treatment effect heterogeneity}, Conditions~\ref{Acadlag}--\ref{AM} may also impose conditions on these components.

In this section, we fix an $M$ that satisfies Condition~\ref{AM}. Additional technical conditions can be found in Appendix~\ref{section: HAL additional regularity conditions}. Let $\hat{\theta}_n=\hat{\theta}_{n,M} \in \argmin_{\theta \in \Theta_{\vnorm,M}} n^{-1} \sum_{i=1}^n \ell(\theta)(V_i)$ denote the HAL fit obtained using the bound $M$ in Condition~\ref{AM}.

We note that $\hat{\theta}_n$ is not a typical sieve estimator because $M$ is fixed and there is no explicit sequence of growing approximating spaces $\Theta_n$. Nevertheless, we may view this method as a special case of sieve estimation with degenerate sieves $\Theta_n=\Theta_{\vnorm,M}$ for all $n$. This allows us to utilize existing results \citep{Shen1997} to show the asymptotic linearity and efficiency of the plug-in estimator based on $\hat{\theta}_n$. We next formally present this result.

\begin{theorem}[Efficincy of plug-in estimator]
    \label{THALefficiency}
    Under Conditions~\ref{Adloss}--\ref{APsiremainder} and \ref{Acadlag}--\ref{AHALfinitevar}, $\Psi(\hat{\theta}_n)$ is an asymptotically linear estimator of $\Psi(\theta_0)$ with the influence function being $v \mapsto \alpha_{0,\ell}^{-1} \{-\ell_{0}'[\dot{\Psi}] (v) + \expect_{P_0} [ \ell_{0}'[\dot{\Psi}] (V) ]\}$, that is,
    $$\Psi(\hat{\theta}_n) = \Psi(\theta_0) + \frac{1}{n} \sum_{i=1}^n \alpha_{0,\ell}^{-1} \left\{ -\ell_{0}'[\dot{\Psi}] (V_i) + \expect_{P_0} \left[ \ell_{0}'[\dot{\Psi}] (V) \right] \right\}+o_p(n^{-1/2}).$$
    As a consequence, $\sqrt{n} [\Psi(\hat{\theta}_n)-\Psi(\theta_0)] \overset{d}{\rightarrow} \text{N}(0, \xi^2 )$ with $\xi^2 := \Var_{P_0}(\ell_{0}'[\dot{\Psi}] (V))/\alpha_{0,\ell}^2$. In addition, under Conditions~\ref{Acloseminimizer} and \ref{Aempiricalprocesslike} in Appendix~\ref{section: regularity additional conditions}, $\Psi(\hat{\theta}_n)$ is efficient under a nonparametric model.
\end{theorem}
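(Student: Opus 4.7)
I would cast the problem in the sieve M-estimation framework of \cite{Shen1997} with the ``degenerate'' sieve $\Theta_n\equiv\Theta_{\vnorm,M}$ that does not grow with $n$, as suggested in the text, so that establishing efficiency reduces to verifying Shen's hypotheses. Three inputs drive the argument: (i) the HAL convergence rate $\|\hat{\theta}_n-\theta_0\|=o_p(n^{-1/4})$ from \cite{VanderLaan2017}, which is fast enough to annihilate all quadratic remainders appearing below; (ii) convexity of the variation-norm ball $\Theta_{\vnorm,M}$; and (iii) the membership $\pm\dot{\Psi}\in\Theta_{\vnorm,M}$ guaranteed by Condition~\ref{AM}, which lets the Riesz representer itself serve as an admissible perturbation direction inside the sieve.

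\textbf{Core argument.} Using Condition~\ref{AdPsi} and the Riesz representation,
\begin{align*}
\Psi(\hat{\theta}_n)-\Psi(\theta_0) \;=\; \langle \hat{\theta}_n-\theta_0,\dot{\Psi}\rangle + R_\Psi,
\end{align*}
and Condition~\ref{APsiremainder} together with the HAL rate gives $R_\Psi=o_p(n^{-1/2})$. It therefore suffices to prove $\langle \hat{\theta}_n-\theta_0,\dot{\Psi}\rangle=-\alpha_{0,\ell}^{-1}(P_n-P_0)\ell_0'[\dot{\Psi}]+o_p(n^{-1/2})$. For this I would adopt Shen's two-sided perturbation: since $\pm\dot{\Psi}\in\Theta_{\vnorm,M}$, the convex combinations $\hat{\theta}_n^{\pm}:=(1-\epsilon_n)\hat{\theta}_n\pm\epsilon_n\dot{\Psi}$ are feasible for any $\epsilon_n\in[0,1]$, so empirical minimality of $\hat{\theta}_n$ yields $P_n\ell(\hat{\theta}_n)\leq P_n\ell(\hat{\theta}_n^{\pm})$. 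Linearizing both sides via Condition~\ref{Adloss}, combining with the second-order expansion of $P_0\ell$ at $\theta_0$ from Condition~\ref{Aquadraticloss} (which forces $P_0\ell_0'[h]=0$ for every admissible direction $h$ and pins down the leading quadratic coefficient as $\alpha_{0,\ell}$), and choosing $\epsilon_n$ just larger than $n^{-1/2}$, then isolates
\begin{align*}
\alpha_{0,\ell}\langle\hat{\theta}_n-\theta_0,\dot{\Psi}\rangle \;=\; -(P_n-P_0)\ell_0'[\dot{\Psi}] + o_p(n^{-1/2}),
\end{align*}
which is exactly what is needed.

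\textbf{Main obstacle.} The hardest part is controlling the empirical process remainders $(P_n-P_0)\{\ell(\hat{\theta}_n^{\pm})-\ell(\hat{\theta}_n)\mp\epsilon_n\ell_0'[\dot{\Psi}]\}$ and $(P_n-P_0)\{\ell(\hat{\theta}_n)-\ell(\theta_0)\}$ generated by the expansion above. I would handle these using the well-known fact that the uniformly bounded c\`adl\`ag class $\Theta_{\vnorm,M}$ has bracketing entropy of polynomial-logarithmic order in $1/\varepsilon$ and is therefore $P_0$-Donsker; combined with the $o_p(n^{-1/4})$ rate, standard asymptotic equicontinuity then kills both terms. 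The technical Condition~\ref{AHALfinitevar} is what supplies the envelope integrability required here. A secondary subtlety is that $\hat{\theta}_n$ may saturate the variation-norm constraint, so the raw perturbation $\hat{\theta}_n\pm\epsilon_n\dot{\Psi}$ need not be feasible; the convex combination $(1-\epsilon_n)\hat{\theta}_n\pm\epsilon_n\dot{\Psi}$ circumvents this at the cost of an extra $\epsilon_n\hat{\theta}_n$ drift, which is absorbed into a negligible quadratic remainder using Condition~\ref{Aquadraticloss} and the HAL rate.

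\textbf{Efficiency.} Once the above asymptotic linearity representation with influence function $v\mapsto\alpha_{0,\ell}^{-1}\{-\ell_0'[\dot{\Psi}](v)+P_0\ell_0'[\dot{\Psi}]\}$ is established, efficiency under the nonparametric model follows as soon as $\Psi(\hat{\theta}_n)$ is shown to be regular at $P_0$; this is precisely the role of the added Conditions~\ref{Acloseminimizer} and \ref{Aempiricalprocesslike} in the appendix, after which the convolution theorem identifies the displayed influence function with the canonical gradient of $\Psi$ under the nonparametric model.
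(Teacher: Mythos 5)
Your overall framing matches the paper's: treat HAL as a degenerate sieve $\Theta_n\equiv\Theta_{\vnorm,M}$, invoke the HAL rate $\|\hat\theta_n-\theta_0\|=o_p(n^{-1/4})$ from \cite{VanderLaan2017}, use the Donsker property of variation-norm balls of c\`adl\`ag functions, run a Shen-type two-sided perturbation, and identify the influence function with the canonical gradient via Conditions~\ref{Acloseminimizer}--\ref{Aempiricalprocesslike} (the paper's Theorem~\ref{THALefficiency} proof in fact just verifies path feasibility and Donsker and then cites Theorem~1 of \cite{Shen1997}; the detailed expansion you sketch is the one the paper writes out for Theorem~\ref{Tefficiency}). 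However, your key step uses the wrong perturbation, and this is a genuine gap. You perturb toward $\pm\dot\Psi$, i.e.\ $\hat\theta_n^{\pm}=(1-\epsilon_n)\hat\theta_n\pm\epsilon_n\dot\Psi$, whereas the paper perturbs toward $\theta_0\pm\dot\Psi$, i.e.\ along $\vartheta_\delta=\hat\theta_n+\delta(\pm\dot\Psi+\theta_0-\hat\theta_n)$; the full strength of Condition~\ref{AM} (a bound on $\|\theta_0\|_{\vnorm}+\|\dot\Psi\|_{\vnorm}$, not merely $\|\dot\Psi\|_{\vnorm}\le M$) is exactly what makes that path stay in $\Theta_{\vnorm,M}$. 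The difference is not cosmetic. Writing your drift as $-\epsilon_n\hat\theta_n=-\epsilon_n(\hat\theta_n-\theta_0)-\epsilon_n\theta_0$, only the first piece is absorbed as you claim; the piece $-\epsilon_n\theta_0$ is first order in $\epsilon_n$ and, after expanding $P_0\{\ell(\hat\theta_n^{\pm})-\ell(\theta_0)\}$ via Condition~\ref{Aquadraticloss} and the centered linear term, contributes (per unit $\epsilon_n$)
\begin{equation*}
-\alpha_{0,\ell}\,\langle\hat\theta_n-\theta_0,\theta_0\rangle-(P_n-P_0)\,\ell_{0}'[\theta_0],
\end{equation*}
which is only $o_p(n^{-1/4})+O_p(n^{-1/2})$, not $o_p(n^{-1/2})$, and which enters both one-sided inequalities with the \emph{same} sign, so combining the $+$ and $-$ perturbations does not cancel it; you can only bound $|\alpha_{0,\ell}\langle\hat\theta_n-\theta_0,\dot\Psi\rangle+(P_n-P_0)\ell_{0}'[\dot\Psi]|$ by this contamination term. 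Proving the contamination is itself $o_p(n^{-1/2})$ would amount to re-running the argument in the direction $\theta_0$, which requires feasibility of a perturbation toward $2\theta_0$ and is not guaranteed by Condition~\ref{AM}. The fix is simply to use the paper's path, whose extra pieces are $-\epsilon_n\alpha_{0,\ell}\|\hat\theta_n-\theta_0\|^2$ and $-\epsilon_n(P_n-P_0)\ell_{0}'[\hat\theta_n-\theta_0]$, both negligible by the HAL rate and Condition~\ref{AHALempiricalprocess}.

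Two smaller corrections: the auxiliary sequence must satisfy $\epsilon_n=o(n^{-1/2})$ (so that the $\epsilon_n^2$ terms, once divided by $\epsilon_n$, are $o(n^{-1/2})$), not ``just larger than $n^{-1/2}$''; and the envelope/uniform variation-norm control for the empirical-process remainders is supplied by Condition~\ref{AHALempiricalprocess} together with the Donsker property of $\Theta_{\vnorm,M'}$, not by Condition~\ref{AHALfinitevar}, which only ensures the asymptotic variance is finite. Your efficiency paragraph agrees with the paper's appendix argument.
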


We note that, for HAL to achieve the optimal convergence rate, we only need that $M \geq \| \theta_0 \|_{\vnorm}$ \citep{benkeser2016,VanderLaan2017}. The requirement of a larger $M$ imposed by Condition~\ref{AM} resembles undersmoothing \citep{Newey1998}, as using a larger $M$ would result in a fit that is less smooth than that based on the CV-selected bound. The $L^2(P_0)$-convergence rate of the flexible fit using the larger bound remains the same, but the leading constant may be larger. This is in contrast to traditional undersmoothing, which leads to a fit with a suboptimal rate of convergence.

Under some conditions, the following lemma provides a loose bound on $\| \dot{\Psi} \|_{\vnorm}$ in the case that $\dot{\Psi}$ has a particular structure. Such a bound can be used to select an appropriate bound on variation norm that satisfies Condition~\ref{AM}.
\begin{lemma}
    \label{Lvarnorm}
    Suppose that $\dot{\Psi}=\dot{\psi} \circ \theta_0$, where $\dot{\psi}: \real \rightarrow \real$ is differentiable. Let $x^{(\ell)}=\sup \{x: P_0(X \geq x) = 1\}$ where $\sup$ and $\geq$ are entrywise. Assume that $\theta_0$ is differentiable. If each of $\| \theta_0 \|_{\vnorm}$, $|\dot{\Psi}(x^{(\ell)})|$ and $B := \sup_{z: |z| \leq \| \theta_0 \|_{\vnorm}} |\dot{\psi}'(z)|$ is finite, then $\| \dot{\Psi} \|_{\vnorm} \leq B \| \theta_0 \|_{\vnorm} + |\dot{\Psi}(x^{(\ell)})|$. Hence, $\| \theta_0 \|_{\vnorm} + \| \dot{\Psi} \|_{\vnorm} \leq (B+1) \| \theta_0 \|_{\vnorm} + |\dot{\Psi}(x^{(\ell)})| < \infty$.
\end{lemma}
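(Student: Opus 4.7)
The plan is to unfold the Hardy--Krause definition of $\| \dot{\Psi} \|_{\vnorm}$, separate out the constant term, and bound each section-variation contribution by $B$ times the corresponding section variation of $\theta_0$ via the chain rule applied to the composition $\dot{\Psi} = \dot{\psi} \circ \theta_0$. The bound on $\| \theta_0 \|_\vnorm + \| \dot{\Psi} \|_\vnorm$ is then an immediate algebraic consequence.

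First, I would establish an auxiliary sup-norm bound: any c\`{a}dl\`{a}g function $\theta$ with $\| \theta \|_\vnorm < \infty$ satisfies $\sup_x | \theta(x) | \leq \| \theta \|_\vnorm$. This follows directly from the decomposition
$$ \theta(x) = \theta(x^{(\ell)}) + \sum_{s \subseteq \{1,\ldots,d\},\, s \neq \emptyset} \int_{(x^{(\ell)}, x]} \theta_s(d\tilde{x}) $$
together with the triangle inequality applied term-by-term to signed-measure total variation. Applied to $\theta_0$, this places $\theta_0(x)$ in the set $\{ z: |z| \leq \| \theta_0 \|_\vnorm \}$ on which $|\dot{\psi}'|$ is bounded by $B$; i.e., $\dot{\psi}$ is globally Lipschitz with constant $B$ along trajectories of $\theta_0$.

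Next, the definition of the variation norm yields
$$ \| \dot{\Psi} \|_\vnorm = | \dot{\Psi}(x^{(\ell)}) | + \sum_{s \neq \emptyset} \int_{(x^{(\ell)}, x^{(u)}]} |(\dot{\Psi})_s(d\tilde{x})|. $$
The first summand matches the $|\dot{\Psi}(x^{(\ell)})|$ that appears in the claimed bound. For each non-empty $s$, setting coordinates outside $s$ to the fixed value $0$ commutes with post-composition by the scalar map $\dot{\psi}$, so $(\dot{\Psi})_s = \dot{\psi} \circ (\theta_0)_s$. Using the assumed differentiability of $\theta_0$ and the chain rule, the signed Lebesgue--Stieltjes measure $d(\dot{\Psi})_s$ is absolutely controlled by the factor $|\dot{\psi}'(\theta_0)| \leq B$ times $d(\theta_0)_s$, giving
$$ \int_{(x^{(\ell)}, x^{(u)}]} |(\dot{\Psi})_s(d\tilde{x})| \leq B \int_{(x^{(\ell)}, x^{(u)}]} |(\theta_0)_s(d\tilde{x})|. $$
Summing over $s$ and using the definition of $\| \theta_0 \|_\vnorm$,
$$ \sum_{s \neq \emptyset} \int |(\dot{\Psi})_s(d\tilde{x})| \leq B \sum_{s \neq \emptyset} \int |(\theta_0)_s(d\tilde{x})| = B \bigl( \| \theta_0 \|_\vnorm - |\theta_0(x^{(\ell)})| \bigr) \leq B \| \theta_0 \|_\vnorm, $$
which combines with the constant term to give $\| \dot{\Psi} \|_\vnorm \leq B \| \theta_0 \|_\vnorm + |\dot{\Psi}(x^{(\ell)})|$. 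Adding $\| \theta_0 \|_\vnorm$ to both sides gives the stated corollary, and finiteness follows from the three finiteness hypotheses.

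The main obstacle is the multi-dimensional case of the chain-rule step. For $|s| \geq 2$, a naive application of Fa\`a di Bruno's formula to $\partial_s (\dot{\psi} \circ (\theta_0)_s)$ produces mixed-partial terms carrying higher derivatives of $\dot{\psi}$, which are not assumed bounded. Making the inequality $\int |(\dot{\Psi})_s(d\tilde{x})| \leq B \int |(\theta_0)_s(d\tilde{x})|$ rigorous therefore requires either iterating the one-dimensional Lipschitz argument axis-by-axis over the coordinates in $s$, or absorbing the higher-order terms back into the sectional variation of $\theta_0$ via a co-area-type representation. This is the only delicate step; the rest is bookkeeping driven entirely by the definition of the variation norm.
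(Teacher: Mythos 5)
Your proposal follows the paper's proof essentially step for step: the paper likewise first derives $\sup_x |\theta_0(x)| \leq \| \theta_0 \|_{\vnorm}$ from the sectional decomposition plus the triangle inequality, then expands $\| \dot{\Psi} \|_{\vnorm} = | \dot{\Psi}(x^{(\ell)}) | + \sum_{s \neq \emptyset} \int | \dot{\Psi}_s(du) |$ and bounds each sectional integral by $B \int | \theta_{0,s}(du) |$. The only difference is that the paper executes your ``delicate step'' in a single line, asserting the identity $| \dot{\Psi}_s(du) | = | \dot{\psi}'(z) | \big|_{z = \theta_{0,s}(u)} \, | \theta_{0,s}(du) |$ for every nonempty $s$ without further comment. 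For $|s| = 1$ this is the ordinary one-dimensional chain rule and your argument is complete as written. For $|s| \geq 2$ the concern you raise is legitimate --- the mixed partial of $\dot{\psi} \circ \theta_{0,s}$ carries a term of the form $\dot{\psi}''(\theta_{0,s}) \, \partial_{u_1}\theta_{0,s} \, \partial_{u_2}\theta_{0,s}$ that the asserted identity silently drops, and $\dot{\psi}''$ is neither assumed to exist nor to be bounded --- but this is a gap in the paper's own proof rather than an idea you failed to locate; neither of the repairs you sketch (iterating the Lipschitz bound axis-by-axis, or absorbing the cross terms into the sectional variation) appears in the paper. So you have reproduced the published argument and, in addition, correctly isolated its weakest link.
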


As we discussed at the end of Section~\ref{section: setup}, such structures as $\dot{\Psi}=\dot{\psi} \circ \theta_0$ are common, especially if we augment $\theta_0$ to include other implicitly relevant components of $P_0$. For example, in Example~\ref{example: average derivative}, we may augment $\theta_0$ with $p_0$ and $p_0'$; in Examples~\ref{example: ATE} and \ref{example: treatment effect heterogeneity}, we may augment $\theta_0$ with the propensity score function.

When $\theta_0$ is $\real^q$-valued, $\theta_0$ can often be viewed as a collection of $q$ real-valued variation-independent functions $\eta_{10}, \ldots, \eta_{q0}$. In this case, we can define $\Theta_{\vnorm,M} = \{ (\eta_1, \ldots, \eta_q): \eta_j \text{ is c\`{a}dl\`{a}g}, \|\eta_j\|_{\vnorm} \leq M_j, j=1,\ldots,q \}$ for a positive vector $M=(M_1,\ldots,M_q)$. The subsequent arguments follow analogously, where now each $\eta_j$ is treated as a separate function.

We remark that an undersmoothing condition such as \ref{AM} appears to be necessary for a HAL-based plug-in estimator to be efficient. We illustrate this numerically in Section~\ref{section: hal_sim}. The choice of a sufficiently large bound $M$ required by Theorem~\ref{THALefficiency} is by no means trivial, since this choice requires knowledge that the user may not have. Nevertheless, this result forms the basis of the data-driven method that we propose in Section~\ref{section: hal_sim} for choosing $M$. We also remark that, if we wish to plug in the same $\hat{\theta}_n$ based on HAL for a rich estimands, the chosen bound $M$ needs to be sufficiently large for all estimands of interest.

Another method to construct efficient plug-in estimators based on HAL has been independently developed \citep{VanderLaan2019}. Unlike our approach based on sieve theory, in this work, the authors directly analyzed the first-order bias of the plug-in estimator using influence functions. In terms of ease of implementation, their method requires specifying a constant involved in a threshold of the empirical mean of the basis functions, which may be difficult to specify in applications. Our approach in Section~\ref{section: hal_sim} may also require specifying an unknown constant to obtain a valid upper bound on $\|\dot{\Psi}\|_{\vnorm}$, but in some cases the constant may be set to zero, and our simulation suggests that the performance is not sensitive to the choice of the constant.

\subsection{Data-adaptive selection of the tuning parameter} \label{section: hal_sim}

Since it is hard to prespecify a bound $M$ on the variation norm that is sufficiently large to satisfy Condition~\ref{AM} but also sufficiently small to avoid overfitting for a given data set, it is desirable to select $M$ in a data-adaptive manner. A seemingly natural approach makes use of $k$-fold CV. In particular, for each candidate bound $M$, partition the data into $k$ folds of approximately equal size ($k$ is fixed and  does not depend on $n$), in each fold evaluate the performance of the HAL estimator fitted on all other folds based on this candidate $M$, and use the candidate bound $M_n$ with the best average performance across all folds to obtain the final fit. It has been shown that $\hat{\theta}_{n,M_n}$ can achieve the optimal convergence rate under mild conditions \citep{Vanderlaan2003cv}, but $M_n$ appears not to satisfy Condition \ref{AM} in general. In particular, the derived bound on $\| \hat{\theta}_n - \theta_0 \|$ relies on an empirical process term, namely $\sup_{\theta \in \Theta_{\vnorm,M}} |(P_n-P_0) \{ \ell(\theta) - \ell(\theta_0) \}|$, and a larger $M$ implies a larger space $\Theta_{\vnorm,M}$. Therefore, the bound on $\| \hat{\theta}_n - \theta_0 \|$ grows with $M$. Because $k$-fold CV seeks to optimize out-of-sample performance, $M_n$ generally appears to be close to $\| \theta_0 \|_{\vnorm}$ and not sufficiently large to obtain an efficient plug-in estimator.

To avoid this issue with the CV-selected bound, we propose a method that takes inspiration from $k$-fold CV, but modifies the bound so that it is guaranteed to yield an efficient plug-in estimator for $\Psi(\theta_0)$. This method may require the analytic expression for $\dot{\Psi}$. In Sections~\ref{section: simple case}~and~\ref{section: general case}, we present methods that do not require this knowledge.
\begin{enumerate}
    \item Derive an upper bound on $\| \dot{\Psi} \|_{\vnorm}$. This bound is a non-decreasing function of the variation norms of functions that can be learned from data (e.g., using Lemma~\ref{Lvarnorm}). In other words, find a non-decreasing function $F$ such that $\| \dot{\Psi} \|_{\vnorm} \leq F(\| \eta_{10} \|_{\vnorm}, \ldots, \| \eta_{q0} \|_{\vnorm})$ for unknown functions $\eta_{10},\ldots,\eta_{q0}$ that can be assumed to be c\`{a}dl\`{a}g with finite variation norm and can be estimated with HAL.
    \item Estimate $\theta_0,\eta_{10},\ldots,\eta_{q0}$ by HAL with $k$-fold CV, and denote the CV-selected bounds for these functions by $M_n,M_{1n},\ldots,M_{qn}$.
        \item For a small $\epsilon>0$, use the bound $M_n + \epsilon + F(M_{1n}+\epsilon,\ldots,M_{qn}+\epsilon)$ to estimate $\theta_0$ with HAL and plug in the fit. We refer to this step of slightly increasing the bounds as \textit{$\epsilon$-relaxation}.
\end{enumerate}
It follows from Lemma~\ref{Lcvbound} in the Appendix that this method would yield a sufficiently large bound with probability tending to one. In practice, it is desirable for the bound derived on $\| \dot{\Psi} \|_{\vnorm}$ to be relatively tight to avoid choosing an overly large bound that leads to overfitting in small to moderate samples. We remark that multiplying by $1+\epsilon$ rater than adding $\epsilon$ to each argument also leads to a valid choice for the bound; that is, the bound $M_n (1+\epsilon) + F(M_{1n} (1+\epsilon), \ldots, M_{qn} (1+\epsilon))$ is also sufficiently large with probability tending to one. In practice, the user may increase each CV-selected bound by, for example, 5\% or 10\%. Although it is more natural and convenient to directly use $M_n + F(M_{1n},\ldots,M_{qn})$ as the bound, we have only been able to prove the result with a small $\epsilon$-relaxation. However, if the bound is loose and $F$ is continuous, we can show that $\epsilon$-relaxation is unnecessary. The formal argument can be found after Lemma~\ref{Lcvbound} in the Appendix.

As for methods based on knowledge of an influence function, deriving $\dot{\Psi}$ and a bound for its variation norm requires some expertise, but in some cases this task can be straightforward. The derivation of an influence function is typically based on a fluctuation in the space of distributions, but in many cases, the relation between such fluctuations and the summary of interest is implicit and difficult to handle. In contrast, the derivation of $\dot{\Psi}$ is based on a fluctuation of $\theta_0$, and the summary of interest explicitly depends on $\theta_0$. As a consequence, it can be simpler to derive $\dot{\Psi}$ than to derive an influence function. For example, for the summary $\Psi_\kappa(\theta_0)=P_0 \theta_0^\kappa$ in Example~\ref{example: moments}, we find that $\dot{\Psi}_\kappa=\kappa \theta_0^{\kappa-1}$ by straightforward calculation, whereas the influence function given in Theorem~\ref{THALefficiency} is more difficult to directly derive analytically.

We illustrate the fact that $M_n$ may not be sufficiently large and show that our proposed method resolves this issue via a simulation study in which $\theta_0: x \mapsto \expect_{P_0}[Y|X=x]$ and $\Psi: \theta_0 \mapsto P_0 \theta_0^2$. We compare the performance of the plug-in estimators based on the 10-fold CV-selected bound on variation norm (M.cv), the bound derived from the analytic expression of $\dot{\Psi}$ with and without $\epsilon$-relaxation (M.gcv+ and M.gcv respectively), and a sufficiently large oracle choice satisfying Condition~\ref{AM} (M.oracle). We  According to Lemma~\ref{Lvarnorm}, M.oracle is $3 \| \theta_0 \|_{\vnorm}$ and M.gcv is 3$\times$M.cv. We also investigate the performance of 95\% Wald CIs based on the influence function. For each resulting plug-in estimator, we investigate the following quantities: $n \cdot \text{MSE}$, $\sqrt{n} \cdot |\text{bias}|$ and CI coverage. More details of this simulation are provided in Appendix~\ref{appendix: simulation}. In theory, for an efficient estimator, we should find that $n \cdot \text{MSE}$ tends to a constant (the variance of the influence function $\xi^2 := P_0 \text{IF}^2$), $\sqrt{n} \cdot |\text{bias}|$ tends to $0$, and 95\% Wald CIs have approximately 95\% coverage.

We report performance summaries in Fig~\ref{Fmse_bias_hal} and Table~\ref{TableCI_hal} with this criterion, from which it appears that the plug-in estimators with M.oracle and M.gcv+ achieve efficiency, while the plug-in estimator based on M.cv does not. The desirable performance of M.oracle and M.gcv+ agrees with the available theory, whereas the poor performance of M.cv suggests that cross-validation may not yield a valid choice of variation norm in general. Interestingly, M.gcv performs similarly to M.oracle and M.gcv+. We conjecture that using an $\epsilon$-relaxation is unnecessary in this setting. In Fig~\ref{FM_hal}, we can also see that M.cv tends to $\| \theta_0 \|_{\vnorm}$ and has a high probability of being less than M.oracle. Therefore, this simulation suggests that using a sufficiently large bound --- in particular, a bound larger than the CV-selected bound --- may be necessary and sufficient for the plug-in estimator to achieve efficiency.

\begin{figure}[ht!]
    \centering
    \includegraphics[scale=0.65]{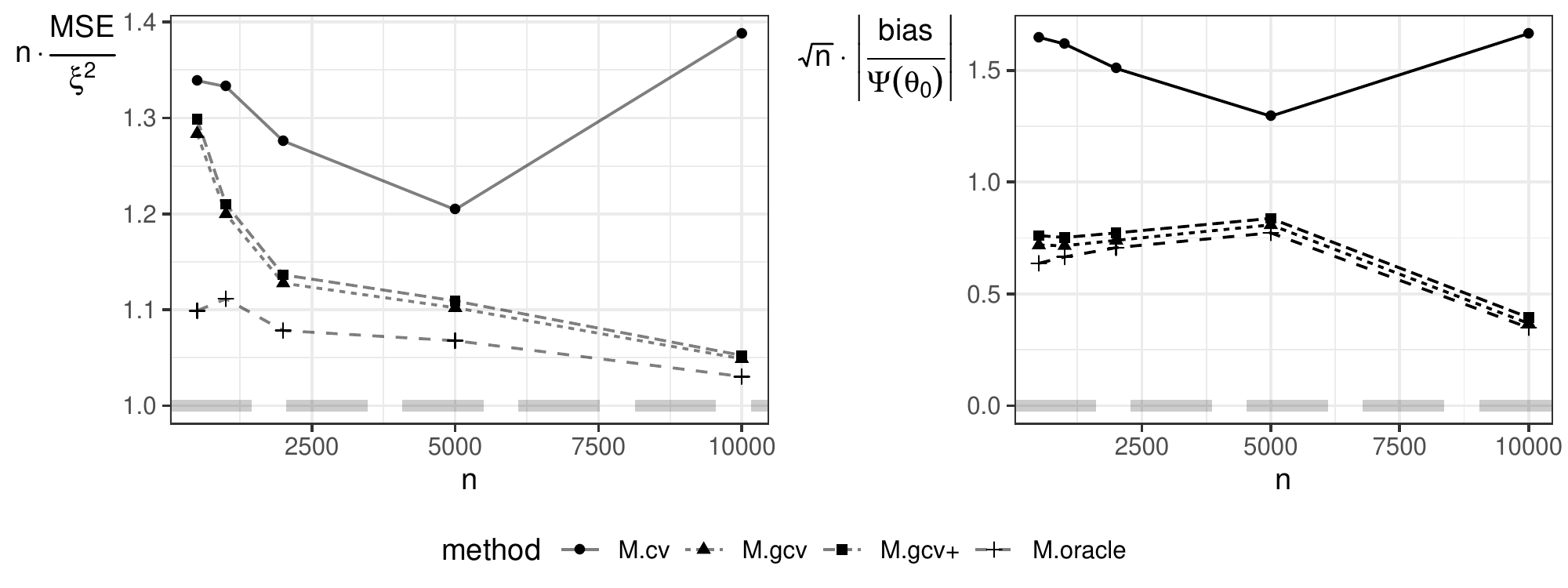}
    \caption{The relative MSE, $n \cdot \text{MSE}/\xi^2$, and the relative absolute bias, $\sqrt{n} \cdot |\text{bias}/\Psi(\theta_0)|$, of the plug-in estimator of $\Psi(\theta_0)=P_0 \theta_0^2$ based on HAL for an oracle choice of the bound on variation norm (M.oracle), the 10-fold CV-selected bound (M.cv), a bound based on M.cv and analytic expression of $\dot{\Psi}$ without and with $\epsilon$-relaxation (M.gcv and M.gcv+ respectively). $\xi^2 := P_0 \text{IF}^2$ is the asymptotic variance that the $n \cdot \text{MSE}$ of an AL estimator should converge to. Note that the $n \cdot \text{MSE}$ for M.oracle, M.gcv and M.gcv+ tends to $\xi^2$ but that for M.cv does not.}
    \label{Fmse_bias_hal}
\end{figure}

\begin{table}[ht!]
    \centering
    \caption{Coverage probability of 95\% Wald CI of the plug-in estimator of $\Psi(\theta_0)=P_0 \theta_0^2$ based on HAL for an oracle choice of the bound on variation norm (M.oracle), the 10-fold CV-selected bound (M.cv), a bound based on M.cv and analytic expression of $\dot{\Psi}$ without and with $\epsilon$-relaxation (M.gcv and M.gcv+ respectively). The CI is constructed based on the influence function. The coverage for M.oracle, M.gcv and M.gcv+ is approximately 95\%, but that for M.cv is not.}
    \label{TableCI_hal}
    \begin{tabular}{rrrrr}
        \hline
        n & M.cv & M.gcv & M.gcv+ & M.oracle \\ 
        \hline
        500 & 0.87 & 0.96 & 0.96 & 0.97 \\ 
        1000 & 0.87 & 0.97 & 0.97 & 0.97 \\ 
        2000 & 0.90 & 0.95 & 0.95 & 0.96 \\ 
        5000 & 0.93 & 0.95 & 0.95 & 0.95 \\ 
        10000 & 0.89 & 0.95 & 0.95 & 0.95 \\ 
        \hline
    \end{tabular}
\end{table}

\begin{figure}[ht!]
    \centering
    \includegraphics[scale=0.65]{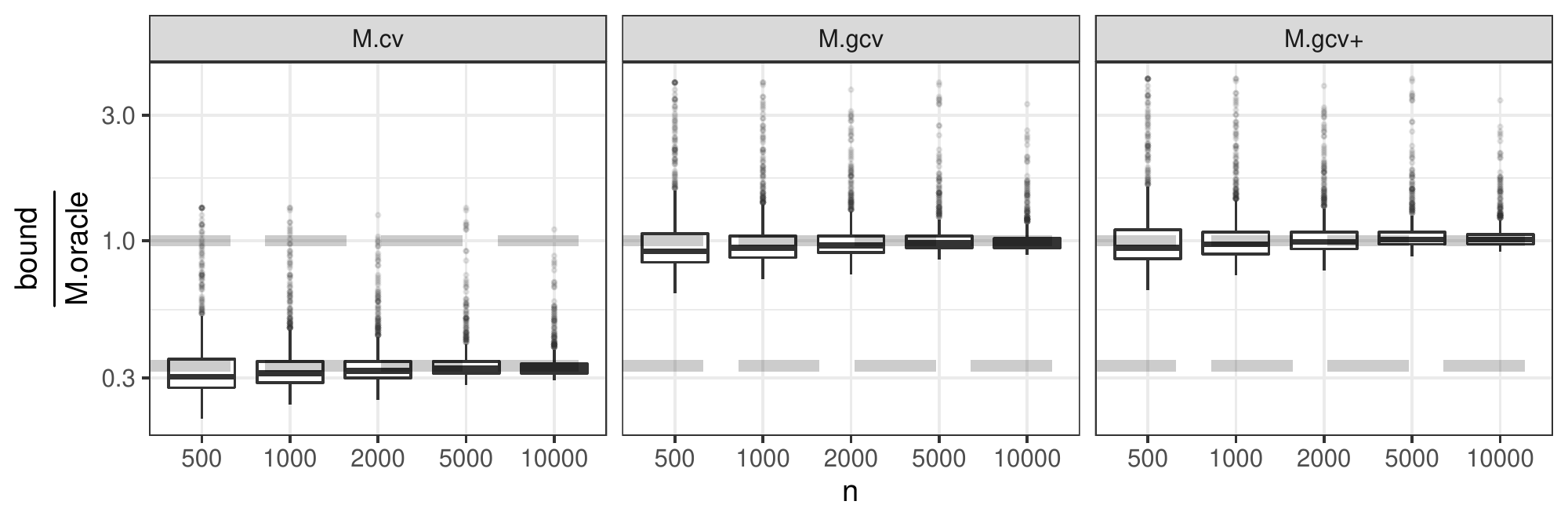}
    \caption{A boxplot of the ratio of bounds based on 10-fold CV and M.oracle. The horizontal gray thick dashed lines are $1$ and $1/3$. The y-axis is scaled based on logarithm for readability. There is a high probability that M.cv is much smaller than M.oracle; M.cv tends to the variation norm of the function being estimated, $\| \theta_0 \|_{\vnorm}$, corresponding to $1/3$ of M.oracle. Enlarging M.cv according to the analytic expression of $\dot{\Psi}$ with $\epsilon$-relaxation results in sufficiently large bounds. The enlargement without $\epsilon$-relaxation appears to have similar performance.}
    \label{FM_hal}
\end{figure}

\section{Data-adaptive series} \label{section: simple case}

\subsection{Proposed method} \label{section: simple case method}

For ease of illustration, we consider the case that $\Psi$ is scalar-valued in this section. As we will describe next, our proposed estimation procedure for function-valued features does not rely on $\Psi$ and hence can be used for a class of summaries.

Suppose that $\Theta$ is a vector space of $\real^q$-valued functions equipped with the $L^2(P_0)$-inner product. Further, suppose that $\dot{\Psi}=\dot{\psi} \circ \theta_0$ for some function $\dot{\psi}: \real^q \rightarrow \real^q$. This holds, for example, when $\Psi: \theta \mapsto P_0 (f \circ \theta)$ for a fixed differentiable function $f$ in Example~\ref{example: moments}. In this case, $\dot{\Psi}=f' \circ \theta_0$ and hence $\dot{\psi}=f'$. Particularly useful examples include Examples~\ref{example: moments} and \ref{example: treatment effect heterogeneity}. For now we assume that the marginal distribution of $X$ is known so that we only need to estimate $\theta_0$ for this summary. We will address the more difficult case in which the marginal distribution of $X$ is unknown in Section~\ref{section: generalized simple case}.

Let $\theta_n^0$ be a given initial flexible ML fit of $\theta_0$ and consider the data-adaptive sieve-like subspaces based on $\theta_n^0$, $\Theta_n := \Theta_{n,\theta_n^0} := \text{Span}\{ \phi_1, \phi_2, \ldots, \phi_K \} \circ \theta_n^0$, where $\phi_1,\phi_2,\ldots$ are $\real^q$-valued basis functions in a series defined on $\real^q$ and $K=K(n)$ is a deterministic number of terms in the series --- we will consider selecting $K$ via CV in Section \ref{section: CV}. Let $\theta_n^*=\theta_n^*(\theta_n^0) \in \argmin_{\theta \in \Theta_n} n^{-1} \sum_{i=1}^{n} \ell(\theta)(V_i)$ denote the series estimator within this data-adaptive sieve-like subspace that minimizes the empirical risk. We propose to use $\Psi(\theta_n^*)$ to estimate $\Psi(\theta_0)$.

\subsection{Results for a deterministic number of terms} \label{section: simple case theory}

Following \cite{Chen2007,Shen1997}, our proofs of the validity of our data-adaptive series approach make heavy use of projection operators. We use $\pi_n := \pi_{n,\theta_n^0}$ to denote the projection operator for functions in $\Theta$ onto $\Theta_n=\Theta_{n,\theta_n^0}$ with respect to $\langle \cdot, \cdot \rangle$. For any function $\theta \in \Theta$, let $\Pi_{n,\theta}$ denote the operator that takes as input a function $g: \real^q \rightarrow \real^q$ for which $g \circ \theta \in L^2(P_0)$ and outputs a function $\Pi_{n,\theta}(g): \real^q \rightarrow \real^q$ such that $\Pi_{n,\theta}(g) \circ \theta=\pi_{n,\theta}(g \circ \theta)$. In other words, letting $\beta_j$ be the quantity that depends on $g$ and $\theta$ such that $\pi_{n,\theta}(g \circ \theta)=( \sum_{j=1}^{K} \beta_j \phi_j ) \circ \theta$, we define $\Pi_{n,\theta} (g) := \sum_{j=1}^K \beta_j \phi_j$. The operator $\Pi_{n,\theta}$ may also be interpreted as follows: letting $P_\theta$ be the distribution of $\theta(X)$ with $V=(X,Z) \sim P_0$, then $\Pi_{n,\theta}$ is the projection operator of functions $\real^q \rightarrow \real^q$ with respect to the $L^2(P_\theta)$-inner product. We use $\id$ to denote the identity function in $\real^q$.

We now present additional conditions we will require to ensure that $\Psi(\theta_n^*)$ is an efficient estimator of $\Psi(\theta_0)$.

\begin{conditionC}[Sufficient convergence rate of initial ML fit]
    \label{Ainit}
    $\| \theta_n^0 - \theta_0 \|=o_p(n^{-1/4})$.
\end{conditionC}

\begin{conditionC}[Sufficiently small estimation error]
    \label{Aestimation}
    $\| \theta_n^* - \pi_n(\theta_0) \|=o_p(n^{-1/4})$.
\end{conditionC}

\begin{conditionC}[Sufficiently small approximation error to $\id$ for $\Theta_{n,\theta_0}$]
    \label{Aapproxidentity}
    $\| \theta_0 - \Pi_{n,\theta_0}(\id) \circ \theta_0 \|=o(n^{-1/4})$.
\end{conditionC}

\begin{conditionC}[Sufficiently small approximation error to $\dot{\psi}$ for $\Theta_{n,\theta_0}$ and convergence rate of $\theta_n^*$]
    \label{Aapproxw}
    $\| [\dot{\psi} - \Pi_{n,\theta_0}(\dot{\psi})] \circ \theta_0 \| \cdot \| \theta_n^* - \theta_0 \|=o_p(n^{-1/2})$.
\end{conditionC}

Appendix~\ref{section: series additional regularity conditions} contains further technical conditions and Appendix~\ref{section: condition discussion} discusses their plausibility. As discussed in Appendix~\ref{section: condition discussion}, Conditions~\ref{Aestimation}--\ref{Aapproxw} typically imply restrictions on the growth rate of $K$: if $K$ grows too fast with $n$, then Condition~\ref{Aestimation} may be violated; if $K$ instead grows too slow, then Conditions~\ref{Aapproxidentity} and \ref{Aapproxw} may be violated. For the generalized moment $\Psi: \theta \mapsto P_0(f \circ \theta)$ with a fixed known function $f$ in Example~\ref{example: moments}, Condition~\ref{Aapproxw} typically also imposes a smoothness condition $f$ so that $f'$ can be approximated by the series well. Our conditions are closely related to the conditions in Theorem~1 of \cite{Shen1997}. Conditions~\ref{Ainit}--\ref{Aapproxidentity} and \ref{ALipschitzidentity} serve as sufficient conditions for the condition on the smoothness of $\Psi$ and the convergence rate of $\theta^*_n$ in Theorem~1 of \cite{Shen1997}. Together with Conditions~\ref{Aapproxw} and \ref{ALipschitzw}, we can derive Lemma~\ref{Lapproxw}, which is similar to the first part of Condition~C of \cite{Shen1997}. The empirical process condition \ref{Aempiricalprocess} is sufficient for Conditions~A, D and the second part of C in Theorem~1 in \cite{Shen1997}.

We now present a theorem ensuring the asymptotic linearity and efficiency of the plug-in estimator based on $\theta_n^*$.

\begin{theorem}[Efficiency of plug-in estimator]
    \label{Tefficiency}
    Under Conditions~\ref{Adloss}--\ref{APsiremainder} and \ref{Ainit}--\ref{Afinitevar}, $\Psi(\theta_n^*)$ is an asymptotically linear estimator of $\Psi(\theta_0)$ with the influence function being $v \mapsto \alpha_{0,\ell}^{-1} \{-\ell_{0}'[\dot{\psi} \circ \theta_0](v) + \expect_{P_0} [ \ell_{0}'[\dot{\psi} \circ \theta_0](V) ]\}$, that is,
    $$\Psi(\theta_n^*) = \Psi(\theta_0) + \frac{1}{n} \sum_{i=1}^n \alpha_{0,\ell}^{-1} \left\{ -\ell_{0}'[\dot{\psi} \circ \theta_0](V_i) + \expect_{P_0} \left[ \ell_{0}'[\dot{\psi} \circ \theta_0](V) \right] \right\}+o_p(n^{-1/2}).$$
    As a consequence, $\sqrt{n} [\Psi(\theta_n^*)-\Psi(\theta_0)] \overset{d}{\rightarrow} N(0, \xi^2)$ with $\xi^2 := \Var_{P_0}(\ell_{0}'[\dot{\psi} \circ \theta_0](V))/\alpha_{0,\ell}^2$. In addition, under Conditions~\ref{Acloseminimizer} and \ref{Aempiricalprocesslike} in Appendix~\ref{section: regularity additional conditions}, $\Psi(\hat{\theta}_n)$ is efficient under a nonparametric model.
\end{theorem}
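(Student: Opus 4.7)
The plan is to adapt the sieve inference framework of \cite{Shen1997} to our data-adaptive sieves $\Theta_n = \Theta_{n,\theta_n^0}$, with the goal of establishing the influence-function expansion
\begin{equation*}
    \Psi(\theta_n^*) - \Psi(\theta_0) = -\alpha_{0,\ell}^{-1}(P_n - P_0)\,\ell_0'[\dot{\psi}\circ\theta_0] + o_p(n^{-1/2}).
\end{equation*}
Once this is in hand, asymptotic normality follows from the CLT (using $P_0\ell_0'[\dot{\psi}\circ\theta_0]=0$, which holds because $\theta_0$ minimizes $P_0\ell$ over $\Theta$ and Condition~\ref{Adloss} ensures the centered directional derivative exists), and efficiency reduces to identifying this influence function with the canonical gradient under the nonparametric model via Conditions~\ref{Acloseminimizer} and \ref{Aempiricalprocesslike}.

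First I would establish the rate $\|\theta_n^* - \theta_0\| = o_p(n^{-1/4})$ by decomposing it as $\|\theta_n^* - \pi_n(\theta_0)\| + \|\pi_n(\theta_0) - \theta_0\|$, using Condition~\ref{Aestimation} for the first piece and Condition~\ref{Aapproxidentity} (together with Condition~\ref{Ainit} and Lipschitz-type control on the dependence of $\Pi_{n,\theta}$ on $\theta$ from the appendix regularity conditions) for the second. Then Conditions~\ref{AdPsi} and \ref{APsiremainder} together with the Riesz representation $\dot{\Psi}=\dot{\psi}\circ\theta_0$ yield the second-order Taylor expansion
\begin{equation*}
    \Psi(\theta_n^*) - \Psi(\theta_0) = \langle \theta_n^* - \theta_0,\, \dot{\psi}\circ\theta_0\rangle + O_p(\|\theta_n^* - \theta_0\|^2) = \langle \theta_n^* - \theta_0,\, \dot{\psi}\circ\theta_0\rangle + o_p(n^{-1/2}).
\end{equation*}

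Next I would introduce the sieve-based surrogate $v_n := \Pi_{n,\theta_n^0}(\dot{\psi})\circ\theta_n^0 \in \Theta_n$ and split $\langle \theta_n^* - \theta_0, \dot{\psi}\circ\theta_0\rangle = \langle \theta_n^* - \theta_0, v_n\rangle + \langle \theta_n^* - \theta_0, \dot{\psi}\circ\theta_0 - v_n\rangle$. Cauchy--Schwarz combined with Condition~\ref{Aapproxw} controls the cross term contributed by projecting at $\theta_0$, while Condition~\ref{Ainit} and continuity of $\Pi_{n,\theta}$ in $\theta$ handle the base-point discrepancy between $\theta_0$ and $\theta_n^0$, rendering that cross term $o_p(n^{-1/2})$. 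For the main term, I would exploit the first-order optimality of $\theta_n^*$ over the linear space $\Theta_n$: since $\theta_n^* + tv_n \in \Theta_n$ for every $t$, differentiating $t\mapsto P_n\ell(\theta_n^* + tv_n)$ at $t=0$ under Condition~\ref{Adloss} yields a Gateaux-derivative identity which, after subtracting its population counterpart and invoking the quadratic expansion of $P_0\ell$ from Condition~\ref{Aquadraticloss} together with $P_0\ell_0'[v_n]=0$, takes the form
\begin{equation*}
    \alpha_{0,\ell}\,\langle \theta_n^* - \theta_0, v_n\rangle = -(P_n - P_0)\,\ell_0'[v_n] + o_p(n^{-1/2}).
\end{equation*}

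The main obstacle I anticipate is empirical process control: I must justify the $o_p(n^{-1/2})$ remainder in the previous display, which involves applying the empirical measure to the difference between the Gateaux derivative of $\ell$ evaluated at $\theta_n^*$ versus at $\theta_0$ in direction $v_n$, and I must also replace $v_n$ by $\dot{\psi}\circ\theta_0$ inside $(P_n - P_0)\ell_0'[v_n]$. Both steps are delicate because $v_n$ is data-dependent through the initial ML fit $\theta_n^0$, so one cannot straightforwardly invoke a Donsker result. I would appeal to stochastic equicontinuity over a Donsker class containing these integrands -- which is the role played by the empirical process regularity assumption in Appendix~\ref{section: series additional regularity conditions} -- combined with the $L^2(P_0)$-convergence $v_n \to \dot{\psi}\circ\theta_0$ implied by Conditions~\ref{Ainit} and \ref{Aapproxw}. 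Chaining these bounds produces the displayed influence-function expansion, and substituting into the Taylor expansion of $\Psi$ completes the proof.
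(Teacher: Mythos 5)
Your overall architecture matches the paper's: a rate lemma giving $\|\theta_n^*-\theta_0\|=o_p(n^{-1/4})$ (the paper's Lemma~\ref{Lrate}), the reduction of $\Psi(\theta_n^*)-\Psi(\theta_0)$ to $\langle\theta_n^*-\theta_0,\dot\psi\circ\theta_0\rangle+o_p(n^{-1/2})$ via Conditions~\ref{AdPsi}--\ref{APsiremainder}, the Cauchy--Schwarz control of the approximation error of $\dot\psi$ with the base-point correction from $\theta_0$ to $\theta_n^0$ (the paper's Lemma~\ref{Lapproxw}), and the use of the optimality of $\theta_n^*$ within $\Theta_n$ plus stochastic equicontinuity. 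Where you genuinely diverge is the mechanism for the key identity: you propose an exact first-order score equation, $\frac{d}{dt}P_n\ell(\theta_n^*+tv_n)\big|_{t=0}=0$ over the linear space $\Theta_n$, whereas the paper never differentiates. It instead evaluates the empirical risk at the perturbed points $\pi_n((1-\epsilon_n)\theta_n^*+\epsilon_n(\theta_0\pm\dot\Psi))$ for a deterministic $\epsilon_n=o(n^{-1/2})$, uses nonnegativity of the resulting risk differences to obtain two one-sided inequalities, and combines them.

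This difference is not cosmetic, and it is where your proposal has a gap. First, Condition~\ref{Adloss} only supplies the G\^{a}teaux derivative of $\ell$ at $\theta_0$; the derivative of $t\mapsto P_n\ell(\theta_n^*+tv_n)$ at $t=0$ involves the derivative at $\theta_n^*$, which decomposes as $P_n\ell_0'[v_n]$ plus the derivative of $t\mapsto P_n\{r[\theta_n^*+tv_n-\theta_0]-r[\theta_n^*-\theta_0]\}$ at $t=0$; the existence and control of that second piece is exactly what the difference-quotient formulation in Condition~\ref{Aempiricalprocess} is designed to supply, and that condition is phrased for finite increments along the paper's specific paths, not for a $t\to0$ limit along yours. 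Second, and more seriously, your step ``invoking the quadratic expansion of $P_0\ell$ from Condition~\ref{Aquadraticloss}'' to convert the population derivative at $\theta_n^*$ into $\alpha_{0,\ell}\langle\theta_n^*-\theta_0,v_n\rangle$ requires differentiating an expansion that controls only risk \emph{values} up to $o(\|\theta-\theta_0\|^2)$; a bound on a function does not bound its derivative, so Condition~\ref{Aquadraticloss} alone does not justify this. The paper's finite-difference device avoids both problems: it only ever compares $P_0\ell$ at two points within $o_p(n^{-1/4})$ of $\theta_0$, expands the difference of squared norms algebraically to extract the $\epsilon_n\alpha_{0,\ell}\langle\theta_n^*-\theta_0,\dot\Psi\rangle$ term, and controls the remainder difference-quotients empirically via Condition~\ref{Aempiricalprocess}. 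To make your score-equation route rigorous under the stated conditions you would effectively have to reintroduce this finite-increment machinery, at which point you recover the paper's proof.
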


\begin{remark} \label{remark: targeted series}
    Consider the general case in which it may not be true that $\dot{\Psi}$ can be represented as $\dot{\psi} \circ \theta_0$ for some $\dot{\psi}: \real^q \rightarrow \real^q$. If the analytic expression of $\dot{\Psi}$ can be derived and $\dot{\Psi}$ can be estimated by $\dot{\Psi}_n$ such that $\| \dot{\Psi}_n - \dot{\Psi} \| \cdot \| \theta_n^0 - \theta_0 \|=o_p(n^{-1/2})$, then our data-adaptive series can take a special form that is targeted towards $\Psi$. Specifically, letting $\vartheta_0:=(\theta_0,\dot{\Psi})^\top$ and $\varPsi(\vartheta_0):=\Psi(\theta_0)$, it is straightforward to show that the gradient of $\varPsi$ is $\dot{\varPsi}=(\dot{\Psi},0)^\top=(e_2,\bm{0})^\top \vartheta_0$ with $\bm{0}=(0,0)^\top$ and $e_2=(0,1)^\top$, which is a function composed with $\vartheta_0$. We can set $\vartheta_n^0=(\theta_n^0,\dot{\Psi}_n)^\top$ and $\Theta_n=\text{Span}\{\theta_n^0,\dot{\Psi}_n\}$ in our data-adaptive series. This approach does not have a growing number of terms in $\Theta_n$ and is not similar to sieve estimation, but can be treated as a special case of data-adaptive series. It can be shown that Conditions~\ref{Ainit}--\ref{Aapproxw} are still satisfied for $\vartheta$ and $\varPsi$ with this choice of $\Theta_n$, and hence our data-adaptive series estimator leads to an efficient plug-in estimator. We remark that the introduction of $\vartheta$ and $\varPsi$ is a purely theoretical device, and this targeted approach to estimation is quite similar to that used in the context of TMLE \citep{VanderLaan2006,VanderLaan2018}.
\end{remark}

\subsection{Summaries involving the marginal distribution of $X$} \label{section: generalized simple case}

We now generalize the setting considered thus far by allowing the parameter to depend both on $\theta_0$ and on $P_0$, i.e., estimating $\Psi(\theta_0, P_0)$. The example given at the beginning of Section~\ref{section: simple case method}, namely that of estimating $\Psi(\theta_0)=P_0 (f \circ \theta_0)$, is a special case of this more general setting. In what follows, we will make use of the following conditions:

\begin{conditionD}[Conditions with $P_0$ fixed] \label{Apackage}
    When we regard $\Psi(\theta_0, P_0)$ as the mapping $\theta \mapsto \Psi(\theta,P_0)$ evaluated at $\theta_0$, Conditions~\ref{Adloss}--\ref{APsiremainder}, \ref{Ainit}--\ref{Aapproxw} and \ref{ALipschitzidentity}--\ref{Afinitevar} are satisfied for estimating $\Psi(\theta_0, P_0)$.
\end{conditionD}

\begin{conditionD}[Hadamard differentiability with $\theta_0$ fixed] \label{AHadamard}
    The mapping $P \mapsto \Psi(\theta_0,P)$ is Hadamard differentiable at $P_0$.
\end{conditionD}

By the functional delta method, it follows that $\Psi(\theta_0,P_n)=\Psi(\theta_0,P_0) + P_n \text{IF}_0 + o_p(n^{-1/2})$ for a function $\text{IF}_0$ satisfying $P_0 \text{IF}_0=0$ and $P_0 \text{IF}_0^2 < \infty$.

\begin{conditionD}[Negligible second-order difference] \label{Asecondorderdifference}
    $$[\Psi(\theta_n^*,P_n) - \Psi(\theta_0,P_n)] - [\Psi(\theta_n^*,P_0) - \Psi(\theta_0,P_0)] = o_p(n^{-1/2}).$$
\end{conditionD}

This condition usually holds, for example, when $\Psi(\theta_0, P_0)=P_0 (f \circ \theta_0)$, as in this case the left-hand side is equal to $(P_n-P_0) (f \circ \theta_n^* - f \circ \theta_0)$, which is $o_p(n^{-1/2})$ under empirical process conditions.

\begin{theorem}[Asymptotic linearity of plug-in estimator]
    \label{TefficiencyP}
    Under Conditions~\ref{Apackage}--\ref{Asecondorderdifference}, $\Psi(\theta_n^*,P_n)$ is an asymptotically linear estimator of $\Psi(\theta_0,P_0)$ with influence function
    $$v \mapsto \alpha_{0,\ell}^{-1} \left\{-\ell_{0}'[\dot{\psi} \circ \theta_0](v) + \expect_{P_0}[\ell_{0}'[\dot{\psi} \circ \theta_0](V)] \right\} + \mathrm{IF}_0(V),$$
    that is,
    \begin{align*}
    \Psi(\theta_n^*,P_n) &= \Psi(\theta_0,P_0) + \frac{1}{n} \sum_{i=1}^n \left\{ -\alpha_{0,\ell}^{-1} \ell_{0}'[\dot{\psi} \circ \theta_0](V_i) + \alpha_{0,\ell}^{-1} \expect_{P_0} \left[ \ell_{0}'[\dot{\psi} \circ \theta_0](V) \right] + \mathrm{IF}(V_i) \right\} \\
    &\quad+o_p(n^{-1/2}).
    \end{align*}
    As a consequence, $\sqrt{n} [\Psi(\theta_n^*,P_n)-\Psi(\theta_0,P_0)] \overset{d}{\rightarrow} N(0, \xi^2)$ with $\xi^2 := \Var_{P_0}(\alpha_{0,\ell}^{-1} \ell_{0}'[\dot{\psi} \circ \theta_0](V) + \mathrm{IF}(V))$.
\end{theorem}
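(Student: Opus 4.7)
The plan is to split the error $\Psi(\theta_n^*,P_n)-\Psi(\theta_0,P_0)$ into three pieces by inserting intermediate points, and then handle each piece using one of the three Conditions \ref{Apackage}--\ref{Asecondorderdifference}. Specifically, I would write
\begin{align*}
\Psi(\theta_n^*,P_n)-\Psi(\theta_0,P_0) &= \bigl[\Psi(\theta_n^*,P_n)-\Psi(\theta_0,P_n)\bigr] + \bigl[\Psi(\theta_0,P_n)-\Psi(\theta_0,P_0)\bigr],
\end{align*}
and then use Condition~\ref{Asecondorderdifference} to rewrite the first bracket as $\Psi(\theta_n^*,P_0)-\Psi(\theta_0,P_0)+o_p(n^{-1/2})$. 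This turns the decomposition into a "$\theta$-only" term (at $P_0$ fixed), a "$P$-only" term (at $\theta_0$ fixed), plus a negligible cross-term.

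Next, I would invoke Condition~\ref{Apackage} together with Theorem~\ref{Tefficiency} applied to the mapping $\theta\mapsto\Psi(\theta,P_0)$ to expand the $\theta$-only term as
\begin{align*}
\Psi(\theta_n^*,P_0)-\Psi(\theta_0,P_0) = \frac{1}{n}\sum_{i=1}^n \alpha_{0,\ell}^{-1}\bigl\{-\ell_0'[\dot\psi\circ\theta_0](V_i) + \expect_{P_0}[\ell_0'[\dot\psi\circ\theta_0](V)]\bigr\} + o_p(n^{-1/2}).
\end{align*}
For the $P$-only term, Condition~\ref{AHadamard} and the functional delta method (applied to $P\mapsto\Psi(\theta_0,P)$ evaluated at the empirical process $\sqrt{n}(P_n-P_0)$) yield
\begin{align*}
\Psi(\theta_0,P_n)-\Psi(\theta_0,P_0) = \frac{1}{n}\sum_{i=1}^n \mathrm{IF}_0(V_i) + o_p(n^{-1/2}),
\end{align*}
where $\mathrm{IF}_0$ is the function guaranteed by Hadamard differentiability with $P_0\mathrm{IF}_0=0$ and $P_0\mathrm{IF}_0^2<\infty$, as noted in the paragraph following Condition~\ref{AHadamard}.

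Summing the three pieces gives the stated asymptotic linear expansion with the influence function $v\mapsto \alpha_{0,\ell}^{-1}\{-\ell_0'[\dot\psi\circ\theta_0](v) + \expect_{P_0}[\ell_0'[\dot\psi\circ\theta_0](V)]\} + \mathrm{IF}_0(v)$; since this is a sum of two zero-mean square-integrable summands of i.i.d.\ data, the central limit theorem yields the weak convergence to $N(0,\xi^2)$ with the indicated variance. The main obstacle is essentially bookkeeping: everything non-trivial has been pushed into the three hypotheses, so the "work" is to verify that the delta-method application is clean (specifically, that Hadamard differentiability at $P_0$ tangentially to the right set, together with weak convergence of $\sqrt{n}(P_n-P_0)$ in an appropriate space, gives the stated linear expansion) and that the three $o_p(n^{-1/2})$ terms can legitimately be combined. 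Since the functional delta method is standard and Condition~\ref{Asecondorderdifference} is precisely what is needed to decouple the $\theta$ and $P$ contributions, no genuinely new technical step is required beyond what Theorem~\ref{Tefficiency} already supplies.
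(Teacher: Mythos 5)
Your proposal is correct and follows essentially the same route as the paper: the paper verifies the theorem by decomposing $\Psi(\theta_n^*,P_n)-\Psi(\theta_0,P_0)$ into the $\theta$-only term $\Psi(\theta_n^*,P_0)-\Psi(\theta_0,P_0)$ (handled via Condition~\ref{Apackage} and Theorem~\ref{Tefficiency}), the $P$-only term $\Psi(\theta_0,P_n)-\Psi(\theta_0,P_0)$ (handled via Condition~\ref{AHadamard} and the functional delta method), and the second-order difference that Condition~\ref{Asecondorderdifference} makes $o_p(n^{-1/2})$, which is exactly your decomposition written in a slightly different order. No gap to report.
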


This result is easy to verify by decomposing $\Psi(\theta_n^*,P_n) - \Psi(\theta_0,P_0)$ as
\begin{align*}
& [\Psi(\theta_n^*,P_0) - \Psi(\theta_0,P_0)] + [\Psi(\theta_0,P_n) - \Psi(\theta_0,P_0)] \\
&\quad\quad\quad\quad+ \{[\Psi(\theta_n^*,P_n) - \Psi(\theta_0,P_n)] - [\Psi(\theta_n^*,P_0) - \Psi(\theta_0,P_0)]\}
\end{align*}
Moreover, under conditions similar to the conditions \ref{Acloseminimizer} and \ref{Aempiricalprocesslike} given in Appendix~\ref{section: regularity additional conditions}, we can show that $\Psi(\theta_n^*,P_n)$ is efficient under a nonparametric model.

\begin{remark}
    Conditions~\ref{AHadamard} and \ref{Asecondorderdifference} can be relaxed. Specifically, if $\hat{P}_n$ is an estimator of $P_0$ that satisfies that $\Psi(\theta_0,\hat{P}_n)=\Psi(\theta_0,P_0) + P_n \text{IF}_0 + o_p(n^{-1/2})$ for an influence function $\text{IF}_0$ and Condition~\ref{Asecondorderdifference} holds with $P_n$ replaced by $\hat{P}_n$, then $\Psi(\theta_n^*,\hat{P}_n)$ is an asymptotically linear estimator of $\Psi(\theta_0,P_0)$.
\end{remark}

\subsection{CV selection of the number of terms in data-adaptive series} \label{section: CV}

In the preceding subsections, we established the efficiency of the plug-in estimator based on suitable rates of growth for $K$ relative to the sample size $n$. In this subsection, we show that, under some conditions, such a $K$ can be selected by $k$-fold CV: after obtaining $\theta_n^0$, for each $K$ in a range of candidates, we can calculate the cross-validated risk from $k$ folds and choose the value of $K$ with the smallest CV risk. We denote the number of terms in the series that CV selects by $K^*$. In this section, we use $K$ in the subscripts for notation related to data-adaptive sieves-like spaces and projections; this represents a slight abuse of notation because, in Sections~\ref{section: simple case method} and \ref{section: simple case theory}, these subscripts were instead used for sample size $n$. That is, we use $\Theta_{K,\theta}$ to denote $\text{Span}\{ \phi_1, \phi_2, \ldots, \phi_K \} \circ \theta$, $\pi_{K,\theta}$ to denote the projection onto $\Theta_{K,\theta}$, $\Pi_{K,\theta}$ to denote the operator such that $\Pi_{K,\theta} (g) \circ \theta=\pi_{K,\theta}(g \circ \theta)$ for all $g: \real^q \rightarrow \real^q$ with $g \circ \theta \in L^2(P_0)$, and $\theta_n^\sharp := \theta_{K^*}^*(\theta_n^0)$ to be the data-adaptive series estimator based on $\theta_n^0 $ and $K^*$.

\begin{conditionC}[Bounded approximation error of $\dot{\psi}$ relative to $\id$]
    \label{Abadseries}
    There exists a constant $C>0$ such that, with probability tending to one, $\| \dot{\psi} \circ \theta_n^0 - \Pi_{K,\theta_n^0}(\dot{\psi}) \circ \theta_n^0 \| \leq C \| \theta_n^0 - \Pi_{K,\theta_n^0}(\id) \circ \theta_n^0 \|$ for all $K$.
\end{conditionC}

This condition is equivalent to
$$\| \dot{\psi} - \Pi_{K,\theta_n^0}(\dot{\psi}) \|_{L^2(P_{\theta_n^0})} \leq C \| \id - \Pi_{K,\theta_n^0}(\id) \|_{L^2(P_{\theta_n^0})}$$
for all $K$ with probability tending to one, which may be interpreted in terms of two simultaneous requirements. The first requirement is that the identity function $\id$ is not exactly contained in the span of $\phi_1,\ldots,\phi_K$ for any $K$, since otherwise, the right-hand side would be zero for all sufficiently large $K$. Therefore, common series such as polynomial and spline series are not permitted for general summaries. In contrast, other series such as trigonometric series and wavelets satisfy this requirement. The second requirement is that the approximation error of the chosen series for the identity function $\id$ is not much larger than $\dot{\psi}$. If a trigonometric or wavelet series is used, then this condition imposes a strong smoothness condition on derivatives of $\dot{\psi}$. Nonetheless, this may not be stringent in some interesting examples. For example, if $\Psi(\theta)=P_0 (f \circ \theta)$ for a fixed function $f$ in Example~\ref{example: moments}, then $\dot{\psi}$ equals $f'$ and hence can be expected to satisfy this strong smoothness condition provided that $f$ is infinitely differentiable with bounded derivatives. The estimands encountered in many applications involve $f$ satisfying this smoothness condition.

The following theorem justifies the use of $k$-fold CV to select $K$ under appropriate conditions.

\begin{theorem}[Efficiency of CV-based plug-in estimator]
    \label{Tcv}
    Assume that Conditions~\ref{Adloss}--\ref{APsiremainder}, \ref{Ainit}--\ref{Aapproxidentity}, \ref{Abadseries}, \ref{Aempiricalprocess} and \ref{Afinitevar} hold for a deterministic $K=K(n)$. Suppose part~\ref{ALipschitzw first half} of Condition~\ref{ALipschitzw} holds, then, with $\theta_n^\sharp:=\theta_{K^*}(\theta_n^0)$, $\Psi(\theta_n^\sharp)$ is an asymptotically linear estimator of $\Psi(\theta_0)$ with influence function $v \mapsto \alpha_{0,\ell}^{-1} \{-\ell_{0}'[\dot{\psi} \circ \theta_0](v) + \expect_{P_0} [ \ell_{0}'[\dot{\psi} \circ \theta_0](V) ]\}$, that is,
    $$\Psi(\theta_n^\sharp) = \Psi(\theta_0) + \frac{1}{n} \sum_{i=1}^n \alpha_{0,\ell}^{-1} \left\{ -\ell_{0}'[\dot{\psi} \circ \theta_0](V_i) + \expect_{P_0} \left[ \ell_{0}'[\dot{\psi} \circ \theta_0](V) \right] \right\}+o_p(n^{-1/2}).$$
    As a consequence, $\sqrt{n} [\Psi(\theta_n^\sharp)-\Psi(\theta_0)] \overset{d}{\rightarrow} N(0, \xi^2)$ with $\xi^2 := \Var_{P_0}(\ell_{0}'[\dot{\psi} \circ \theta_0](V))/\alpha_{0,\ell}^2$. In addition, under Conditions~\ref{Acloseminimizer} and \ref{Aempiricalprocesslike} in Appendix~\ref{section: regularity additional conditions}, $\Psi(\hat{\theta}_n)$ is efficient under a nonparametric model.
\end{theorem}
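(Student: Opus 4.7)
The plan is to reduce the CV-based statement to Theorem~\ref{Tefficiency} by showing that the data-driven selector $K^*$ behaves, with probability tending to one, like a deterministic sequence $K(n)$ satisfying the hypotheses of that theorem. The key leverage is to combine an oracle-type guarantee for $k$-fold CV (as in \cite{Vanderlaan2003cv}) with Condition~\ref{Abadseries}: CV controls the $L^2$-prediction risk for $\theta_0$, which is essentially the sieve approximation error for the identity $\id$ in $\Theta_{K,\theta_n^0}$, and Condition~\ref{Abadseries} transports this control to the approximation error for $\dot{\psi}$ required in the role of Condition~\ref{Aapproxw}.

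First I would apply the CV oracle inequality for the loss $\ell$ along the family $\{\theta_K^*(\theta_n^0) : K\}$. Under the local quadratic behavior of $\ell$ (Condition~\ref{Aquadraticloss}) and the uniform empirical process control from \ref{Aempiricalprocess} and \ref{Afinitevar}, this delivers
\[
\|\theta_n^\sharp - \theta_0\|^2 \le (1+o_p(1)) \inf_K \|\theta_K^*(\theta_n^0) - \theta_0\|^2 + O_p(n^{-1/2}).
\]
Because the theorem's hypotheses supply a deterministic $K=K(n)$ for which Conditions~\ref{Aestimation} and \ref{Aapproxidentity} yield $\|\theta_K^*(\theta_n^0) - \theta_0\| = o_p(n^{-1/4})$, the infimum is $o_p(n^{-1/2})$, so $\|\theta_n^\sharp - \theta_0\| = o_p(n^{-1/4})$. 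Decomposing via the projection $\pi_{K^*,\theta_n^0}(\theta_0)$ and absorbing the estimation contribution by a uniform-in-$K$ empirical process argument based on \ref{Aempiricalprocess}, this also yields the approximation bound $\|\theta_0 - \pi_{K^*,\theta_n^0}(\theta_0)\| = o_p(n^{-1/4})$, which up to $\theta_n^0$-vs-$\theta_0$ change-of-measure errors equals $\|\id - \Pi_{K^*,\theta_n^0}(\id)\|_{L^2(P_{\theta_n^0})} = o_p(n^{-1/4})$.

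Next I would invoke Condition~\ref{Abadseries} at the random index $K^*$ to obtain
\[
\|\dot{\psi}\circ\theta_n^0 - \Pi_{K^*,\theta_n^0}(\dot{\psi})\circ\theta_n^0\| \le C\,\|\id\circ\theta_n^0 - \Pi_{K^*,\theta_n^0}(\id)\circ\theta_n^0\| = o_p(n^{-1/4}).
\]
A change of measure from $P_{\theta_n^0}$ to $P_{\theta_0}$, justified by part~\ref{ALipschitzw first half} of Condition~\ref{ALipschitzw} together with $\|\theta_n^0-\theta_0\|=o_p(1)$ from \ref{Ainit}, transfers this bound to $\|[\dot{\psi}-\Pi_{K^*,\theta_0}(\dot{\psi})]\circ\theta_0\| = o_p(n^{-1/4})$. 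Multiplying by $\|\theta_n^\sharp - \theta_0\| = o_p(n^{-1/4})$ yields the $o_p(n^{-1/2})$ product that plays the role of Condition~\ref{Aapproxw} at $K^*$. All the rate ingredients of Theorem~\ref{Tefficiency} are now verified in probability at $K^*$, and inspection of its proof shows that only these rates enter the linearization, so the same argument delivers the stated influence function and CLT; efficiency under \ref{Acloseminimizer}--\ref{Aempiricalprocesslike} then follows exactly as there.

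The hard part is controlling everything uniformly in $K$: the CV oracle inequality gives only an aggregate $L^2$ rate, but the argument needs the analogues of Conditions~\ref{Aestimation}, \ref{Aapproxidentity} and \ref{Aapproxw} invoked at the \emph{random} index $K^*$. Establishing the uniform-in-$K$ bound $\|\theta_{K^*}^*(\theta_n^0) - \pi_{K^*,\theta_n^0}(\theta_0)\| = o_p(n^{-1/4})$ via \ref{Aempiricalprocess}, and then chaining this through Condition~\ref{Abadseries} and the change of measure from $P_{\theta_n^0}$ to $P_{\theta_0}$ without losing the $n^{-1/4}$ rate, is the delicate step where the structural assumptions on the basis $\{\phi_j\}$ and the empirical process machinery do the real work. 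A secondary subtlety is that Condition~\ref{Abadseries} is a high-probability (not almost sure) bound, so the probabilistic qualifiers must be tracked carefully when composing it with the oracle inequality.
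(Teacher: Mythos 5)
Your overall route coincides with the paper's: use the $k$-fold CV oracle inequality of \cite{Vanderlaan2003cv} together with the quadratic behavior of the loss (Condition~\ref{Aquadraticloss}) and the existence of a good deterministic $K(n)$ (via Lemma~\ref{Lrate}) to get $\| \theta_n^\sharp - \theta_0 \| = o_p(n^{-1/4})$, then use Condition~\ref{Abadseries} at the random index $K^*$ to transport control of the identity's approximation error to $\dot{\psi}$, move between $\theta_n^0$ and $\theta_0$ using part~\ref{ALipschitzw first half} of Condition~\ref{ALipschitzw} and Condition~\ref{Ainit}, and finally rerun the linearization of Theorem~\ref{Tefficiency} (Shen's Corollary~2 argument) and the appendix efficiency argument.

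However, the step you single out as ``the hard part''--- establishing $\| \theta_{K^*}^*(\theta_n^0) - \pi_{K^*,\theta_n^0}(\theta_0) \| = o_p(n^{-1/4})$ and $\| \theta_0 - \pi_{K^*,\theta_n^0}(\theta_0) \| = o_p(n^{-1/4})$ through a uniform-in-$K$ empirical process argument ``based on \ref{Aempiricalprocess}''--- is both unsupported and unnecessary, and since you do not carry it out, the proposal as written has a gap precisely there. Condition~\ref{Aempiricalprocess} is assumed only at the deterministic $K(n)$, so it provides no uniformity over the data-dependent $K^*$, and no such uniform control is needed. The paper closes this step with an elementary observation you miss: because $\theta_n^\sharp \in \Theta_{K^*,\theta_n^0}$ and $\pi_{K^*,\theta_n^0}$ is the $L^2(P_0)$-projection onto that space, one has $\| \theta_n^0 - \Pi_{K^*,\theta_n^0}(\id) \circ \theta_n^0 \| = \| \theta_n^0 - \pi_{K^*,\theta_n^0}(\theta_n^0) \| \leq \| \theta_n^0 - \theta_n^\sharp \| \leq \| \theta_n^0 - \theta_0 \| + \| \theta_n^\sharp - \theta_0 \| = o_p(n^{-1/4})$, and likewise $\| \theta_0 - \pi_{K^*,\theta_n^0}(\theta_0) \| \leq \| \theta_0 - \theta_n^\sharp \|$; no estimation/approximation split at $K^*$ is required once the oracle inequality has delivered the rate of $\theta_n^\sharp$ itself. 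Relatedly, the quantity to control is the approximation error onto the sieve actually used, namely $\| \dot{\psi} \circ \theta_0 - \pi_{K^*,\theta_n^0}(\dot{\psi} \circ \theta_0) \|$ (the analogue of Lemma~\ref{Lapproxw}), not $\| [\dot{\psi} - \Pi_{K^*,\theta_0}(\dot{\psi})] \circ \theta_0 \|$; the paper obtains the former directly from Condition~\ref{Abadseries}, the projection inequality above, and part~\ref{ALipschitzw first half} of Condition~\ref{ALipschitzw} with Condition~\ref{Ainit}. With these corrections your reduction to the Theorem~\ref{Tefficiency} linearization goes through exactly as in the paper.
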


\subsection{Simulation} \label{section: sim}

\subsubsection{Demonstration of Theorem~\ref{Tcv}} \label{section: demonstrate data-adaptive series}

We illustrate our method in a simulation in which we take $\theta_0(x)=\expect_{P_0}[Z|X=x]$ and $\Psi(\theta_0)=P_0 \theta_0^2$. This is a special case of Example~\ref{example: moments}. The true function $\theta_0$ is chosen to be discontinuous, which violates the smoothness assumptions commonly required in traditional series estimation. In this case, $\dot{\psi}=2 \id$ and so the constant in Condition~\ref{Abadseries} is 2. We compare the performance of plug-in estimators based on three different nonparametric regressions: (i) polynomial regression with degree selected by 10-fold CV (poly), which results in a traditional sieve estimator, (ii) gradient boosting (xgb) \citep{Friedman2001,Friedman2002,Mason1999,Mason2000}, and (iii) data-adaptive trigonometric series estimation with gradient boosting as the initial ML fit and 10-fold CV to select the number of terms in the series (xgb.trig). We also compare these plug-in estimators with the one-step correction estimator \citep{Pfanzagl1982} based on gradient boosting (xgb.1step). Further details of this simulation can be found in Appendix~\ref{appendix: simulation}.

Fig~\ref{Fmse_bias} presents $n \cdot \text{MSE}$ and $\sqrt{n} \cdot |\text{bias}|$ for each estimator, whereas Table \ref{TableCI} presents the coverage probability of 95\% Wald CIs based on these estimators. We find that xgb.trig and xgb.1step estimators perform well, while poly and xgb plug-in estimators do not appear to be efficient. Since polynomial series estimators only work well when estimating smooth functions, in this simulation, we would not expect the fit from the polynomial series estimator to converge sufficiently fast, and consequently, we would not expect the resulting plug-in estimator to be efficient. In contrast, gradient boosting is a flexible ML method that can learn discontinuous functions, so we can expect an efficient plug-in estimator based on this ML method. However, gradient boosting is not designed to approximately solve the estimating equation that achieves the small-bias property for this particular summary, so we would not expect its na\"{i}ve plug-in estimator to be efficient. Based on gradient boosting, our estimator and the one-step corrected estimator both appear to be efficient, but our method has the advantage of being a plug-in estimator. Moreover, the construction of our estimator does not require knowledge of the analytic expression of an influence function.

\begin{figure}[ht!]
    \centering
    \includegraphics[scale=0.65]{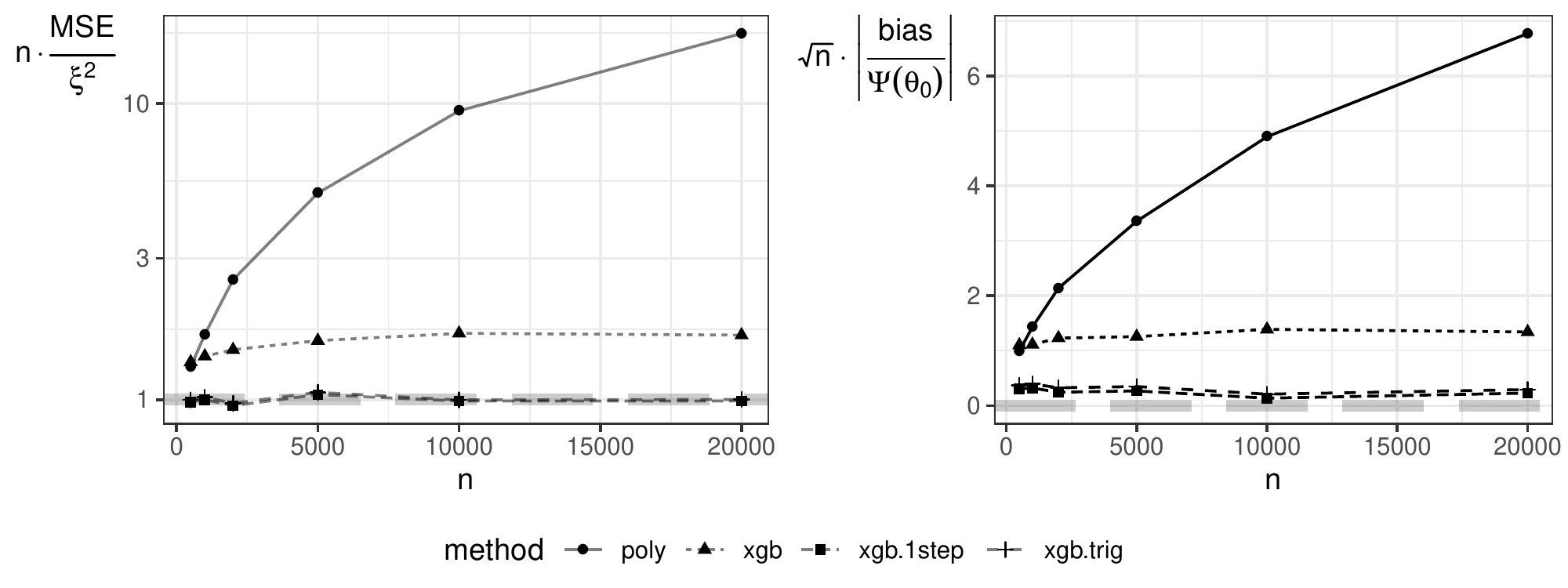}
    \caption{The relative MSE, $n \cdot \text{MSE}/\xi^2$, and the relative absolute bias, $\sqrt{n} \cdot |\text{bias}/\Psi(\theta_0)|$, of estimators of $\Psi(\theta_0)=P_0 \theta_0^2$. $\xi^2 := P_0 \text{IF}^2$ is the asymptotic variance that the $n \cdot \text{MSE}$ of an AL estimator should converge to. poly: plug-in estimator based on polynomial sieve estimation. xgb: plug-in estimator based on gradient boosting. xgb.1step: one-step correction (debiasing) of the plug-in estimator based on gradient boosting. xgb.trig: data-adaptive series with trigonometric series composed with gradient boosting. All tuning parameters are CV-selected. The y-axis for relative MSE is scaled based on logarithm for readability. Note that the $n \cdot \text{MSE}$ for xgb.trig and xgb.1step tend to $\xi^2$, but those for poly and xgb do not.}
    \label{Fmse_bias}
\end{figure}

\begin{table}[ht!]
    \centering
    \caption{Coverage probability of 95\% Wald CI based on estimators of $\Psi(\theta_0)=P_0 \theta_0^2$. poly: plug-in estimator based on polynomial sieve estimation. xgb: plug-in estimator based on gradient boosting. xgb.1step: one-step correction (debiasing) of the plug-in estimator based on gradient boosting. xgb.trig: data-adaptive series with trigonometric series composed with gradient boosting. All tuning parameters are CV-selected. The CI is constructed based on the influence function. The coverage probabilities for xgb.trig and xgb.1step are approximately 95\%, but those for poly and xgb are not.}
    \label{TableCI}
    \begin{tabular}{rrrrr}
        \hline
        n & poly & xgb & xgb.1step & xgb.trig \\ 
        \hline
        500 & 0.90 & 0.90 & 0.95 & 0.95 \\ 
        1000 & 0.86 & 0.89 & 0.95 & 0.95 \\ 
        2000 & 0.74 & 0.88 & 0.96 & 0.96 \\ 
        5000 & 0.47 & 0.88 & 0.94 & 0.94 \\ 
        10000 & 0.16 & 0.87 & 0.95 & 0.96 \\ 
        20000 & 0.02 & 0.86 & 0.96 & 0.96 \\
        \hline
    \end{tabular}
\end{table}

We also investigate the effect of the choice of $K$ on the performance of our method. Fig~\ref{FmseK} presents $n \cdot \text{MSE}$ for the data-adaptive series estimator with different choices of $K$. We can see that our method is insensitive to the choice of $K$ in this simulation setting. Although a relatively small $K$ performs better, choosing a much larger $K$ does not appear to substantially harm the behavior of the estimator. This insensitivity to the selected tuning parameter suggests that in some applications, without using CV, an almost arbitrary choice of $K$ that is sufficiently large might perform well.

\begin{figure}[ht!]
    \centering
    \includegraphics[scale=0.65]{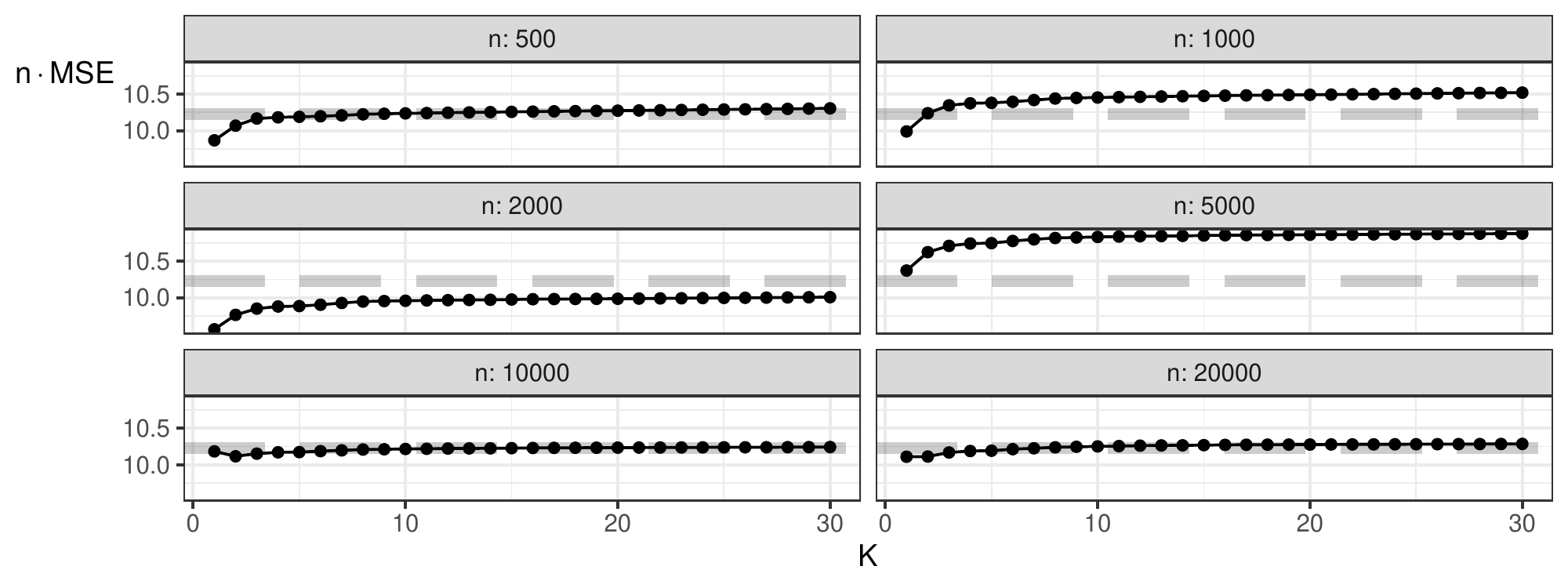}
    \caption{$n \cdot \text{MSE}$ of estimators of $\Psi(\theta_0)=P_0 \theta_0^2$ based on data-adaptive series with different choices of $K$. The horizontal gray thick dashed line is the asymptotic variance that the $n \cdot \text{MSE}$ of an AL estimator should converge to, $\xi^2 := P_0 \text{IF}^2$. Note that $n \cdot \text{MSE}$ is not sensitive to the choice of $K$ over a wide range of $K$.}
    \label{FmseK}
\end{figure}

\subsubsection{Violation of Condition~\ref{Abadseries}} \label{section: rough Psi}

For the $k$-fold CV selection of $K$ in our method to yield an efficient plug-in estimator, $\Psi$ must be highly smooth in the sense that $\dot{\psi}$ can be approximated by the series about as well as can the identity function (see Condition~\ref{Abadseries}). Although we have argued that this condition is reasonable, in this section, we explore via simulation the behavior of our method based on CV when $\dot{\psi}$ is rough. We again take $\theta_0: x \mapsto \expect_{P_0}[Z|X=x]$ and an artificial summary $\Psi(\theta_0)=P_0 (f \circ \theta_0)$, where $f$ is an element of $C^1[-1,1]$ but not of $C^2[-1,1]$. In this case, $\dot{\psi}=f'$ is very rough, so we do not expect it to be approximated by a trigonometric series as well as the identity function. However, it is sufficiently smooth to allow for the existence of a deterministic $K$ that achieves efficiency. Further simulation details are provided in Appendix~\ref{appendix: simulation}.

Table~\ref{Tablenonsmooth} presents the performance of our estimator based on 10-fold CV. We note that it performs reasonably well in terms of the $n\cdot \text{MSE}$ criterion. However, it is unclear whether its scaled bias converges to zero for large $n$, so our method may be too biased. The coverage of 95\% Wald CIs is close to the nominal level, suggesting that the bias is fairly small relative to the standard error of the estimator at the sample sizes considered. One possible explanation for the good performance observed is that the $L^2(P_0)$-convergence rate of $\theta_n^*$ is much faster than $n^{-1/4}$, which allows for a slower convergence rate of the approximation error $\| \dot{\psi} \circ \theta_0 - \Pi_{n,\theta_0}(\dot{\psi}) \circ \theta_0 \|$ (see Appendix~\ref{section: condition discussion}). This simulation shows that our proposed method may still perform well even if Condition~\ref{Abadseries} is violated, especially when the initial ML fit is close to the unknown function.

\begin{table}[ht!]
    \centering
    \caption{Performance of the plug-in estimator of $\Psi(\theta_0)=P_0 (f \circ \theta_0)$ based on data-adaptive series. Here $f$ is not infinitely differentiable. The relative MSE is $n \cdot \text{MSE}/\xi^2$ where $\xi^2 := P_0 \text{IF}^2$ is the asymptotic variance that the $n \cdot \text{MSE}$ of an AL estimator should converge to; the root-$n$ abs relative bias is $\sqrt{n} |\text{bias}/\Psi(\theta_0)|$. The performance appears to be acceptable in view of the small $\text{MSE}$ and reasonable CI coverage.}
    \label{Tablenonsmooth}
    \begin{tabular}{rrrr}
        \hline
        n & relative $\text{MSE}$ & root-$n$ absolute relative bias & 95\% Wald CI coverage \\ 
        \hline
        500 & 0.88 & 3.95 & 0.97 \\
        1000 & 0.89 & 3.73 & 0.96 \\
        2000 & 0.79 & 3.15 & 0.97 \\
        5000 & 0.78 & 2.02 & 0.97 \\
        10000 & 0.88 & 2.57 & 0.97 \\
        20000 & 0.88 & 1.75 & 0.96 \\
        \hline
    \end{tabular}
\end{table}

\section{Generalized data-adaptive series} \label{section: general case}

\subsection{Proposed method} \label{section: general case method}

As in Section~\ref{section: simple case}, we consider the case that $\Psi$ is scalar-valued in this section. The assumption that $\dot{\Psi}=\dot{\psi} \circ \theta_0$ may be too restrictive for general summaries as in Examples~\ref{example: average derivative}--\ref{example: treatment effect heterogeneity}, especially if $\dot{\Psi}$ is not derived analytically (see Remark~\ref{remark: targeted series}). In this section, we generalize the method in Section~\ref{section: simple case} to deal with these summaries. Letting $\id_x$ be the identity function defined on $\mathcal{X}$, we can readily generalize the above method to the case where $\dot{\Psi}$ can be represented as $\dot{\psi} \circ (\theta_0, \id_x)$ for a function $\dot{\psi}: \real^q \times \mathcal{X} \rightarrow \real^q$; that is, $\dot{\Psi}(x) = \dot{\psi}(\theta_0(x),x)$. This form holds trivially if we set $\dot{\psi}(t,x)=\dot{\Psi}(x)$, i.e., $\dot{\psi}$ is independent of its first argument, but we can utilize flexible ML methods if $\dot{\psi}$ is nontrivial. Again, we assume $\Theta$ is a vector space of $\real^q$-valued function equipped with the $L^2(P_0)$-inner product. We assume $\dot{\psi}$ can be approximated well by a basis $\phi_1,\phi_2,\ldots: \real^q \times \mathcal{X} \rightarrow \real^q$, and consider the data-adaptive sieve-like subspace $\Theta_n := \Theta_{n,\theta_n^0} := \text{Span}\{ \phi_1, \ldots, \phi_K \} \circ (\theta_n^0,\id_x)$. We propose to use $\Psi(\theta_n^*)$ to estimate $\Psi(\theta_0)$, where $\theta_n^*=\theta_n^*(\theta_n^0) \in \argmin_{\theta \in \Theta_n} n^{-1} \sum_{i=1}^{n} \ell(\theta)(V_i)$ denotes the series estimator within $\Theta_n$ minimizing the empirical risk.

\subsection{Results for proposed method} \label{section: general case theory}

With a slight abuse of notation, in this section we use $\id$ to denote the function $(t,x) \mapsto t$ where $t \in \real^q$ and $x \in \mathcal{X}$. Again, we use $\pi_n := \pi_{n,\theta_n^0}$ to denote the projection operator onto $\Theta_{n,\theta_n^0}$. Let $\Pi_{n,\theta}$ be defined such that, for any function $g: \real^q \times \mathcal{X} \rightarrow \real^q$ with $g \circ (\theta,\id_x) \in L^2(P_0)$, it holds that $\Pi_{n,\theta} (g) \circ (\theta,\id_x) =\pi_{n,\theta} (g \circ (\theta,\id_x))$; that is, letting $\beta_j$ be the quantity that depends on $g$ and $\theta$ such that $\pi_{n,\theta} (g \circ (\theta,\id_x))=( \sum_{j=1}^K \beta_j \phi_j ) \circ (\theta,\id_x)$, we define $\Pi_{n,\theta} (g) := \sum_{j=1}^K \beta_j \phi_j$.

We introduce conditions and derive theoretical results that are parallel to those in Section \ref{section: simple case}.

\begin{condition2}{Aapproxidentity}[Sufficiently small approximation  error to $\id$ for $\Theta_{n,\theta_0}$]
    \label{Aapproxidentity2}
    $\| \theta_0 - \Pi_{n,\theta_0}(\id) \circ (\theta_0,\id_x) \|=o(n^{-1/4})$.
\end{condition2}

\begin{condition2}{Aapproxw}[Sufficiently small approximation error to $\dot{\psi}$ for $\Theta_{n,\theta_0}$ and convergence rate of $\theta_n^*$]
    \label{Aapproxw2}
    $\| [\dot{\psi} - \Pi_{n,\theta_0}(\dot{\psi})] \circ (\theta_0,\id_x) \| \cdot \| \theta_n^*-\theta_0 \|=o_p(n^{-1/2})$.
\end{condition2}

Additional regularity conditions can be found in Appendix~\ref{section: general series additional regularity conditions}. Note that $\dot{\Psi}$ may depend on components of $P_0$ other than $\theta_0$, Condition~\ref{Aapproxw2} may impose smoothness conditions on these components so that $\dot{\psi}$ can be well approximated by the chosen series. For example, in Example~\ref{example: average derivative}, Condition~\ref{Aapproxw2} requires that $p_0'/p_0$ and the propensity score can be approximated by the series well; in Examples~\ref{example: ATE} and \ref{example: treatment effect heterogeneity}, Condition~\ref{Aapproxw2} imposes the same requirement on the propensity score. We now present a theorem that establishes the efficiency of the plug-in estimator based on $\theta_n^*$.

\begin{theorem}[Efficiency of plug-in estimator]
    \label{Tefficiency2}
    Under Conditions~\ref{Adloss}--\ref{APsiremainder}, \ref{Ainit}, \ref{Aestimation}, \ref{Aapproxidentity2}, \ref{Aapproxw2}, \ref{ALipschitzidentity2}, \ref{ALipschitzw2}, \ref{Aempiricalprocess} and \ref{Afinitevar}, $\Psi(\theta_n^*)$ is an asymptotically linear estimator of $\Psi(\theta_0)$ with influence function $v \mapsto \alpha_{0,\ell}^{-1} \{-\ell_{0}'[\dot{\psi} \circ (\theta_0, \id_x)](v) + \expect_{P_0} [ \ell_{0}'[\dot{\psi} \circ (\theta_0, \id_x)](V) ]\}$, that is,
    $$\Psi(\theta_n^*) = \Psi(\theta_0) + \frac{1}{n} \sum_{i=1}^n \alpha_{0,\ell}^{-1} \left\{ -\ell_{0}'[\dot{\psi} \circ (\theta_0, \id_x)](V_i) + \expect_{P_0} \left[ \ell_{0}'[\dot{\psi} \circ (\theta_0, \id_x)](V) \right] \right\}+o_p(n^{-1/2}).$$
    As a consequence, $\sqrt{n} [\Psi(\theta_n^*)-\Psi(\theta_0)] \overset{d}{\rightarrow} N(0, \xi^2)$ with $\xi^2 := \Var_{P_0}(\ell_{0}'[\dot{\psi} \circ (\theta_0, \id_x)](V))/\alpha_{0,\ell}^2$. In addition, under Conditions~\ref{Acloseminimizer} and \ref{Aempiricalprocesslike} in Appendix~\ref{section: regularity additional conditions}, $\Psi(\hat{\theta}_n)$ is efficient under nonparametric models.
\end{theorem}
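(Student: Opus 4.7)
The plan is to mirror the proof of Theorem~\ref{Tefficiency}, substituting compositions with $(\theta,\id_x)$ for compositions with $\theta$ wherever the sieve basis appears. Conditions~\ref{Aapproxidentity2} and \ref{Aapproxw2} are deliberate carbon copies of \ref{Aapproxidentity} and \ref{Aapproxw}, and their starred Lipschitz counterparts \ref{ALipschitzidentity2} and \ref{ALipschitzw2} are built so that every bound from Section~\ref{section: simple case theory} transfers essentially verbatim once the extra $x$-coordinate is carried along.

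Concretely, I would proceed as follows. First, Condition~\ref{APsiremainder} and the Riesz representation from Condition~\ref{AdPsi} give $\Psi(\theta_n^*) - \Psi(\theta_0) = \langle \theta_n^* - \theta_0, \dot{\psi}\circ(\theta_0,\id_x)\rangle + R$ with $|R| \leq C\|\theta_n^* - \theta_0\|^2$. Bounding $\|\theta_n^* - \theta_0\| \leq \|\theta_n^* - \pi_n\theta_0\| + \|\pi_n\theta_0 - \theta_0\| = o_p(n^{-1/4})$ via Conditions~\ref{Aestimation} and \ref{Aapproxidentity2} makes $R = o_p(n^{-1/2})$. Second, introduce the data-adaptive projection $w_n := \Pi_{n,\theta_n^0}(\dot{\psi})\circ(\theta_n^0,\id_x)\in\Theta_n$; Cauchy--Schwarz together with Conditions~\ref{Aapproxw2} and \ref{ALipschitzw2} and the rate $\|\theta_n^0 - \theta_0\|=o_p(n^{-1/4})$ from Condition~\ref{Ainit} controls $\langle\theta_n^* - \theta_0, \dot{\psi}\circ(\theta_0,\id_x) - w_n\rangle = o_p(n^{-1/2})$. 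Third, use the empirical first-order optimality of $\theta_n^*$ in the vector space $\Theta_n$: perturbing along $\pm w_n$ and expanding via Conditions~\ref{Adloss} and \ref{Aquadraticloss} yields
\begin{align*}
\alpha_{0,\ell}\langle\theta_n^* - \theta_0, w_n\rangle &= -(P_n - P_0)\{\ell_0'[w_n] - P_0\ell_0'[w_n]\} + o_p(n^{-1/2}).
\end{align*}
Fourth, apply the stochastic equicontinuity Condition~\ref{Aempiricalprocess} to replace $w_n$ by $\dot{\psi}\circ(\theta_0,\id_x)$ inside the empirical process, and invoke Condition~\ref{Afinitevar} to obtain the CLT. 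The resulting influence function is the nonparametric canonical gradient of $\Psi$ at $P_0$, and efficiency under a nonparametric model follows from the regularity granted by Conditions~\ref{Acloseminimizer} and \ref{Aempiricalprocesslike}.

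The main obstacle is the Lipschitz-type step that transfers bounds from the population projection $\Pi_{n,\theta_0}$ to the data-adaptive one $\Pi_{n,\theta_n^0}$: since the projected function $\dot{\psi}$ now depends on $x$ directly, the cross-terms cannot be simplified by pulling out a function of $\theta_0$ alone, which is precisely why Conditions~\ref{ALipschitzidentity2} and \ref{ALipschitzw2} are slightly heavier than in the simple case. A closely related subtlety is that the optimality perturbation is taken in the random direction $w_n$, so Condition~\ref{Aempiricalprocess} must be stated over a class rich enough to contain $w_n$ with probability tending to one; verifying this class is typically the main effort in applications.
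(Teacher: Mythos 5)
Your proposal is correct and follows essentially the same route as the paper: the paper proves Theorem~\ref{Tefficiency2} by repeating the proof of Theorem~\ref{Tefficiency} verbatim with compositions $(\theta,\id_x)$ in place of $\theta$, i.e., the generalized versions of Lemmas~\ref{Lrate} and \ref{Lapproxw} (using Conditions~\ref{Ainit}, \ref{Aapproxidentity2}, \ref{ALipschitzidentity2}, \ref{Aapproxw2}, \ref{ALipschitzw2}) followed by the two-sided perturbation of the empirical risk minimizer within the linear space $\Theta_n$, the empirical process Condition~\ref{Aempiricalprocess}, and the canonical-gradient argument of Appendix~\ref{appendix: regular proof} for efficiency. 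Your reorganization (linearizing $\Psi$ first and perturbing along $\pm w_n$ rather than toward $\pi_n(\theta_0\pm\dot{\Psi})$) is only a cosmetic variant of the same argument, and you correctly flag the two genuine technical points — transferring approximation bounds from $\Pi_{n,\theta_0}$ to $\Pi_{n,\theta_n^0}$ via the starred Lipschitz conditions, and ensuring the empirical process condition covers the random perturbation direction.
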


\begin{remark}
    When $\Psi$ depends on both $\theta_0$ and $P_0$, we can readily adapt this method as in Section~\ref{section: generalized simple case}.
\end{remark}

We now present a condition for selecting $K$ via $k$-fold CV, in parallel with Condition~\ref{Abadseries} from Section~\ref{section: CV}.

\begin{condition2}{Abadseries}[Bounded approximation error of $\dot{\psi}$ relative to $\id$]
    \label{Abadseries2}
    There exists a constant $C>0$ such that, with probability tending to one, $\| \dot{\psi} \circ (\theta_n^0,\id_x) - \Pi_{K,\theta_n^0}(\dot{\psi}) \circ (\theta_n^0,\id_x) \| \leq C \| \theta_n^0 - \Pi_{K,\theta_n^0}(\id) \circ (\theta_n^0,\id_x) \|$ for all $K$.
\end{condition2}

\begin{remark}
    Similarly to Condition~\ref{Abadseries}, Condition~\ref{Abadseries2} requires that the identity function $\id$ is not contained in the span of finitely many terms of the chosen series and that $\dot{\psi}$ is sufficiently smooth so that $\dot{\psi}$ can be approximated well by the chosen series. However, Condition~\ref{Abadseries2} may be far more stringent than Condition~\ref{Abadseries}. In fact, it may be overly stringent in practice. Since $\dot{\Psi}$ may depend on components of $P_0$ other than $\theta_0$, Condition~\ref{Abadseries2} may require these components to be sufficiently smooth. When a common candidate series such as the trigonometric series is used, a sufficient condition for Condition~\ref{Abadseries2} is that $\dot{\psi}$ is infinitely differentiable with bounded derivatives, which further imposes assumptions on the smoothness of other components of $P_0$. For example, in Example~\ref{example: average derivative}, a sufficient condition for Condition~\ref{Abadseries2} is that $p_0'/p_0$ is infinitely differentiable with bounded derivatives; in Examples~\ref{example: ATE} and \ref{example: treatment effect heterogeneity}, a sufficient condition for Condition~\ref{Abadseries2} is that Condition~\ref{Abadseries2} is that the propensity score function satisfies the same requirement. Due to the stringency of Condition~\ref{Abadseries2}, we conduct a simulation in Section~\ref{section: sim2 violation of Abadseries2} to understand the performance of our proposed method when this condition is violated. The simulation appears to indicate that our method may be robust against violation of Condition~\ref{Abadseries2}.
\end{remark}

The following theorem shows that $k$-fold CV can be used to select $K$ under certain conditions.

\begin{theorem}[Efficiency of CV-based plug-in estimator]
    \label{Tcv2}
    Assume Conditions~\ref{Adloss}--\ref{APsiremainder},~\ref{Ainit}, \ref{Aestimation}, \ref{Aapproxidentity2}, \ref{ALipschitzidentity2}, \ref{ALipschitzw2}, \ref{Aempiricalprocess} and \ref{Afinitevar} hold for a deterministic $K=K(n)$. Suppose that part~\ref{ALipschitzw2 first half} of Condition~\ref{ALipschitzw2} holds. With $\theta_n^\sharp:=\theta_{K^*}(\theta_n^0)$, $\Psi(\theta_n^\sharp)$ is an asymptotically linear estimator of $\Psi(\theta_0)$ with influence function $v \mapsto \alpha_{0,\ell}^{-1} \{-\ell_{0}'[\dot{\psi} \circ (\theta_0,\id_x)](v) + \expect_{P_0} [ \ell_{0}'[\dot{\psi} \circ (\theta_0,\id_x)](V) ]\}$, that is,
    $$\Psi(\theta_n^\sharp) = \Psi(\theta_0) + \frac{1}{n} \sum_{i=1}^n \alpha_{0,\ell}^{-1} \left\{ -\ell_{0}'[\dot{\psi} \circ (\theta_0,\id_x)](V_i) + \expect_{P_0} \left[ \ell_{0}'[\dot{\psi} \circ (\theta_0,\id_x)](V) \right] \right\}+o_p(n^{-1/2}).$$
    Therefore, $\sqrt{n} [\Psi(\theta_n^\sharp)-\Psi(\theta_0)] \overset{d}{\rightarrow} N(0, \xi^2 )$ with $\xi^2 := \Var_{P_0}(\ell_{0}'[\dot{\psi} \circ \theta_0](V))/\alpha_{0,\ell}^2$. In addition, under Conditions~\ref{Acloseminimizer} and \ref{Aempiricalprocesslike} in Appendix~\ref{section: regularity additional conditions}, $\Psi(\hat{\theta}_n)$ is efficient under a nonparametric model.
\end{theorem}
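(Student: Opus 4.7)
The plan is to reduce Theorem~\ref{Tcv2} to an application of Theorem~\ref{Tefficiency2} by verifying that each of that theorem's hypotheses continues to hold when the deterministic tuning parameter $K=K(n)$ is replaced by the CV-selected $K^*$. The principal tool is an oracle inequality for $k$-fold cross-validation \citep{Vanderlaan2003cv}: because the cross-validated risk uniformly approximates the population $L^2(P_0)$ risk over the candidate set and the assumed deterministic $K(n)$ lies in that set, the CV-selected $K^*$ attains, up to constants and logarithmic factors, the same asymptotic $L^2$-rate as the deterministic choice. This immediately transfers Condition~\ref{Aestimation} to $K^*$ --- so that $\|\theta_n^\sharp - \pi_{K^*,\theta_n^0}(\theta_0)\| = o_p(n^{-1/4})$ --- and, combined with Condition~\ref{Ainit} together with the continuity of $\Pi_{K,\theta}$ in $\theta$ encoded in Condition~\ref{ALipschitzidentity2}, also transfers Condition~\ref{Aapproxidentity2}.

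The critical step is recovering Condition~\ref{Aapproxw2} at $K^*$, since the hypotheses of Theorem~\ref{Tcv2} weaken Condition~\ref{ALipschitzw2} to its first part and drop Condition~\ref{Aapproxw2} altogether. Here I invoke Condition~\ref{Abadseries2}, whose role is precisely to bound the approximation error for $\dot{\psi}$ uniformly in $K$ by a constant multiple of that for the identity function:
$$\bigl\|[\dot{\psi} - \Pi_{K^*,\theta_n^0}(\dot{\psi})]\circ(\theta_n^0,\id_x)\bigr\| \le C \, \bigl\|\theta_n^0 - \Pi_{K^*,\theta_n^0}(\id)\circ(\theta_n^0,\id_x)\bigr\|$$
with probability tending to one. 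The right-hand side is $o_p(n^{-1/4})$ by the transferred Condition~\ref{Aapproxidentity2} together with $\|\theta_n^0 - \theta_0\| = o_p(n^{-1/4})$ from Condition~\ref{Ainit}, and multiplying by the $o_p(n^{-1/4})$ rate of $\|\theta_n^\sharp - \theta_0\|$ yields the required $o_p(n^{-1/2})$ product. With all hypotheses of Theorem~\ref{Tefficiency2} now holding at $K^*$, the asymptotic linear expansion with the claimed influence function --- and the efficiency conclusion under the regularity conditions of Appendix~\ref{section: regularity additional conditions} --- transfers directly.

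The principal obstacle is handling the randomness of $K^*$, since Theorem~\ref{Tefficiency2} is stated for deterministic sequences. This requires establishing uniform-in-$K$ versions of the empirical process bound in Condition~\ref{Aempiricalprocess} and of the continuity of the projection operator $\Pi_{K,\theta}$ with respect to $\theta$ near $\theta_0$, over the entire candidate set of $K$. A related subtlety is that $\Pi_{K,\theta_n^0}$ is defined with respect to the random $\theta_n^0$, so approximation-error bounds derived for $\Pi_{K,\theta_0}$ must be transferred through a stability argument that leverages the $o_p(n^{-1/4})$ rate in Condition~\ref{Ainit}. Once these uniform controls are in place, the oracle inequality delivers the statements at $K^*$ in probability and the reduction to Theorem~\ref{Tefficiency2} completes the proof.
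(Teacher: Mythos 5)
Your overall strategy---invoke the $k$-fold CV oracle inequality of \cite{Vanderlaan2003cv}, use Condition~\ref{Abadseries2} (assumed for all $K$, hence at $K^*$) to control the approximation error for $\dot{\psi}$, and then rerun the deterministic-$K$ argument behind Theorem~\ref{Tefficiency2}---is the same as the paper's. But there is a genuine gap in the step where you bound the right-hand side of the Condition~\ref{Abadseries2} inequality. You assert that $\| \theta_n^0 - \Pi_{K^*,\theta_n^0}(\id) \circ (\theta_n^0,\id_x) \| = o_p(n^{-1/4})$ follows from a ``transferred Condition~\ref{Aapproxidentity2}'', the transfer being effected by Condition~\ref{ALipschitzidentity2} and Condition~\ref{Ainit}. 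That transfer is not valid: Condition~\ref{Aapproxidentity2} is a statement about the deterministic sequence $K(n)$ and the sieve anchored at $\theta_0$, while the Lipschitz condition only controls the effect of replacing $\theta_0$ by $\theta$ in the composition; neither says anything about the data-dependent $K^*$, which CV may select much smaller than $K(n)$, in which case the selected sieve's approximation error for the identity need not be $o(n^{-1/4})$. The same objection applies to your claim that the oracle inequality ``transfers'' Condition~\ref{Aestimation} to $K^*$: the paper never verifies \ref{Aestimation} or \ref{Aapproxidentity2} at $K^*$, and nothing in the assumptions guarantees they hold there.

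The paper's route avoids this issue. First, for the deterministic $K(n)$ the conditions give $\| \theta_{K(n)}^*(\theta_n^0) - \theta_0 \| = o_p(n^{-1/4})$, hence $P_0\{\ell(\theta_{K(n)}^*(\theta_n^0)) - \ell(\theta_0)\} = o_p(n^{-1/2})$ by Condition~\ref{Aquadraticloss}; the CV oracle inequality then yields $P_0\{\ell(\theta_n^\sharp) - \ell(\theta_0)\} = o_p(n^{-1/2})$, and Condition~\ref{Aquadraticloss} again gives $\| \theta_n^\sharp - \theta_0 \| = o_p(n^{-1/4})$. Second---and this is the step you are missing---since $\theta_n^\sharp$ itself lies in the CV-selected sieve $\Theta_{K^*,\theta_n^0}$ and $\Pi_{K^*,\theta_n^0}(\id) \circ (\theta_n^0,\id_x)$ is the $L^2(P_0)$-projection of $\theta_n^0$ onto that sieve, one has the elementary bound $\| \theta_n^0 - \Pi_{K^*,\theta_n^0}(\id) \circ (\theta_n^0,\id_x) \| \leq \| \theta_n^0 - \theta_n^\sharp \| \leq \| \theta_n^0 - \theta_0 \| + \| \theta_n^\sharp - \theta_0 \| = o_p(n^{-1/4})$, with no appeal to any approximation condition at $K^*$. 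Feeding this into Condition~\ref{Abadseries2}, and using part~\ref{ALipschitzw2 first half} of Condition~\ref{ALipschitzw2} together with the projection's minimizing property to pass from composition with $\theta_n^0$ to composition with $\theta_0$, gives an $o_p(n^{-1/4})$ approximation error for $\dot{\psi}$ at $K^*$; multiplied by $\| \theta_n^\sharp - \theta_0 \| = o_p(n^{-1/4})$ this supplies the $o_p(n^{-1/2})$ product playing the role of Condition~\ref{Aapproxw2}, after which the argument of Theorem~\ref{Tefficiency2} applies. Your closing remark about uniform-in-$K$ empirical process control is a fair caveat about what the final step implicitly requires at $K^*$ (the paper is terse on this point), but the approximation-error step must be repaired as above for your reduction to go through.
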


\subsection{Simulation} \label{section: sim2}

In the following simulations, we consider the problem in Example~\ref{example: treatment effect heterogeneity}. As we show in Appendix~\ref{appendix: change norm}, letting $g_0: x \mapsto P_0 (A=1|X=x)$ be the propensity score and setting $\theta=(\mu_0,\mu_1)$, with $\ell(\theta): v \mapsto a [z-\mu_1(x)]^2 + (1-a) [z-\mu_0(x)]^2$, the generalized data-adaptive series methodology may be used to obtain an efficient estimator. As in Section~\ref{section: sim}, we conduct two simulation studies, the first demonstrating Theorem~\ref{Tcv2} and the other exploring the robustness of CV against violation of Condition~\ref{Abadseries2}.

\subsubsection{Demonstration of Theorem~\ref{Tcv2}}

We choose $\theta_0$ to be a discontinuous function while $g_0$ is highly smooth. We compare the performance of plug-in estimators based on three different nonparametric regressions: (i) polynomial regression with the degree selected by 5-fold CV (poly), which results in a traditional sieve estimator, (ii) gradient boosting (xgb) \citep{Friedman2001,Friedman2002,Mason1999,Mason2000}, and (iii) generalized data-adaptive trigonometric series estimation with gradient boosting as the initial ML fit and 5-fold CV to select the number of terms in the series (xgb.trig). Further details of the simulation setting are provided in Appendix~\ref{appendix: simulation}.

Fig~\ref{Fmse_bias2} presents $n \cdot \text{MSE}$ and $\sqrt{n} \cdot |\text{bias}|$ for each estimator, whereas Table \ref{TableCI2} presents the coverage probability of 95\% Wald CIs based on these estimators. There are a few runs in the simulation with noticeably poor behavior, so we trimmed the most extreme values when computing MSE and bias in Fig~\ref{Fmse_bias2} (1\% of all Monte Carlo runs). The outliers may be caused by the performance of gradient boosting and the instability of 5-fold CV. In practice, the user may ensemble more ML methods and use 10-fold CV to mitigate such behavior. We note that xgb.trig and xgb.1step estimators perform well, while poly and xgb plug-in estimators do not appear to be efficient. Based on gradient boosting, our estimator and the one-step corrected estimator both appear to be efficient, but the construction of our estimator has the advantage of not requiring the analytic expression of an influence function.

\begin{figure}[ht!]
    \centering
    \includegraphics[scale=0.65]{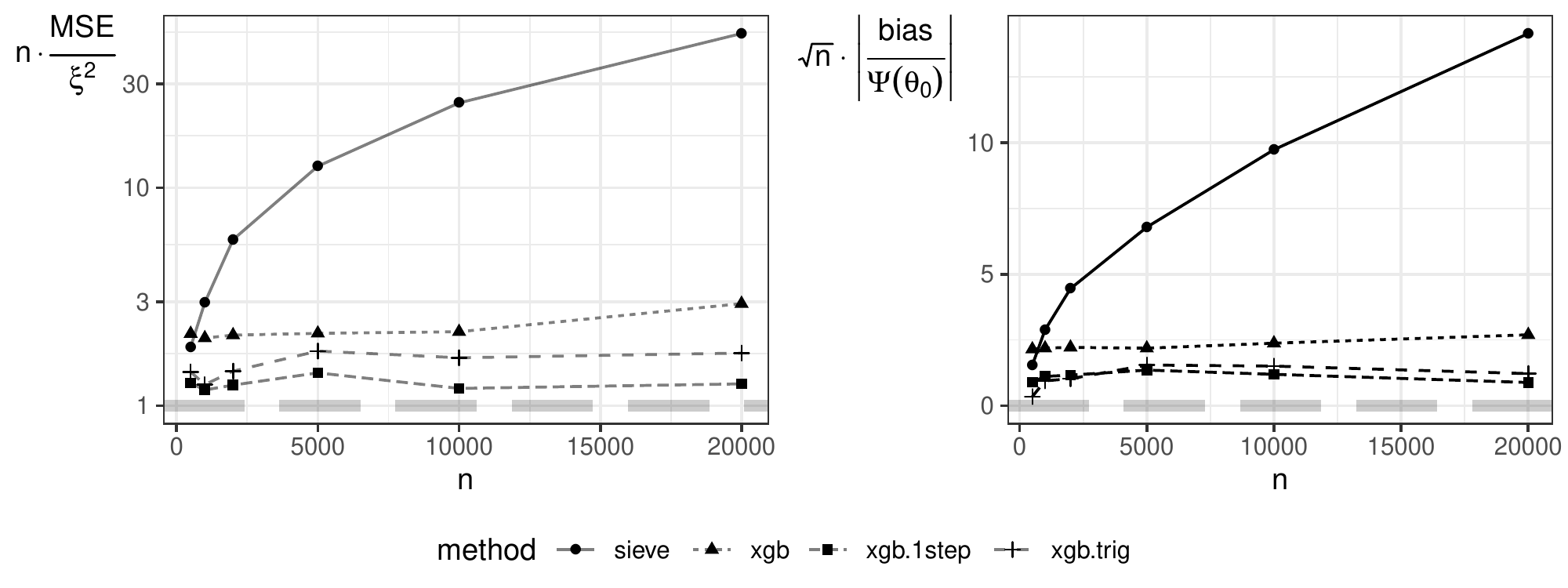}
    \caption{The relative MSE, $n \cdot \text{MSE}/\xi^2$, and the relative absolute bias, $\sqrt{n} \cdot |\text{bias}/\Psi(\theta_0)|$, of estimators of $\Psi(\theta_0)=\Var_{P_0}(\mu_{0,1}(X)-\mu_{0,0}(X))$ where $\mu_{0,a}: x \mapsto \expect_{P_0}[Y|A=a,X=x]$. $\xi^2 := P_0 \text{IF}^2$ is the asymptotic variance that the $n \cdot \text{MSE}$ of an AL estimator should converge to. poly: plug-in estimator based on polynomial sieve estimation. xgb: plug-in estimator based on gradient boosting. xgb.1step: one-step correction (debiasing) of the plug-in estimator based on gradient boosting. xgb.trig: data-adaptive series with trigonometric series composed with gradient boosting. All tuning parameters are CV-selected. The y-axis for relative MSE is scaled based on logarithm for readability. Note that the $n \cdot \text{MSE}$ for xgb.trig and xgb.1step tend to $\xi^2$, but those for poly and xgb do not.}
    \label{Fmse_bias2}
\end{figure}

\begin{table}[ht!]
    \centering
    \caption{Coverage probability of 95\% Wald CI based on estimators of $\Psi(\theta_0)=\Var_{P_0}(\mu_{0,1}(X)-\mu_{0,0}(X))$ where $\mu_{0,a}: x \mapsto \expect_{P_0}[Y|A=a,X=x]$. poly: plug-in estimator based on polynomial sieve estimation. xgb: plug-in estimator based on gradient boosting. xgb.1step: one-step correction (debiasing) of the plug-in estimator based on gradient boosting. xgb.trig: data-adaptive series with trigonometric series composed with gradient boosting. All tuning parameters are CV-selected. The CI is constructed based on the influence function. The coverage probabilities for xgb.trig and xgb.1step are relatively close to 95\%, but those for poly and xgb are not.}
    \label{TableCI2}
    \begin{tabular}{rrrrr}
        \hline
        n & poly & xgb & xgb.1step & xgb.trig \\ 
        \hline
        500 & 0.85 & 0.76 & 0.89 & 0.90 \\
        1000 & 0.68 & 0.78 & 0.93 & 0.93 \\
        2000 & 0.44 & 0.81 & 0.93 & 0.92 \\
        5000 & 0.11 & 0.80 & 0.89 & 0.87 \\
        10000 & 0.00 & 0.79 & 0.92 & 0.90 \\
        20000 & 0.00 & 0.67 & 0.91 & 0.88 \\
        \hline
    \end{tabular}
\end{table}

\subsubsection{Violation of Condition~\ref{Abadseries2}} \label{section: sim2 violation of Abadseries2}

We also study via simulation the behavior of our estimator when Condition~\ref{Abadseries2} is violated. We note that whether Condition~\ref{Abadseries2} holds depends on the smoothness of $g_0$. We choose $g_0$ to be rougher than $\id$ with $g_0$ being an element of $C^2[-1,1]$ but not of $C^3[-1,1]$. Consequently, $\dot{\Psi}$ cannot be approximated by our generalized data-adaptive series as well as $\id$, but its smoothness is sufficient for the existence of a deterministic $K$ to achieve efficiency. Appendix~\ref{appendix: simulation} describes further details of this simulation setting.

Table~\ref{Tablenonsmooth2} presents the performance of our estimator based on 5-fold CV. We observe that its scaled $\text{MSE}$ appears to converge to one, but it is unclear whether its scaled $\text{bias}$ converges to zero for large $n$, and so our method may be overly biased.. The coverage of 95\% Wald CIs is close to the nominal level, suggesting that the bias may be fairly small relative to the standard error of the estimator at the sample sizes considered. Therefore, according to this simulation, our generalized data-adaptive series methodology appears to be robust against violation of Condition~\ref{Abadseries2}.

\begin{table}[ht!]
    \centering
    \caption{Performance of the plug-in estimator of $\Psi(\theta_0)=\Var_{P_0}(\mu_{0,1}(X)-\mu_{0,0}(X))$ where $\mu_{0,a}: x \mapsto \expect_{P_0}[Z|A=a,X=x]$ based on data-adaptive series. Here the propensity score $g_0: x \mapsto \expect_{P_0}[A|X=x]$ is rough. The relative MSE is $n \cdot \text{MSE}/\xi^2$ where $\xi^2 := P_0 \text{IF}^2$ is the asymptotic variance that the $n \cdot \text{MSE}$ of an AL estimator should converge to; the root-$n$ abs relative bias is $\sqrt{n} |\text{bias}/\Psi(\theta_0)|$. The performance appears to be acceptable in view of small $\text{MSE}$ and reasonable CI coverage.}
    \label{Tablenonsmooth2}
    \begin{tabular}{rrrr}
        \hline
        n & relative $\text{MSE}$ & root-$n$ absolute relative bias & 95\% Wald CI coverage \\ 
        \hline
        500 & 1.02 & 0.28 & 0.92 \\ 
        1000 & 1.13 & 0.26 & 0.91 \\ 
        2000 & 1.10 & 0.19 & 0.94 \\ 
        5000 & 1.03 & 0.02 & 0.93 \\ 
        10000 & 0.96 & 0.23 & 0.95 \\ 
        20000 & 0.99 & 0.24 & 0.94 \\ 
        \hline
    \end{tabular}
\end{table}

\section{Discussion} \label{section: discussion}

Numerous methods have been proposed to construct efficient estimators for statistical parameters under a nonparametric model, but each of them has one or more of the following undesirable limitations: (i) their construction may require specialized expertise that is not accessible to most statisticians; (ii) for any given data set, there may be little guidance, if any, on how to select a key tuning parameter; and (iii) they may require stringent smoothness conditions, especially on derivatives. In this paper, we propose two sieve-like methods that can partially overcome these difficulties.

Our first approach, namely that based on HAL, can be further generalized to the case in which the flexible fit is an empirical risk minimizer over a function class assumed to contain the unknown function. The key condition \ref{AM} may be modified in that case as long as it ensures that certain perturbations of the unknown function still lie in that function class. We note that our methods may also be applied under semiparametric models.

A major direction for future work is to construct valid CIs without the knowledge of the influence function of the resulting plug-in estimator. The nonparametric bootstrap is in general invalid when the overall summary is not Hadamard differentiable and especially when the method relies on CV \citep{bickel1997,hall2013}, but a model-based bootstrap is a possible solution (Chapter~28 of \cite{VanderLaan2018}). In many cases only certain components of the true data-generating distribution must be estimated to obtain a plug-in estimator, while its variance may depend on other components that are not explicitly estimated. Therefore, generating valid model-based bootstrap samples is generally difficult.

Our proposed sieve-like methods may be used to construct efficient plug-in estimators for new applications in which the relevant theoretical results are difficult to derive. They may also inspire new methods to construct such estimators under weaker conditions.

\begin{appendices}

\section{Modification of chosen norm for evaluating the conditions: case study of mean counterfactual outcome} \label{appendix: change norm}

In this appendix, we consider a parameter that requires a modification in the chosen norm for evaluating the conditions. In particular, we discuss estimating counterfactual mean outcome in Example~\ref{example: ATE}.

Let $g_0: x \mapsto P_0(A=1|X=x)$ be the propensity score function. A natural choice of the loss function is $\ell(\theta): v \mapsto a[y-\theta(x)]^2$. Indeed, learning a function with this loss function is equivalent to fitting a function within the stratum of observations that received treatment 1. Unfortunately, this loss function does not satisfy Condition~\ref{Aquadraticloss} with $L^2(P_0)$-norm, because $P_0 \{\ell(\theta) - \ell(\theta_0)\} = P_0 \{g_0 \cdot (\theta-\theta_0)^2\}$ cannot be well approximated by $\alpha_{0,\ell} P_0 \{(\theta-\theta_0)^2\}/2$ for any constant $\alpha_{0,\ell}>0$ unless $g_0$ is a constant. One way to overcome this challenge is to choose the alternative inner product $\langle \theta_1, \theta_2 \rangle_{g_0} := P_0 \{g_0 \theta_1 \theta_2\}$ and its induced norm $\| \cdot \|_{g_0}$. In this case, Condition~\ref{Aquadraticloss} is satisfied once $\| \cdot \|$ is replaced by $\| \cdot \|_{g_0}$ in the condition statement. Under this choice, $\Psi'_{\theta_0}=P_0 (\theta-\theta_0)=\langle 1/g_0, \theta-\theta_0 \rangle_{g_0}$. We may redefine the corresponding $\dot{\Psi}$ similarly as the function that satisfies
$$\Psi'_{\theta_0} = \langle \dot{\Psi}, \theta-\theta_0 \rangle_{g_0},$$
and it immediately follows that $\dot{\Psi}=1/g_0$. Moreover, under a strong positivity condition, namely $g_0(X) \geq \delta_g >0$ a.s. for some $\delta_g$, which is a typical condition in causal inference literature \citep{VanderLaan2018,Yang2018}, then it is straightforward to show that $\delta_g \| \cdot \| \leq \| \cdot \|_{g_0} \leq \| \cdot \|$; that is, $\| \cdot \|_{g_0}$ is equivalent to $L^2(P_0)$-norm. Using this fact, it can be shown that all other conditions with respect to the $L^2(P_0)$-inner product are equivalent to the corresponding conditions with respect to $\langle \cdot,\cdot \rangle_{g_0}$.

Therefore, the data-adaptive series can be applied to estimation of the counterfactual mean outcome under our conditions for $L^2(P_0)$-inner product. If we use the targeted form in Remark~\ref{remark: targeted series}, then we need a flexible estimator of $g_0$ and the procedure is almost identical to a TMLE \citep{VanderLaan2018}. If we use the generalized data-adaptive series, we would require sufficient amount of smoothness for $g_0(\cdot)$. In the latter case, the change in norm when evaluating the conditions is a purely technical device and the estimation procedure is the same as would have been used if we had used the $L^2(P_0)$-norm. We also note that the same argument may be used to show that in Example~\ref{example: treatment effect heterogeneity}, with $\ell(\theta): v \mapsto a[z-\mu_1(x)]^2 + (1-a)[z-\mu_0(x)]^2$ being the usual squared-error loss, we may choose the alternative inner product $\langle \theta_1, \theta_2 \rangle_{g_0} := P_0 \{\theta_1^\top \cdot \mathrm{diag}(1-g_0,g_0) \cdot \theta_2\}$ and find that $\dot{\Psi} = (-2/(1-g_0) \cdot [(\mu_{01}-\mu_{00}) - P_0(\mu_{01}-\mu_{00})], 2/g_0 \cdot [(\mu_{01}-\mu_{00}) - P_0(\mu_{01}-\mu_{00})])^\top$, as we did in Section~\ref{section: sim2}.

\section{Additional conditions} \label{section: additional technical conditions}

Throughout the rest of this appendix, we use $C$ to denote a general absolute positive constant that can vary line by line.

\subsection{HAL} \label{section: HAL additional regularity conditions}

\begin{conditionB}[Empirical processes conditions]
    \label{AHALempiricalprocess}
    For any fixed $\vartheta \in \Theta_{\vnorm,M}$ and some $\Delta > 0$, it holds that $\ell(\theta)$, $\ell_{0}'[\theta-\theta_0]$ and $\{r[\theta-\theta_0] - r[\theta + \delta (\vartheta-\theta)-\theta_0]\}/\delta$ are c\`{a}dl\`{a}g for all $\theta \in \Theta_{\vnorm,M}$ and all $\delta \in [0,\Delta]$. Moreover, the following terms are all finite:
    $$\sup_{\theta \in \Theta_{\vnorm,M}} \| \ell(\theta) \|_{\vnorm}, \, \sup_{\theta \in \Theta_{\vnorm,M}} \| \ell_{0}'[\theta-\theta_0] \|_{\vnorm}, \, \sup_{\theta \in \Theta_{\vnorm,M}, \delta \in [0,\Delta]} \left\| \frac{r[\theta-\theta_0] - r[\theta -\theta_0 + \delta (\vartheta-\theta)]}{\delta} \right\|_{\vnorm}.$$
    In addition, $\| \ell_0'[\hat{\theta}_n - \theta_0] \|$ and $\sup_{\delta \in [0,\Delta]} \left\| \{r[\hat{\theta}_n-\theta_0] - r[\hat{\theta}_n -\theta_0 + \delta (\vartheta-\hat{\theta}_n)]\}/\delta \right\|$ converge to 0 in probability.
\end{conditionB}
\begin{conditionB}[Finite variance of influence function]
    \label{AHALfinitevar}
    $\xi^2 := \Var_{P_0} (\ell_{0}'[\dot{\Psi}](V))/\alpha_{0,\ell}^2 < \infty$.
\end{conditionB}

\subsection{Data-adaptive series} \label{section: series additional regularity conditions}

\begin{conditionC}[Local Lipschitz continuity of $\Pi_{n,\theta_0}(\id)$]
    \label{ALipschitzidentity}
    For sufficiently large $n$,
    $$\| \Pi_{n,\theta_0}(\id) \circ \theta - \Pi_{n,\theta_0}(\id) \circ \theta_0 \| \leq C \| \theta - \theta_0\|$$
    for all $\theta \in \Theta$ with $\| \theta - \theta_0 \| \leq n^{-1/4}$.
\end{conditionC}

\begin{conditionC}[Local Lipschitz continuity of $\dot{\psi}$ and $\Pi_{n,\theta_0}(\dot{\psi})$]
    \label{ALipschitzw}
    For sufficiently large $n$, for all $\theta \in \Theta$ with $\| \theta - \theta_0 \| \leq n^{-1/4}$,
    \begin{enumerate}[label=(\alph*)]
        \item \label{ALipschitzw first half} $\| \dot{\psi} \circ \theta - \dot{\psi} \circ \theta_0 \| \leq C \| \theta - \theta_0 \|$;
        \item $\| \Pi_{n,\theta_0}(\dot{\psi}) \circ \theta - \Pi_{n,\theta_0}(\dot{\psi}) \circ \theta_0 \| \leq C \| \theta - \theta_0\|$.
    \end{enumerate}
\end{conditionC}

\begin{conditionC}[Empirical process conditions]
    \label{Aempiricalprocess}
    There exists some constant $\Delta>0$ such that
    \begin{align*}
    & \sup_{\delta \in [0,\Delta]} \left| (P_n-P_0) \left\{ \frac{r[\theta_n^*-\theta_0] - r[\pi_n((1-\delta) \theta_n^* + \delta (\pm \dot{\Psi} + \theta_0))-\theta_0]}{\delta} \right\} \right|=o_p(n^{-1/2}), \\
    & (P_n-P_0) \ell_{0}'[(\pm \dot{\Psi} + \theta_0) - \pi_n(\pm \dot{\Psi} + \theta_0)]=o_p(n^{-1/2}), \\
    & (P_n-P_0) \ell_{0}'[\theta_n^*-\theta_0]=o_p(n^{-1/2}).
    \end{align*}
\end{conditionC}

\begin{conditionC}[Finite variance of influence function]
    \label{Afinitevar}
    $\xi^2 := \Var_{P_0} (\ell_{0}'[\dot{\Psi}](V))/\alpha_{0,\ell}^2 < \infty$.
\end{conditionC}

\subsection{Generalized data-adaptive series} \label{section: general series additional regularity conditions}

\begin{condition2}{ALipschitzidentity}[Local Lipschitz continuity of projected $\id$ for $\Theta_{n,\theta_0}$]
    \label{ALipschitzidentity2}
    For sufficiently large $n$, $\| \Pi_{n,\theta_0}(\id) \circ (\theta,\id_x) - \Pi_{n,\theta_0}(\id) \circ (\theta_0,\id_x) \| \leq C \| \theta - \theta_0\|$ for all $\| \theta - \theta_0 \| \leq n^{-1/4}$.
\end{condition2}

\begin{condition2}{ALipschitzw}[Local Lipschitz continuity of $\dot{\psi}$ and its projection for $\Theta_{n,\theta_0}$]
    \label{ALipschitzw2}
    For sufficiently large $n$, for all $\| \theta - \theta_0 \| \leq n^{-1/4}$,
    \begin{enumerate}[label=(\alph*)]
        \item \label{ALipschitzw2 first half} $\| \dot{\psi} \circ (\theta,\id_x) - \dot{\psi} \circ (\theta_0,\id_x) \| \leq C \| \theta - \theta_0 \|$;
        \item $\| \Pi_{n,\theta_0}(\dot{\psi}) \circ (\theta,\id_x) - \Pi_{n,\theta_0}(\dot{\psi}) \circ (\theta_0,\id_x) \| \leq C \| \theta - \theta_0\|$.
    \end{enumerate}
\end{condition2}

\subsection{Conditions for efficiency of the plug-in estimator} \label{section: regularity additional conditions}

Define a collection of submodels
$$\left\{ \{P_{H,\delta}: \delta \in B_H \subseteq \real\}: H \in \mathscr{H} \right\}$$
for which: (i) $\mathscr{H}$ is a subset of $L_0^2(P_0)$ and the $L_0^2(P_0)$-closure of its linear span is $L_0^2(P_0)$; and (ii) each $\{P_{H,\delta}: \delta \in B_H \subseteq \real\}$ is a regular univariate parametric submodel that passes through $P_0$ and has score $H$ for $\delta$ at $\delta=0$. For each $H \in \mathscr{H}$ and $\delta \in B_H$, we define $\theta_{H,\delta} \in \argmin_{\theta \in \Theta} P_{H,\delta} \ell(\theta)$. In this appendix, for all small $o$ and big $O$ notations, we let $\delta \rightarrow 0$ with $H$ fixed.

\begin{conditionE}[Sufficiently close risk minimizer]
    \label{Acloseminimizer}
    For any given $H \in \mathscr{H}$, $\|\theta_{H,\delta} - \theta_0\| = o(\delta^{1/2})$.
\end{conditionE}

\begin{conditionE}[Quadratic behavior of loss function remainder near 0]
    \label{Aempiricalprocesslike}
    For any given $H \in \mathscr{H}$ and $\vartheta$, there exists positive $\delta'=o(\delta)$ such that $(P_{H,\delta} - P_0) \{ r[(1-\delta')(\theta_{H,\delta} - \theta_0) + \delta' \dot{\Psi}] - r[\theta_{H,\delta} - \theta_0] \}/\delta' = o(\delta)$.
\end{conditionE}

\section{Discussion of technical conditions for data-adaptive series and its generalization} \label{section: condition discussion}

\subsection{Theorem~\ref{Tefficiency}}

\bfitem{Condition~\ref{Aestimation}} usually imposes an upper bound on the growth rate of $K$. To see this, we show that Condition~\ref{Aestimation} is equivalent to a term being $o_p(n^{-1/4})$, and an upper bound of this term is controlled by $K$. Let $\theta_n^\dagger \in \argmin_{\theta \in \Theta_n} P_0 \ell(\theta)$ be the true-risk minimizer in $\Theta_n$. Under Conditions~\ref{Aquadraticloss}, \ref{Ainit}, \ref{Aapproxidentity} and \ref{ALipschitzidentity}, by Lemma~\ref{Lestimation}, it follows that Condition~\ref{Aestimation} is equivalent to requiring that $\| \theta_n^* - \theta_n^\dagger \|=o_p(n^{-1/4})$. Note that $\theta_n^*$ minimizes the empirical risk in $\Theta_n$, and M-estimation theory \citep{vandervaart1996} can show that $\| \theta_n^* - \theta_n^\dagger \|$ can be upper bounded by an empirical process term, whose upper bound is related to the complexity of $\Theta_n$, namely how fast $K$ grows with sample size. To ensure this bound is $o_p(n^{-1/4})$, $K$ must not grow too quickly.

\vspace{0.5\baselineskip}

\bfitem{Condition~\ref{Aapproxidentity}} assumes that the identity function can be well approximated by the series $\phi_k$ with the specified number of terms $K$ in the $L^2(P_{\theta_0})$ sense. If $\text{Span}\{\phi_1,\ldots,\phi_K\}$ does not contain $\id$ for any $K$, then sufficiently many terms must be included to satisfy this condition; that is, this condition imposes a lower bound on the rate at which $K$ should grow with $n$. Even if $\text{Span}\{\phi_1,\ldots,\phi_K\}$ does contain $\id$ for some finite $K$, this condition still requires that $K$ is not too small.

\vspace{0.5\baselineskip}

\bfitem{Condition~\ref{Aapproxw}} is implied by the following condition in view of Lemma~\ref{Lrate}:
\begin{conditionsufficient}{Aapproxw}
    \label{Aapproxwsufficient}
    $\| [\dot{\psi} - \Pi_{n,\theta_0}(\dot{\psi})] \circ \theta_0 \|=o(n^{-1/4})$.
\end{conditionsufficient}
This condition is similar to Condition~\ref{Aapproxidentity}. However, in general, we do not expect $\dot{\psi}$ to be contained in $\text{Span}\{\phi_1,\ldots,\phi_K\}$ for any $K$, and hence this condition generally imposes a lower bound on the rate of $K$. Note that Condition~\ref{Aapproxwsufficient} is stronger than Condition~\ref{Aapproxw}, and there are interesting examples where \ref{Aapproxw} holds but \ref{Aapproxwsufficient} fails to hold. Indeed, if $\theta_n^*$ converges to $\theta_0$ at a rate much faster than $n^{-1/4}$, then \ref{Aapproxw} can be satisfied even if $\| [\dot{\psi} - \Pi_{n,\theta_0}(\dot{\psi})] \circ \theta_0 \|$ decays to zero in probability relatively slowly --- that is, the convergence rate of $\theta_n^*$ can compensate for the approximation error of $\dot{\psi}$. This is one way in which we can benefit from using flexible ML algorithms to estimate $\theta_0$: if $\theta_n^0$ converges to $\theta_0$ at a fast rate, then we can expect $\theta_n^*$ to also have a fast convergence rate.

\vspace{0.5\baselineskip}

\bfitem{Conditions~\ref{Aestimation}, \ref{Aapproxidentity} and \ref{Aapproxw}} are not stringent provided sufficient smoothness on derivatives of $\dot{\psi}$ and a reasonable series. For example, as noted in \cite{Chen2007}, when $\dot{\psi}$ has a bounded $p$-th order derivative and the polynomial, trigonometric series or spline with degree at least $p+1$ is used, then if $K^2/n \rightarrow 0$ ($K^3/n \rightarrow 0$ for polynomial series), the term in Condition~\ref{Aestimation} is $O_p(\sqrt{K/n})$; the terms in Condition~\ref{Aapproxidentity} and the sufficient Condition~\ref{Aapproxwsufficient} are $O(K^{-p/q})$. Therefore, we can select $K$ to grow at a rate faster than $n^{q/(4p)}$ and slower than $n^{1/2}$ ($n^{1/3}$ for polynomial series). If $p$ is large, then this allows for a wide range of rates for $K$. Typically $\dot{\Psi}$ (and hence $\dot{\psi}$) is only related to the summary of interest $\Psi$ but not the true function $\theta_0$. For example, for the summary $\Psi(\theta)=P_0 (f \circ \theta)$ at the beginning of Section~\ref{section: simple case method}, $\dot{\psi}=f'$ is variation independent of $\theta_0$. It is often the case that $\Psi$ is smooth and so is $\dot{\psi}$, so $p$ is often sufficiently large for this window to be wide.

\vspace{0.5\baselineskip}

\bfitem{Condition~\ref{ALipschitzidentity}} is usually easy to satisfy. Since $\Pi_{n,\theta_0}(\id)$ is a linear combination of $\{\phi_k: k \in \{1,\ldots,K\} \}$ and is an approximation of a highly smooth function $\id$, if the series $\phi_k$ is smooth, then we can expect that $\Pi_{n,\theta_0}(\id)$ will be Lipschitz uniformly over $n$, that is, that Condition~\ref{ALipschitzidentity} holds. For example, using polynomial series, cubic splines or trigonometric series imply that this condition holds.

\vspace{0.5\baselineskip}

\bfitem{Condition~\ref{ALipschitzw}} imposes Lipschitz continuity conditions on $\dot{\psi}$ and $\Pi_{n,\theta_0}(\dot{\psi})$ uniformly over $n$. The Lipschitz continuity of $\dot{\psi}$ has been discussed above. As for $\Pi_{n,\theta_0}(\dot{\psi})$, similarly to Condition~\ref{ALipschitzidentity}, as long as the series $\phi_k$ being used is smooth, $\Pi_{n,\theta_0}(\dot{\psi})$ would be Lipschitz continuous uniformly over $n$.

\subsection{Theorem~\ref{Tefficiency2}}

The conditions are similar to those in Theorem~\ref{Tefficiency}. However, Condition~\ref{Aapproxw2} can be more stringent than Condition~\ref{Aapproxw}. For generalized data-adaptive series, the dimension of the argument of the series is larger. Hence, as noted in \cite{Chen2007}, \ref{Aapproxw2} may require more smoothness of $\dot{\psi}$ in order that $\dot{\psi}$ can be well approximated by $\Pi_{n,\theta_0}(\dot{\psi})$. However, in general, we do not expect the smoothness of $\dot{\psi}$ to depend on $\Psi$ alone but no components of $P_0$, so the amount of smoothness of $\dot{\psi}$ may be more limited in practice.

It is also worth noting that, similarly to Theorem~\ref{Tefficiency}, a sufficient condition for Condition~\ref{Aapproxw2} is the following:
\begin{conditionsufficient2}{Aapproxw}
    \label{Aapproxw2sufficient}
    $\| [\dot{\psi} - \Pi_{n,\theta_0}(\dot{\psi})] \circ (\theta_0,\id_x) \| = o(n^{-1/4})$.
\end{conditionsufficient2}

\section{Lemmas and technical proofs} \label{appendix: proof}

\subsection{Highly Adaptive Lasso (HAL)} \label{appendix: HAL}

\begin{proof}[Proof of Theorem~\ref{THALefficiency}]
    Under Conditions~\ref{Aquadraticloss} and \ref{Acadlag}--\ref{AHALempiricalprocess}, Lemma~1 and its corollary in \cite{VanderLaan2017} show that $\| \hat{\theta}_n - \theta_0 \|=o_p(n^{-1/4})$.
    
    We show that the small perturbations of $\hat{\theta}_n$ in certain directions are contained in $\Theta_{\vnorm,M}$. Let $\vartheta_\delta=\hat{\theta}_n+\delta (\dot{\Psi}+\theta_0-\hat{\theta}_n)$ be a path indexed by $\delta$ $(0 \leq \delta < 1)$ that is a perturbation of $\hat{\theta}_n$. Note that for all $\delta$, $\vartheta_\delta$ is c\`{a}dl\`{a}g by Condition~\ref{Acadlag} and we have that
    $$\| \vartheta_\delta \|_{\vnorm} = \| (1-\delta) \hat{\theta}_n + \delta (\dot{\Psi} + \theta_0) \|_{\vnorm} \leq (1-\delta) \| \hat{\theta}_n \|_{\vnorm} + \delta (\| \dot{\Psi} \|_{\vnorm} + \| \theta_0 \|_{\vnorm}) \leq (1-\delta) M + \delta M =M$$
    by Condition~\ref{AM}. Hence $\vartheta_\delta \in \Theta_{\vnorm,M}$. The same result holds for the path $\tilde{\vartheta}_\delta := \hat{\theta}_n+\delta (-\dot{\Psi}+\theta_0-\hat{\theta}_n)$.
    
    Combining this observation with the $P_0$-Donkser property of $\Theta_{\vnorm,M'}$ for any fixed $M'>0$ \citep{gill1993} and Conditions~\ref{Adloss}--\ref{Aquadraticloss}, \ref{AHALfinitevar}, we have that all of the conditions of Theorem~1 in \cite{Shen1997} are satisfied with all sieves being $\Theta_{\vnorm,M}$. The desired asymptotic linearity result follows. The efficiency result is shown in Appendix~\ref{appendix: regular proof}.
\end{proof}

\begin{proof}[Proof of Lemma~\ref{Lvarnorm}]
    Recall that $\mathcal{X} \subseteq \real^d$. Similar to $x^{(l)}$, let $x^{(u)}=\inf \{x: P_0(X \leq x) = 1\}$ where $\inf$ and $\leq$ are entrywise. To avoid clumsy notations, in this proof we drop the subscript in $\theta_0$ and use $\theta$ instead. This should not introduce confusion because other functions (e.g., an estimator of $\theta_0$) are not involved in the statement or proof. Using the results reviewed in Section~\ref{section: hal review},
    \begin{align*}
    \| \dot{\Psi} \|_{\vnorm} &= | \dot{\Psi}(x^{(\ell)}) | + \sum_{s \subseteq \{1,2,\ldots,d\}, s \neq \emptyset} \int_{x^{(\ell)}_s}^{x^{(u)}_s} | \dot{\Psi}_s (du) | \\
    &= | \dot{\Psi}(x^{(\ell)}) | + \sum_{s \subseteq \{1,2,\ldots,d\}, s \neq \emptyset} \int_{x^{(\ell)}_s}^{x^{(u)}_s} | \dot{\psi}'(z)| \Big|_{z=\theta_s(u)} | \theta_s (du) |.
    \end{align*}
    Since
    \begin{align*}
    |\theta(x)| &= \left| \theta(x^{(\ell)}) + \sum_{s \subseteq \{1,2,\ldots,d\}, s \neq \emptyset} \int_{x^{(\ell)}_s}^{x_s} \theta_s (du) \right| \\
    &\leq | \theta(x^{(\ell)}) | + \sum_{s \subseteq \{1,2,\ldots,d\}, s \neq \emptyset} \int_{x^{(\ell)}_s}^{x_s} | \theta_s (du) | \\
    &\leq | \theta(x^{(\ell)}) | + \sum_{s \subseteq \{1,2,\ldots,d\}, s \neq \emptyset} \int_{x^{(\ell)}_s}^{x^{(u)}_s} | \theta_s (du) | = \| \theta \|_{\vnorm},
    \end{align*}
    we have $| \dot{\psi}'(z)| \Big|_{z=\theta_s(u)} \leq \sup_{z': |z'| \leq \| \theta_0 \|_{\vnorm}} |\dot{\psi}'(z')| =B$ for all $x^{(\ell)} \leq u \leq x^{(u)}$, so
    $$\| \dot{\Psi} \|_{\vnorm} \leq | \dot{\Psi}(x^{(\ell)}) | + \sum_{s \subseteq \{1,2,\ldots,d\}, s \neq \emptyset} \int_{x^{(\ell)}_s}^{x^{(u)}_s} B | \theta_s (du) | \leq | \dot{\Psi}(x^{(\ell)}) | + B \| \theta_0 \|_{\vnorm}.$$
\end{proof}

\begin{lemma}[CV-selected bound not much smaller than the bound of the true function's variation norm]
    \label{Lcvbound}
    Suppose that Condition~\ref{Acadlag} holds, $\theta_0$ is c\`{a}dl\`{a}g, $\| \theta_0 \|_{\vnorm}<\infty$ and for any $M$, $\sup_{\theta \in \Theta_{\vnorm,M}} \| \ell(\theta) \| < \infty$. Let $M_n$ be a (possibly random) sequence such that $P_0 \{ \ell(\hat{\theta}_{n,M_n}) - \ell(\theta_0) \}=o_p(1)$. Then for any $\epsilon>0$, with probability tending to one, $M_n \geq \| \theta_0 \|_{\vnorm} - \epsilon$. Therefore, for any fixed $\epsilon>0$, with probability tending to one, $M_n + \epsilon \geq (\| \theta_0 \|_{\vnorm} - \epsilon) + \epsilon = \| \theta_0 \|_{\vnorm}$.
\end{lemma}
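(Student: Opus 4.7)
The plan is to argue by contradiction via a uniform lower bound on the population excess risk over the restricted variation-norm ball. Fix $\epsilon > 0$ and set $M' := \|\theta_0\|_{\vnorm} - \epsilon$. If the claim fails, then along some subsequence we have $\prob(M_n \leq M') \not\to 0$, and on that event $\hat{\theta}_{n,M_n} \in \Theta_{\vnorm, M'}$. Define
\[
\eta \; := \; \inf_{\theta \in \Theta_{\vnorm, M'}} P_0\bigl\{ \ell(\theta) - \ell(\theta_0) \bigr\}.
\]
If one can show $\eta > 0$, then on $\{M_n \leq M'\}$ the excess risk $P_0\{\ell(\hat\theta_{n,M_n}) - \ell(\theta_0)\}$ is at least $\eta$, contradicting the hypothesis that this excess risk is $o_p(1)$. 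The entire proof therefore reduces to establishing $\eta > 0$, after which the final sentence of the lemma follows by trivially rewriting.

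To prove $\eta > 0$, I would combine three ingredients. First, since $\|\theta_0\|_{\vnorm} = M' + \epsilon > M'$, we have $\theta_0 \notin \Theta_{\vnorm, M'}$; combined with $\theta_0$ being the (essentially unique) population risk minimizer in $\Theta$, the pointwise excess risk is strictly positive on $\Theta_{\vnorm,M'}$. Second, the c\`adl\`ag-plus-bounded-variation class $\Theta_{\vnorm, M'}$ is uniformly bounded in sup norm (since $|\theta(x)| \leq \|\theta\|_{\vnorm} \leq M'$), is $P_0$-Donsker by Gill et al.\ (1993), and is closed under $L^2(P_0)$-limits via a Helly-type selection argument; hence it is compact in $L^2(P_0)$. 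Third, the assumption $\sup_{\theta \in \Theta_{\vnorm, M}} \|\ell(\theta)\| < \infty$ together with the uniform sup-norm bound on elements of $\Theta_{\vnorm, M'}$ yields lower semicontinuity of $\theta \mapsto P_0 \ell(\theta)$ with respect to $L^2(P_0)$-convergence on this class (via uniform integrability/dominated convergence). The infimum of a lower semicontinuous, strictly positive function on a compact set is attained and strictly positive, so $\eta > 0$.

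The main obstacle is establishing the lower semicontinuity in step three. The lemma's hypotheses only provide an $L^2(P_0)$-boundedness of $\ell(\theta)$ over variation-norm balls, not explicit continuity of $\ell$ in $\theta$; to pass from $L^2(P_0)$-convergence of $\theta_m \to \theta$ to convergence (or lim inf behavior) of $P_0 \ell(\theta_m)$, one needs mild pointwise continuity of $v \mapsto \ell(\theta)(v)$ in $\theta$, which holds for the squared-error, logistic, and negative log-likelihood losses highlighted after Condition~\ref{Aquadraticloss} but is not explicitly listed among the present hypotheses. In a careful write-up I would either invoke such continuity as an implicit feature of the loss classes covered by Conditions~\ref{Adloss}--\ref{Aquadraticloss}, or supplement the lemma statement with a minimal continuity assumption on $\ell$. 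A secondary, minor subtlety is the uniqueness of the population minimizer: Condition~\ref{Aquadraticloss}'s local quadratic behavior forces $\theta_0$ to be the unique minimizer in $L^2(P_0)$ on any connected set in $\Theta$, which combined with the c\`adl\`ag representative structure rules out any other element of $\Theta_{\vnorm, M'}$ achieving zero excess risk.
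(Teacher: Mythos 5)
Your proof follows essentially the same route as the paper's: argue by contradiction and bound the excess risk of $\hat{\theta}_{n,M_n}$ from below, on the event $M_n \leq \|\theta_0\|_{\vnorm}-\epsilon$, by the strictly positive population excess risk attainable within $\Theta_{\vnorm,\|\theta_0\|_{\vnorm}-\epsilon}$, which contradicts $P_0\{\ell(\hat{\theta}_{n,M_n})-\ell(\theta_0)\}=o_p(1)$. The only difference is that the paper simply asserts this lower bound is a positive constant (via the minimizer $\theta_{0,\|\theta_0\|_{\vnorm}-\epsilon}$ and the fact that $\theta_0$ lies outside the smaller ball), whereas you spend most of your effort justifying that positivity through compactness and lower semicontinuity --- detail the paper omits, and where your flagged need for a mild continuity property of $\ell$ is indeed a fair observation rather than a defect of your argument.
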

\begin{proof}[Proof of Lemma~\ref{Lcvbound}]
    We prove by contradiction. Suppose the claim is not true, i.e. there exists $\epsilon, \delta > 0$ such that $P(M_n < \| \theta_0 \|_{\vnorm} - \epsilon) \geq \delta$ for all $n \in \mathcal{N}$, where $\mathcal{N}$ is an infinite set. Let $\theta_{0,M} \in \argmin_{\theta \in \Theta_{\vnorm,M}} P_0 \ell(\theta)$. Then for all $n \in \mathcal{N}$, with probability at least $\delta$,
    \begin{align*}
    P_0 \{ \ell(\hat{\theta}_{n,M_n}) - \ell(\theta_0) \} &= P_0 \{ \ell(\hat{\theta}_{n,M_n}) - \ell(\theta_{0,M_n}) \} + P_0 \{ \ell(\theta_{0,M_n}) - \ell(\theta_0) \} \\
    &\geq P_0 \{ \ell(\theta_{0,M_n}) - \ell(\theta_0) \} \\
    &\geq P_0 \{ \ell(\theta_{0,\| \theta_0 \|_{\vnorm} - \epsilon}) - \ell(\theta_0) \},
    \end{align*}
    which is a positive constant since the function class $\Theta_{\|\theta_0\|_{\vnorm} - \epsilon}$ does not contain $\theta_0$ and this term is non-negligible bias. This contradicts the assumption that $P_0 \{ \ell(\hat{\theta}_{n,M_n}) - \ell(\theta_0) \}=o_p(1)$ and hence the desired follows.
\end{proof}

Therefore, if $\| \dot{\Psi} \|_{\vnorm} \leq F(\| \theta_0 \|_{\vnorm})$ for a known increasing function $F$, then with probability tending to one, $M_n + \epsilon + F(M_n+\epsilon)$ is a valid bound on $\| \hat{\theta}_n \|_\vnorm$ that can be used to obtain an efficient plug-in estimator. Moreover, if the bound is loose, i.e. $\| \dot{\Psi} \|_{\vnorm} < F(\| \theta_0 \|_{\vnorm})$, and $F$ is continuous, then there exists some $\epsilon>0$ such that $\| \dot{\Psi} \|_{\vnorm} \leq F(\| \theta_0 \|_{\vnorm} - \epsilon) - \epsilon$ and hence $\|\theta_0\|_{\vnorm} + \| \dot{\Psi} \|_{\vnorm} \leq M_n + F(M_n)$ with probability tending to one.

Note that this lemma only concerns learning a function-valued feature but not estimating $\Psi(\theta_0)$. There are examples where $\dot{\Psi}$ depends on components of $P_0$, say $\eta_0$, other than $\theta_0$. However, if $\eta_0$ can be learned via HAL, then Lemma~\ref{Lcvbound} can be applied. Therefore, if it is known that $\| \dot{\Psi} \|_{\vnorm} \leq F(\| \theta_0 \|_{\vnorm}, \| \eta_0 \|_{\vnorm})$ for a known increasing function $F$, then we can use a bound on $\| \hat{\theta}_n \|_\vnorm$ obtained in a similar fashion as above from the sequence $M_n$ to construct an efficient plug-in estimator $\Psi(\hat{\theta}_n)$.

Now consider obtaining $M_n$ by $k$-fold CV from a set of candidate bounds. Then, under Conditions~\ref{Acadlag}--\ref{AHALempiricalprocess}, by (i) Lemma 1 and its corollary of \cite{VanderLaan2017}, and (ii) the oracle inequality for $k$-fold CV in \cite{Vanderlaan2003cv}, $P_0 \{ \ell(\hat{\theta}_{n,M_n}) - \ell(\theta_0) \}=o_p(n^{-1/4})$ if (i) one candidate bound is no smaller than $\| \theta_0 \|_\vnorm$, and (ii) the number of candidate bounds is fixed. Therefore, the above results apply to this case.

\subsection{Data-adaptive series estimation} \label{appendix: sieve}

We first present and prove two lemmas that lead to Theorems~\ref{Tefficiency} and \ref{Tefficiency2}.

\begin{lemma}[Convergence rate of the sieve estimator]
    \label{Lrate}
    Under Conditions~\ref{Ainit}, \ref{Aapproxidentity} and \ref{ALipschitzidentity}, $\| \pi_n(\theta_0) - \theta_0 \|=o_p(n^{-1/4})$. Under an additional condition \ref{Aestimation}, $\| \theta_n^* -\theta_0 \|=o_p(n^{-1/4})$.
\end{lemma}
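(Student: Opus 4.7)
The plan is to prove both claims via triangle-inequality decompositions built around the natural intermediate object $\Pi_{n,\theta_0}(\id) \circ \theta_n^0$, which lies in $\Theta_n$ since $\Pi_{n,\theta_0}(\id) \in \text{Span}\{\phi_1,\ldots,\phi_K\}$ and $\Theta_n = \text{Span}\{\phi_1,\ldots,\phi_K\} \circ \theta_n^0$. The whole argument should be a few lines once the right intermediate is identified; there is no hard analytic work.

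For the first claim, I would start from the projection property: since $\pi_n(\theta_0)$ is the $L^2(P_0)$-closest element of $\Theta_n$ to $\theta_0$, we have
\begin{equation*}
\|\pi_n(\theta_0) - \theta_0\| \;\leq\; \|\Pi_{n,\theta_0}(\id) \circ \theta_n^0 - \theta_0\|.
\end{equation*}
Next, split the right-hand side as
\begin{equation*}
\|\Pi_{n,\theta_0}(\id) \circ \theta_n^0 - \Pi_{n,\theta_0}(\id) \circ \theta_0\| + \|\Pi_{n,\theta_0}(\id) \circ \theta_0 - \theta_0\|.
\end{equation*}
Condition~\ref{Ainit} gives $\|\theta_n^0 - \theta_0\| = o_p(n^{-1/4})$, so with probability tending to one $\|\theta_n^0 - \theta_0\| \leq n^{-1/4}$, and Condition~\ref{ALipschitzidentity} then bounds the first term by $C\|\theta_n^0 - \theta_0\| = o_p(n^{-1/4})$. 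Condition~\ref{Aapproxidentity} bounds the second term by $o(n^{-1/4})$. Combining these yields the first conclusion.

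For the second claim, one more triangle inequality suffices:
\begin{equation*}
\|\theta_n^* - \theta_0\| \;\leq\; \|\theta_n^* - \pi_n(\theta_0)\| + \|\pi_n(\theta_0) - \theta_0\|,
\end{equation*}
where the first summand is $o_p(n^{-1/4})$ by Condition~\ref{Aestimation} and the second is $o_p(n^{-1/4})$ by the part just proved.

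There is no real obstacle here; the only subtle point is the choice of the comparator $\Pi_{n,\theta_0}(\id) \circ \theta_n^0$, which converts the data-adaptive sieve $\Theta_n$ (defined through $\theta_n^0$) back into a problem about how well $\text{Span}\{\phi_1,\ldots,\phi_K\}$ approximates $\id$ in $L^2(P_{\theta_0})$, at the cost of a Lipschitz correction that is exactly what Condition~\ref{ALipschitzidentity} supplies. I would state the conclusion $\|\theta_n^0 - \theta_0\| \leq n^{-1/4}$ as holding on an event of probability tending to one so that the application of Condition~\ref{ALipschitzidentity}, which is only assumed on that ball, is unambiguous.
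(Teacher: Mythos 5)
Your proof is correct, and it rests on the same key idea as the paper's: the comparator $\Pi_{n,\theta_0}(\id) \circ \theta_n^0 \in \Theta_n$, controlled through Condition~\ref{ALipschitzidentity} (applied on the event $\|\theta_n^0-\theta_0\| \leq n^{-1/4}$, which has probability tending to one by Condition~\ref{Ainit}) together with the approximation Condition~\ref{Aapproxidentity}. The route differs slightly in its bookkeeping: the paper decomposes $\|\pi_n(\theta_0)-\theta_0\|$ into three terms, $\|\theta_0-\theta_n^0\| + \|\theta_n^0-\pi_n(\theta_n^0)\| + \|\pi_n(\theta_n^0)-\pi_n(\theta_0)\|$, bounding the middle term by $\|\theta_n^0 - \Pi_{n,\theta_0}(\id)\circ\theta_n^0\|$ and then comparing it to $\|\theta_0-\Pi_{n,\theta_0}(\id)\circ\theta_0\|$ via a reverse-triangle argument, and bounding the last term using non-expansiveness of the projection. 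You instead invoke projection optimality once, $\|\pi_n(\theta_0)-\theta_0\| \leq \|\Pi_{n,\theta_0}(\id)\circ\theta_n^0 - \theta_0\|$, and then need only a two-term split; this is a genuine streamlining that dispenses with the contraction property of $\pi_n$ and the reverse-triangle step while using exactly the same hypotheses. Your handling of the second claim (triangle inequality plus Condition~\ref{Aestimation}) coincides with the paper's. The only implicit assumption, shared with the paper's own argument, is that $\theta_n^0$ lies in $\Theta$ so that Condition~\ref{ALipschitzidentity} may be applied to it, and your explicit remark about working on the high-probability event makes that application clean.
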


\begin{proof}[Proof of Lemma~\ref{Lrate}]
    By triangle inequality, $\| \pi_n(\theta_0)-\theta_0 \| \leq \| \theta_0 - \theta_n^0 \| + \| \theta_n^0 - \pi_n(\theta_n^0) \| + \| \pi_n(\theta_n^0) - \pi_n(\theta_0) \|$. We bound these three terms separately.
    
    \bfitem{Term 1}: By Condition~\ref{Ainit}, $\| \theta_0 - \theta_n^0 \|=o_p(n^{-1/4})$.
    
    \bfitem{Term 2}: By the definition of projection operator,
    $$\| \theta_n^0 - \pi_n(\theta_n^0) \| = \| \theta_n^0 - \Pi_{n,\theta_n^0} (\id) \circ \theta_n^0 \| \leq \| \theta_n^0 - \Pi_{n,\theta_0} (\id) \circ \theta_n^0 \|.$$
    We bound the right-hand side by showing this term is close to $\| \theta_0 - \Pi_{n,\theta_0}(\id) \circ \theta_0 \|$ up to an $o_p(n^{-1/4})$ term. By the reverse triangle inequality and the triangle inequality,
    \begin{align*}
    & \left| \| \theta_n^0 - \Pi_{n,\theta_0} (\id) \circ \theta_n^0 \| - \| \theta_0 - \Pi_{n,\theta_0} (\id) \circ \theta_0 \| \right| \\
    &\quad \leq \| [\theta_n^0 - \Pi_{n,\theta_0} (\id) \circ \theta_n^0] - [\theta_0 - \Pi_{n,\theta_0} (\id) \circ \theta_0] \| \\
    &\quad = \| [\theta_n^0 - \theta_0] - [\Pi_{n,\theta_0} (\id) \circ \theta_n^0 - \Pi_{n,\theta_0}(\id) \circ \theta_0] \| \\
    &\quad \leq \| \theta_n^0 - \theta_0 \| + \| \Pi_{n,\theta_0} (\id) \circ \theta_n^0 - \Pi_{n,\theta_0}(\id) \circ \theta_0 \| \\
    &\quad \leq \| \theta_n^0 - \theta_0 \| + C \| \theta_n^0 - \theta_0 \|, & \text{(Condition~\ref{ALipschitzidentity})}
    \end{align*}
    which is $o_p(n^{-1/4})$ by Condition~\ref{Ainit}. Therefore, by Condition~\ref{Aapproxidentity},
    $$\| \theta_n^0 - \pi_n(\theta_n^0) \| \leq \| \theta_n^0 - \Pi_{n,\theta_0} (\id) \circ \theta_n^0 \| \leq \| \theta_0 - \Pi_{n,\theta_0} (\id) \circ \theta_0 \| + o_p(n^{-1/4}) = o_p(n^{-1/4}).$$
    
    \bfitem{Term 3}: By the definition of projection and Condition~\ref{Ainit}, $\| \pi_n(\theta_n^0) - \pi_n(\theta_0) \| \leq \| \theta_n^0 - \theta_0 \| = o_p(n^{-1/4})$.
    
    \bfitem{Conclusion from the three bounds}: $\| \pi_n(\theta_0) - \theta_0 \|=o_p(n^{-1/4})$.
    
    If, in addition, Condition~\ref{Aestimation} also holds, then $\| \theta_n^* -\theta_0 \| \leq \| \pi_n(\theta_0) - \theta_0 \| + \| \theta_n^* - \pi_n(\theta_0) \| = o_p(n^{-1/4})$.
\end{proof}

The same result holds for the generalized data-adaptive series under Conditions~\ref{Ainit}, \ref{ALipschitzidentity2}, \ref{Aapproxidentity2} and \ref{Aestimation} (if relevant). The proof is almost identical and is therefore omitted.

\begin{lemma}[Approximation error to $\dot{\psi}$]
    \label{Lapproxw}
    Under Condition~\ref{ALipschitzw}, $\| \dot{\psi} \circ \theta_0 - \pi_n (\dot{\psi} \circ \theta_0) \| \leq C \| \theta_n^0-\theta_0 \| + \| \dot{\psi} \circ \theta_0 - \Pi_{n,\theta_0}(\dot{\psi}) \circ \theta_0 \|$. Therefore, under Conditions~\ref{Ainit}--\ref{Aapproxw}, $\| \dot{\psi} \circ \theta_0 - \pi_n (\dot{\psi} \circ \theta_0) \| \cdot \| \theta_n^*-\theta_0 \|=o_p(n^{-1/2})$.
\end{lemma}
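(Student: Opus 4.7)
The plan is to prove the first (deterministic) inequality by using the best-approximation property of the projection $\pi_n$, and then derive the second ($o_p$-rate) claim as an immediate corollary by plugging in the available rates from Condition~\ref{Ainit}, Condition~\ref{Aapproxw} and Lemma~\ref{Lrate}.

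For the first inequality, the key observation is that $\Pi_{n,\theta_0}(\dot{\psi}) = \sum_{j=1}^{K} \beta_j \phi_j$ is a linear combination of the basis functions $\phi_1,\ldots,\phi_K$, so its composition with the random function $\theta_n^0$, namely $\Pi_{n,\theta_0}(\dot{\psi}) \circ \theta_n^0$, lies in $\Theta_{n,\theta_n^0} = \text{Span}\{\phi_1,\ldots,\phi_K\} \circ \theta_n^0$. Since $\pi_n(\dot{\psi}\circ\theta_0)$ is the $L^2(P_0)$-best approximation to $\dot{\psi}\circ\theta_0$ within $\Theta_{n,\theta_n^0}$, I obtain
\[
\| \dot{\psi} \circ \theta_0 - \pi_n(\dot{\psi}\circ\theta_0) \| \le \| \dot{\psi}\circ\theta_0 - \Pi_{n,\theta_0}(\dot{\psi}) \circ \theta_n^0 \|.
\]
I then insert the intermediate term $\Pi_{n,\theta_0}(\dot{\psi}) \circ \theta_0$ and apply the triangle inequality to split the right-hand side into $\| \dot{\psi}\circ\theta_0 - \Pi_{n,\theta_0}(\dot{\psi})\circ\theta_0\|$ plus $\|\Pi_{n,\theta_0}(\dot{\psi})\circ\theta_0 - \Pi_{n,\theta_0}(\dot{\psi})\circ\theta_n^0\|$. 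On the event $\{\|\theta_n^0-\theta_0\|\le n^{-1/4}\}$, Condition~\ref{ALipschitzw}(b) bounds the second piece by $C\|\theta_n^0-\theta_0\|$, giving the claimed inequality.

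For the second claim, I multiply through by $\|\theta_n^*-\theta_0\|$ to get
\[
\| \dot{\psi}\circ\theta_0 - \pi_n(\dot{\psi}\circ\theta_0)\| \cdot \|\theta_n^*-\theta_0\| \le C\,\|\theta_n^0-\theta_0\|\cdot\|\theta_n^*-\theta_0\| + \|\dot{\psi}\circ\theta_0 - \Pi_{n,\theta_0}(\dot{\psi})\circ\theta_0\|\cdot\|\theta_n^*-\theta_0\|.
\]
The second summand is $o_p(n^{-1/2})$ directly by Condition~\ref{Aapproxw}. For the first, Condition~\ref{Ainit} gives $\|\theta_n^0-\theta_0\|=o_p(n^{-1/4})$, and Lemma~\ref{Lrate} (whose hypotheses are Conditions~\ref{Ainit},~\ref{Aestimation},~\ref{Aapproxidentity} and~\ref{ALipschitzidentity}, all assumed available here) gives $\|\theta_n^*-\theta_0\|=o_p(n^{-1/4})$, so the product is $o_p(n^{-1/2})$.

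There is no real obstacle here; the only subtlety is the bookkeeping distinction between $\Theta_{n,\theta_n^0}$ (the data-dependent sieve on which $\pi_n$ projects) and $\Theta_{n,\theta_0}$ (the population analog used to define $\Pi_{n,\theta_0}$), which is resolved by noting that $\Pi_{n,\theta_0}(\dot{\psi})\circ\theta_n^0$ already belongs to $\Theta_{n,\theta_n^0}$ and so is a legitimate competitor in the projection inequality. The event on which the Lipschitz bound in Condition~\ref{ALipschitzw}(b) applies has probability tending to one by Condition~\ref{Ainit}, which is all that is needed to turn the deterministic inequality into the required $o_p(n^{-1/2})$ statement.
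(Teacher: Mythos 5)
Your proof is correct, and it follows the same basic strategy as the paper --- exploit the best-approximation property of $\pi_n$ with a competitor built from $\Pi_{n,\theta_0}(\dot{\psi})$, then invoke the local Lipschitz condition --- but your decomposition is slightly leaner. The paper first compares $\dot{\psi}\circ\theta_0$ to $\pi_n(\dot{\psi}\circ\theta_n^0)$, splits off $\|\dot{\psi}\circ\theta_0-\dot{\psi}\circ\theta_n^0\|$ (which uses part (a) of Condition~\ref{ALipschitzw}), and then controls $\|\dot{\psi}\circ\theta_n^0-\pi_n(\dot{\psi}\circ\theta_n^0)\|$ by a reverse-triangle argument that uses both parts (a) and (b); you instead take $\Pi_{n,\theta_0}(\dot{\psi})\circ\theta_n^0\in\Theta_{n,\theta_n^0}$ directly as the competitor in the projection inequality and insert $\Pi_{n,\theta_0}(\dot{\psi})\circ\theta_0$, so only part (b) of Condition~\ref{ALipschitzw} is needed for the first inequality. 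Both routes produce the same bound, and your passage to the $o_p(n^{-1/2})$ claim is identical to the paper's: Condition~\ref{Aapproxw} handles the approximation term, while Condition~\ref{Ainit} and Lemma~\ref{Lrate} (whose hypotheses, including the Lipschitz condition~\ref{ALipschitzidentity}, are taken as available, exactly as the paper does) give the $o_p(n^{-1/4})\cdot o_p(n^{-1/4})$ product. You are in fact a bit more careful than the paper in confining the use of the local Lipschitz bound to the event $\{\|\theta_n^0-\theta_0\|\leq n^{-1/4}\}$, which has probability tending to one, so the conclusion is unaffected.
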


\begin{proof}[Proof of Lemma~\ref{Lapproxw}]
    By the definition of the projection operator and triangle inequality,
    $$\| \dot{\psi} \circ \theta_0 - \pi_n (\dot{\psi} \circ \theta_0) \| \leq \| \dot{\psi} \circ \theta_0 - \pi_n (\dot{\psi} \circ \theta_n^0) \| \leq \| \dot{\psi} \circ \theta_0 - \dot{\psi} \circ \theta_n^0 \| + \| \dot{\psi} \circ \theta_n^0 - \pi_n (\dot{\psi} \circ \theta_n^0) \|.$$
    We bound the two terms on the right-hand side separately.
    
    \bfitem{Term 1}: By Condition~\ref{ALipschitzw}, $\| \dot{\psi} \circ \theta_0 - \dot{\psi} \circ \theta_n^0 \| \leq C \| \theta_0 - \theta_n^0 \|$.
    
    \bfitem{Term 2}: This term can be bounded similarly as in Lemma~\ref{Lrate}. By the reverse triangle inequality and the triangle inequality,
    \begin{align*}
    & \left| \| \dot{\psi} \circ \theta_n^0 - \Pi_{n,\theta_0} (\dot{\psi}) \circ \theta_n^0 \| - \| \dot{\psi} \circ \theta_0 - \Pi_{n,\theta_0} (\id) \circ \theta_0 \| \right| \\
    &\quad \leq \| [\dot{\psi} \circ \theta_n^0 - \Pi_{n,\theta_0} (\dot{\psi}) \circ \theta_n^0] - [\dot{\psi} \circ \theta_0 - \Pi_{n,\theta_0}(\dot{\psi}) \circ \theta_0] \| \\
    &\quad = \| [\dot{\psi} \circ \theta_n^0 - \dot{\psi} \circ \theta_0] - [\Pi_{n,\theta_0} (\dot{\psi}) \circ \theta_n^0 - \Pi_{n,\theta_0}(\dot{\psi}) \circ \theta_0] \| \\
    &\quad \leq \| \dot{\psi} \circ \theta_n^0 - \dot{\psi} \circ \theta_0 \| + \| \Pi_{n,\theta_0} (\dot{\psi}) \circ \theta_n^0 - \Pi_{n,\theta_0}(\dot{\psi}) \circ \theta_0 \| \\
    &\quad \leq C \| \theta_n^0 - \theta_0 \| + C \| \theta_n^0 - \theta_0 \| & \text{(Condition~\ref{ALipschitzw})} \\
    &\quad = C \| \theta_n^0 - \theta_0 \|.
    \end{align*}
    Therefore, by the definition of the projection operator and Condition~\ref{ALipschitzw},
    \begin{align*}
    \| \dot{\psi} \circ \theta_n^0 - \pi_n (\dot{\psi} \circ \theta_n^0) \| &\leq \| \dot{\psi} \circ \theta_n^0 - \Pi_{n,\theta_0} (\dot{\psi}) \circ \theta_n^0 \| \\
    &\leq \| \dot{\psi} \circ \theta_0 - \Pi_{n,\theta_0} (\dot{\psi}) \circ \theta_0 \| + C \| \theta_n^0 - \theta_0 \|.
    \end{align*}
    
    \bfitem{Conclusion from the two bounds}: $\| \dot{\psi} \circ \theta_0 - \pi_n (\dot{\psi} \circ \theta_0) \| \leq C \| \theta_n^0-\theta_0 \| + \| \dot{\psi} \circ \theta_0 - \Pi_{n,\theta_0}(\dot{\psi}) \circ \theta_0 \|$.
    
    Under Conditions~\ref{Ainit}--\ref{Aapproxw}, using Lemma~\ref{Lrate}, it follows that $\| \dot{\psi} \circ \theta_0 - \pi_n (\dot{\psi} \circ \theta_0) \| \cdot \| \theta_n^*-\theta_0 \|=o_p(n^{-1/2})$.
\end{proof}

Note that $\pi_n$ is a linear operator. Lemma~\ref{Lrate}~and~\ref{Lapproxw} along with other conditions essentially satisfy the assumptions in Corollary~2 in \cite{Shen1997}. We can prove the asymptotic linearity result of Theorem~\ref{Tefficiency} similarly to this result as follows.

\begin{proof}[Proof of Theorem~\ref{Tefficiency}]
    We note that
    \begin{align*}
    P_n \ell(\theta_n^*) &= P_n \ell(\theta_0) + P_0 [\ell(\theta_n^*) - \ell(\theta_0)] + (P_n-P_0) [\ell(\theta_n^*) - \ell(\theta_0)] \\
    &= P_n \ell(\theta_0) + P_0 [\ell(\theta_n^*) - \ell(\theta_0)] + (P_n-P_0) \ell_0'[\theta_n^* - \theta_0] \\
    &\quad+ (P_n-P_0) r[\theta_n^* - \theta_0].
    \end{align*}
    Let $\epsilon_n$ be an arbitrary sequence of positive real numbers that is $o(n^{-1/2})$. We may replace $\theta_n^*$ with $\pi_n((1-\epsilon_n) \theta_n^* + \epsilon_n (\theta_0 \pm \dot{\Psi}))$ in the above equation. We first consider $\pi_n((1-\epsilon_n) \theta_n^* + \epsilon_n (\theta_0 + \dot{\Psi}))$:
    \begin{align}
    \begin{split}
    \label{eq: asymptotic linearity key2}
    & P_n \ell \left( \pi_n((1-\epsilon_n) \theta_n^* + \epsilon_n (\theta_0 + \dot{\Psi})) \right) \\
    &= P_n \ell(\theta_0) + P_0 [\ell( \pi_n((1-\epsilon_n) \theta_n^* + \epsilon_n (\theta_0 + \dot{\Psi})) ) - \ell(\theta_0)] \\
    &\quad+ (P_n-P_0) \ell_0'[ \pi_n((1-\epsilon_n) \theta_n^* + \epsilon_n (\theta_0 + \dot{\Psi})) - \theta_0] \\
    &\quad+ (P_n-P_0) r[ \pi_n((1-\epsilon_n) \theta_n^* + \epsilon_n (\theta_0 + \dot{\Psi})) - \theta_0].
    \end{split}
    \end{align}
    Take the difference between the above two equations. By the linearity of $\ell_0'$ and $\pi_n$, we have that
    \begin{align*}
    & P_n \ell \left( \pi_n((1-\epsilon_n) \theta_n^* + \epsilon_n (\theta_0 + \dot{\Psi})) \right) - P_n \ell(\theta_n^*) \\
    &= P_0 [\ell(\pi_n((1-\epsilon_n) \theta_n^* + \epsilon_n (\theta_0 + \dot{\Psi}))) - \ell(\theta_0)] - P_0 [\ell(\theta_n^*) - \ell(\theta_0)] \\
    &\quad+ (P_n-P_0)\ell_0'[\pi_n((1-\epsilon_n) \theta_n^* + \epsilon_n (\theta_0 + \dot{\Psi})) - \theta_n^*] \\
    &\quad+ (P_n-P_0) \{ r[\pi_n((1-\epsilon_n) \theta_n^* + \epsilon_n (\theta_0 + \dot{\Psi})) - \theta_0] - r[\theta_n^* - \theta_0] \}.
    \end{align*}
    We next analyze the three lines on the right-hand side of the above equation separately.
    
    \bfitem{Line~1}: Under Condition~\ref{Aquadraticloss},
    \begin{align*}
    & P_0 [\ell(\pi_n((1-\epsilon_n) \theta_n^* + \epsilon_n (\theta_0 + \dot{\Psi}))) - \ell(\theta_0)] - P_0 [\ell(\theta_n^*) - \ell(\theta_0)] \\
    &= \frac{\alpha_{0,\ell}}{2} \| \pi_n((1-\epsilon_n) \theta_n^* + \epsilon_n (\theta_0 + \dot{\Psi})) - \theta_0 \|^2 - \frac{\alpha_{0,\ell}}{2} \| \theta_n^* - \theta_0 \|^2 \\
    &\quad+ o_p \left( \| \pi_n((1-\epsilon_n) \theta_n^* + \epsilon_n (\theta_0 + \dot{\Psi})) - \theta_0 \|^2 + \| \theta_n^* - \theta_0 \|^2 \right) \\
    \intertext{We subtract and add $(1-\epsilon_n) \theta_n^* + \epsilon_n (\theta_0 + \dot{\Psi})$ in the first term. By the fact that $\pi_n$ is linear and $\pi_n(\theta_n^*)=\theta_n^*$, the display continues as}
    &= \frac{\alpha_{0,\ell}}{2} \| \{\pi_n((1-\epsilon_n) \theta_n^* + \epsilon_n (\theta_0 + \dot{\Psi})) - ((1-\epsilon_n) \theta_n^* + \epsilon_n (\theta_0 + \dot{\Psi}))\} \\
    &\quad+ \{(1-\epsilon_n) \theta_n^* + \epsilon_n (\theta_0 + \dot{\Psi}) - \theta_0\} \|^2 \\
    &\quad- \frac{\alpha_{0,\ell}}{2} \| \theta_n^* - \theta_0 \|^2 + o_p \left( \| \pi_n((1-\epsilon_n) \theta_n^* + \epsilon_n (\theta_0 + \dot{\Psi})) - \theta_0 \|^2 + \| \theta_n^* - \theta_0 \|^2 \right) \\
    &= \frac{\alpha_{0,\ell}}{2} \| \epsilon_n \{\pi_n(\theta_0 + \dot{\Psi}) - (\theta_0 + \dot{\Psi})\} + (\theta_n^*-\theta_0) + \epsilon_n (\dot{\Psi}+\theta_0-\theta_n^*) \|^2 - \frac{\alpha_{0,\ell}}{2} \| \theta_n^* - \theta_0 \|^2 \\
    &\quad+ o_p \left( \| \pi_n((1-\epsilon_n) \theta_n^* + \epsilon_n (\theta_0 + \dot{\Psi})) - \theta_0 \|^2 + \| \theta_n^* - \theta_0 \|^2 \right) \\
    &= \epsilon_n \alpha_{0,\ell} \langle \theta_n^*-\theta_0, \dot{\Psi} \rangle + \epsilon_n^2 \frac{\alpha_{0,\ell}}{2} \| \pi_n(\theta_0+\dot{\Psi}) - \theta_n^* \|^2 \\
    &\quad+ \epsilon_n \alpha_{0,\ell} \langle \pi_n(\theta_0) - \theta_0, \theta_n^* - \theta_0 \rangle + \epsilon_n \alpha_{0,\ell} \langle \pi_n(\dot{\Psi}) - \dot{\Psi}, \theta_n^* - \theta_0 \rangle - \epsilon_n \alpha_{0,\ell} \| \theta_n^* - \theta_0 \|^2 \\
    &\quad+ o_p \left( \| \pi_n((1-\epsilon_n) \theta_n^* + \epsilon_n (\theta_0 + \dot{\Psi})) - \theta_0 \|^2 + \| \theta_n^* - \theta_0 \|^2 \right) \\
    \intertext{By Cauchy-Schwards inequality, the display continues as}
    &\leq \epsilon_n \alpha_{0,\ell} \langle \theta_n^*-\theta_0, \dot{\Psi} \rangle + \epsilon_n^2 \frac{\alpha_{0,\ell}}{2} \| \pi_n(\theta_0+\dot{\Psi}) - \theta_n^* \|^2 \\
    &\quad+ \epsilon_n \alpha_{0,\ell} \| \pi_n(\theta_0) - \theta_0 \| \| \theta_n^* - \theta_0 \| + \epsilon_n \alpha_{0,\ell} \| \pi_n(\dot{\Psi}) - \dot{\Psi} \| \| \theta_n^* - \theta_0 \| - \epsilon_n \alpha_{0,\ell} \| \theta_n^* - \theta_0 \|^2 \\
    &\quad+ o_p \left( \| \pi_n((1-\epsilon_n) \theta_n^* + \epsilon_n (\theta_0 + \dot{\Psi})) - \theta_0 \|^2 + \| \theta_n^* - \theta_0 \|^2 \right) \\
    \intertext{By Lemmas~\ref{Lrate}--\ref{Lapproxw} and the assumption that $\epsilon_n=o(n^{-1/2})$, the display continues as}
    &= \epsilon_n \alpha_{0,\ell} \langle \theta_n^*-\theta_0, \dot{\Psi} \rangle + \epsilon_n o_p(n^{-1/2}) + o_p \left( \| \pi_n((1-\epsilon_n) \theta_n^* + \epsilon_n (\theta_0 + \dot{\Psi})) - \theta_0 \|^2 + \| \theta_n^* - \theta_0 \|^2 \right).
    \end{align*}
    \bfitem{Line~2}: We subtract and add $(1-\epsilon_n) \theta_n^* + \epsilon_n (\theta_0 + \dot{\Psi})$. By linearity of $\ell_0'$, Condition~\ref{Aempiricalprocess}, and the fact that $\pi_n(\theta_n^*)=\theta_n^*$, we have that
    \begin{align*}
    & (P_n-P_0)\ell_0'[\pi_n((1-\epsilon_n) \theta_n^* + \epsilon_n (\theta_0 + \dot{\Psi})) - \theta_n^*] \\
    &= (P_n-P_0)\ell_0'[(1-\epsilon_n) \theta_n^* + \epsilon_n (\theta_0 + \dot{\Psi}) - \theta_n^*] \\
    &\quad+ \epsilon_n (P_n-P_0) \ell_0'[\pi_n(\theta_0+\dot{\Psi}) - (\theta_0+\dot{\Psi})] \\
    &= \epsilon_n (P_n-P_0) \ell_0'[\dot{\Psi}] - \epsilon_n (P_n-P_0) \ell_0'[\theta_n^* - \theta_0] + \epsilon_n o_p(n^{-1/2}) \\
    &= \epsilon_n (P_n-P_0) \ell_0'[\dot{\Psi}] + \epsilon_n o_p(n^{-1/2}).
    \end{align*}
    \bfitem{Line~3}: By Condition~\ref{Aempiricalprocess}, this term is $\epsilon_n o_p(n^{-1/2})$.
    
    \bfitem{Conclusion of the three lines}: It holds that
    \begin{align*}
    & P_n \ell \left( \pi_n((1-\epsilon_n) \theta_n^* + \epsilon_n (\theta_0 + \dot{\Psi})) \right) - P_n \ell(\theta_n^*) \\
    &\leq \epsilon_n \alpha_{0,\ell} \langle \theta_n^*-\theta_0, \dot{\Psi} \rangle + \epsilon_n (P_n-P_0) \ell_0'[\dot{\Psi}] \\
    &\quad+ \epsilon_n o_p(n^{-1/2}) + o_p \left( \| \pi_n((1-\epsilon_n) \theta_n^* + \epsilon_n (\theta_0 + \dot{\Psi})) - \theta_0 \|^2 + \| \theta_n^* - \theta_0 \|^2 \right).
    \end{align*}
    Since $\theta_n^*$ is an empirical risk minimizer, the left-hand side is non-negative. Thus,
    $$0 \leq \langle \theta_n^*-\theta_0, \dot{\Psi} \rangle + (P_n-P_0) \alpha_{0,\ell}^{-1} \ell_0'[\dot{\Psi}] + o_p(n^{-1/2}).$$
    
    Similarly, by replacing $\pi_n((1-\epsilon_n) \theta_n^* + \epsilon_n (\theta_0 + \dot{\Psi}))$ with $\pi_n((1-\epsilon_n) \theta_n^* + \epsilon_n (\theta_0 - \dot{\Psi}))$ in \eqref{eq: asymptotic linearity key2}, we derive that
    $$0 \leq -\langle \theta_n^*-\theta_0, \dot{\Psi} \rangle - (P_n-P_0) \alpha_{0,\ell}^{-1} \ell_0'[\dot{\Psi}] + o_p(n^{-1/2}).$$
    Therefore, $| \langle \theta_n^*-\theta_0, \dot{\Psi} \rangle + (P_n-P_0) \alpha_{0,\ell}^{-1} \ell_0'[\dot{\Psi}] |=o_p(n^{-1/2})$. By Conditions~\ref{AdPsi}--\ref{APsiremainder} and Lemma~\ref{Lrate},
    \begin{align*}
    \Psi(\theta_n^*) - \Psi(\theta_0) &= \langle \theta_n^*-\theta_0, \dot{\Psi} \rangle + o_p(n^{-1/2}) \\
    &= -(P_n-P_0) \alpha_{0,\ell}^{-1} \ell_0'[\dot{\Psi}] + o_p(n^{-1/2}).
    \end{align*}
    The asymptotic linearity of $\Psi(\theta_n^*)$ follows. We prove the efficiency in Appendix~\ref{appendix: regular proof}.
\end{proof}

The proof of Theorem~\ref{Tefficiency2} is almost identical.

Nest we present and prove a lemma allows us to interpret Condition~\ref{Aestimation} as an upper bound on the rate of $K$.
\begin{lemma}
    \label{Lestimation}
    Under Conditions~\ref{Aquadraticloss}, \ref{Ainit}, \ref{Aapproxidentity} (\ref{Aapproxidentity2} resp.) and \ref{ALipschitzidentity} (\ref{ALipschitzidentity2} resp.), $\| \pi_n(\theta_0) - \theta_n^\dagger \|=o_p(n^{-1/4})$.
\end{lemma}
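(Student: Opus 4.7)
\textbf{Proof plan for Lemma~\ref{Lestimation}.}
The strategy is to compare $\theta_n^\dagger$ to $\theta_0$ by using $\pi_n(\theta_0)$ as a convenient competitor in $\Theta_n$, then use triangle inequality together with Lemma~\ref{Lrate}. Concretely, I would proceed as follows.

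First, Lemma~\ref{Lrate} (which only requires Conditions~\ref{Ainit}, \ref{Aapproxidentity}, and \ref{ALipschitzidentity}, or \ref{Aapproxidentity2} and \ref{ALipschitzidentity2} in the generalized case) already gives $\|\pi_n(\theta_0) - \theta_0\| = o_p(n^{-1/4})$. Hence, by Condition~\ref{Aquadraticloss} applied at $\theta = \pi_n(\theta_0)$ (whose distance to $\theta_0$ is eventually small in probability, so the local quadratic expansion applies with probability tending to one),
\begin{equation*}
P_0\{\ell(\pi_n(\theta_0)) - \ell(\theta_0)\} = \tfrac{\alpha_{0,\ell}}{2}\|\pi_n(\theta_0) - \theta_0\|^2 + o_p(\|\pi_n(\theta_0) - \theta_0\|^2) = o_p(n^{-1/2}).
\end{equation*}

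Next, since $\theta_n^\dagger \in \argmin_{\theta \in \Theta_n} P_0\ell(\theta)$ and $\pi_n(\theta_0) \in \Theta_n$, we have
\begin{equation*}
0 \;\leq\; P_0\{\ell(\theta_n^\dagger) - \ell(\theta_0)\} \;\leq\; P_0\{\ell(\pi_n(\theta_0)) - \ell(\theta_0)\} \;=\; o_p(n^{-1/2}),
\end{equation*}
where the lower bound holds because $\theta_0$ is a global minimizer of $P_0\ell$ on $\Theta \supseteq \Theta_n$. Now I apply Condition~\ref{Aquadraticloss} in the reverse direction: because $P_0\{\ell(\theta_n^\dagger) - \ell(\theta_0)\}$ is eventually sufficiently small in probability, the condition yields
\begin{equation*}
\tfrac{\alpha_{0,\ell}}{2}\|\theta_n^\dagger - \theta_0\|^2 + o_p(\|\theta_n^\dagger - \theta_0\|^2) = P_0\{\ell(\theta_n^\dagger) - \ell(\theta_0)\} = o_p(n^{-1/2}),
\end{equation*}
which, after absorbing the $o_p(\|\theta_n^\dagger - \theta_0\|^2)$ term into the leading one on the left (legitimate because $\alpha_{0,\ell}>0$ is a fixed positive constant), implies $\|\theta_n^\dagger - \theta_0\| = o_p(n^{-1/4})$.

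Finally, the conclusion follows from the triangle inequality:
\begin{equation*}
\|\pi_n(\theta_0) - \theta_n^\dagger\| \leq \|\pi_n(\theta_0) - \theta_0\| + \|\theta_0 - \theta_n^\dagger\| = o_p(n^{-1/4}) + o_p(n^{-1/4}) = o_p(n^{-1/4}).
\end{equation*}
The main (minor) obstacle is the careful handling of the ``sufficiently small'' clause in Condition~\ref{Aquadraticloss} when it is applied in the reverse direction to $\theta_n^\dagger$: one needs to argue that the event ensuring both sides are in the quadratic regime has probability tending to one, which follows from the sandwich bound above. The generalized data-adaptive series version proceeds identically by invoking the parallel Lemma~\ref{Lrate} statement under Conditions~\ref{Aapproxidentity2} and \ref{ALipschitzidentity2}.
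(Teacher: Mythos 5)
Your proposal is correct and follows essentially the same route as the paper's proof: bound $P_0\{\ell(\theta_n^\dagger)-\ell(\theta_0)\}$ by $P_0\{\ell(\pi_n(\theta_0))-\ell(\theta_0)\}$ using the definition of $\theta_n^\dagger$ as the population risk minimizer over $\Theta_n$, convert between excess risk and squared norm in both directions via Condition~\ref{Aquadraticloss}, invoke Lemma~\ref{Lrate}, and finish with the triangle inequality. Your extra care with the ``sufficiently small'' clause of Condition~\ref{Aquadraticloss} (using the excess-risk branch for $\theta_n^\dagger$) is a fine, slightly more explicit rendering of the same argument.
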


\begin{proof}[Proof of Lemma~\ref{Lestimation}]
    By definition of $\theta_n^\dagger$ and Condition~\ref{Aquadraticloss}, we have
    $$\| \theta_n^\dagger - \theta_0 \|^2 \leq C P_0 \{ \ell(\theta_n^\dagger) - \ell(\theta_0) \} \leq C P_0 \{ \ell(\pi_n(\theta_0)) - \ell(\theta_0) \} \leq C \| \pi_n(\theta_0) - \theta_0 \|^2,$$
    the right-hand side of which is $o_p(n^{-1/2})$ by Lemma~\ref{Lrate} (or its corresponding version under Conditions~\ref{ALipschitzidentity2} and \ref{Aapproxidentity2}). Therefore, $\| \theta_n^\dagger - \theta_0 \| = o_p(n^{-1/4})$ and hence $\| \pi_n(\theta_0) - \theta_n^\dagger \| \leq \| \pi_n(\theta_0) - \theta_0 \| + \| \theta_n^\dagger - \theta_0 \| =o_p(n^{-1/4})$.
\end{proof}

We finally prove the efficiency of the data-adaptive series estimator with $K$ selected by CV.

\begin{proof}[Proof of Theorem~\ref{Tcv}]
    By Lemma~\ref{Lrate} and Condition~\ref{Aquadraticloss}, for that existing deterministic $K$, $P_0 \{ \ell(\theta_K^*(\theta_n^0)) - \ell(\theta_0) \} \leq C \| \theta_K^*(\theta_n^0) - \theta_0 \|^2=o_p(n^{-1/2})$. By the oracle inequality for CV in \cite{Vanderlaan2003cv}, $P_0 \{ \ell(\theta_n^\sharp) - \ell(\theta_0) \} = o_p(n^{-1/2})$. By Condition~\ref{Aquadraticloss}, $\| \theta_n^\sharp - \theta_0 \|^2 \leq C P_0 \{ \ell(\theta_n^\sharp) - \ell(\theta_0) \} = o_p(n^{-1/2})$ and hence $\| \theta_n^\sharp - \theta_0 \|=o_p(n^{-1/4})$. So with probability tending to one,
    \begin{align*}
    \| \dot{\psi} \circ \theta_n^0 - \pi_{K^*,\theta_n^0} (\dot{\psi} \circ \theta_n^0) \| &= \| \dot{\psi} \circ \theta_n^0 - \Pi_{K^*,\theta_n^0}(\dot{\psi}) \circ \theta_n^0 \| \\
    &\leq C \| \theta_n^0 - \Pi_{K^*,\theta_n^0}(\id) \circ \theta_n^0 \| & \text{(Condition~\ref{Abadseries})} \\
    &\leq C \| \theta_n^0 - \theta_n^\sharp \| & \text{(definition of the projection operator)} \\
    &\leq C (\| \theta_n^0 - \theta_0 \| + \| \theta_n^\sharp - \theta_0 \|), & \text{(triangle inequality)}
    \end{align*}
    which is $o_p(n^{-1/4})$ by Condition~\ref{Ainit}. Hence,
    \begin{align*}
    \| \dot{\psi} \circ \theta_0 - \pi_{K^*,\theta_n^0}(\dot{\psi} \circ \theta_0) \| &\leq \| \dot{\psi} \circ \theta_0 - \pi_{K^*,\theta_n^0}(\dot{\psi} \circ \theta_n^0) \| \\
    &\leq \| \dot{\psi} \circ \theta_0 - \dot{\psi} \circ \theta_n^0 \| + \| \dot{\psi} \circ \theta_n^0 - \pi_{K^*,\theta_n^0}(\dot{\psi} \circ \theta_n^0) \| \\
    &\leq C \| \theta_n^0 - \theta_0 \| + o_p(n^{-1/4}), & \text{(Condition~\ref{ALipschitzw})}
    \end{align*}
    which is $o_p(n^{-1/4})$ by Condition~\ref{Ainit}.
    
    This bounds the approximation error $\| \dot{\psi} \circ \theta_0 - \pi_{K^*,\theta_n^0}(\dot{\psi} \circ \theta_0) \|$ for $\dot{\psi}$, a result that is similar to Lemma~\ref{Lapproxw} combined with Conditions~\ref{Ainit}~and~\ref{Aapproxw2sufficient}. Similarly to Theorem~\ref{Tefficiency}, along with other conditions, the assumptions in Corollary~2 in \cite{Shen1997} are essentially satisfied and hence an almost identical argument shows that $\Psi(\theta_n^\sharp)$ is an asymptotically linear estimator of $\Psi(\theta_0)$. We prove the efficiency in Appendix~\ref{appendix: regular proof}.
\end{proof}

\subsection{Efficiency} \label{appendix: regular proof}

\begin{proof}[Proof of efficiency of the proposed estimators]
    It is sufficient to show that the influence function of our proposed estimators is the canonical gradient under a nonparametric model. Let $H \in \mathscr{H}$ be fixed. In the rest of this proof, for all small $o$ and big $O$ notations, we let $\delta \rightarrow 0$. The proof is similar to the proof of asymptotic linearity in \cite{Shen1997} except that the estimator of $\theta_0$ and the empirical distribution $P_n$ are replaced by $\theta_{H,\delta}$ and $P_{H,\delta}$ respectively.
    
    Let $\delta'$ satisfy Condition~\ref{Aempiricalprocesslike}. We note that
    \begin{align*}
    P_{H,\delta} \ell(\theta_{H,\delta}) &= P_{H,\delta} \ell(\theta_0) + P_0 [\ell(\theta_{H,\delta}) - \ell(\theta_0)] + (P_{H,\delta} - P_0) [\ell(\theta_{H,\delta}) - \ell(\theta_0)] \\
    &= P_{H,\delta} \ell(\theta_0) + P_0 [\ell(\theta_{H,\delta}) - \ell(\theta_0)] + (P_{H,\delta} - P_0) \ell_0'[\theta_{H,\delta} - \theta_0] \\
    &\quad+ (P_{H,\delta} - P_0) r[\theta_{H,\delta} - \theta_0].
    \end{align*}
    We also note that $(1-\delta')\theta_{H,\delta} + \delta' (\theta_0 \pm \dot{\Psi}) \in \Theta$ if $|\delta|$ is sufficiently small. Then, similarly, by replacing $\theta_{H,\delta}$ with $(1-\delta')\theta_{H,\delta} + \delta' (\theta_0 + \dot{\Psi})$ in the above equation, we have that
    \begin{align}
    \begin{split}
    \label{eq: regularity key2}
    & P_{H,\delta} \ell((1-\delta')\theta_{H,\delta} + \delta' (\theta_0 + \dot{\Psi})) \\
    &= P_{H,\delta} \ell(\theta_0) + P_0 [\ell((1-\delta')\theta_{H,\delta} + \delta' (\theta_0 + \dot{\Psi})) - \ell(\theta_0)] \\
    &\quad+ (P_{H,\delta} - P_0) \ell_0'[(1-\delta')\theta_{H,\delta} + \delta' (\theta_0 + \dot{\Psi}) - \theta_0] \\
    &\quad+ (P_{H,\delta} - P_0) r[(1-\delta')(\theta_{H,\delta} - \theta_0) + \delta' \dot{\Psi}].
    \end{split}
    \end{align}
    Take the difference between the above two equations. By the linearity of $\ell_0'$, we have that
    \begin{align*}
    & P_{H,\delta} \ell((1-\delta')\theta_{H,\delta} + \delta' (\theta_0 + \dot{\Psi})) - P_{H,\delta} \ell(\theta_{H,\delta}) \\
    &= P_0 [\ell((1-\delta')\theta_{H,\delta} + \delta' (\theta_0 + \dot{\Psi})) - \ell(\theta_0)] - P_0 [\ell(\theta_{H,\delta}) - \ell(\theta_0)] \\
    &\quad+ \delta' (P_{H,\delta} - P_0) \ell_0'[\dot{\Psi} - \theta_{H,\delta} + \theta_0] \\
    &\quad+ (P_{H,\delta} - P_0) \{ r[(1-\delta')(\theta_{H,\delta} - \theta_0) + \delta' \dot{\Psi}] - r[\theta_{H,\delta} - \theta_0] \} \\
    &= \frac{\alpha_{0,\ell}}{2} \| (1-\delta')(\theta_{H,\delta} - \theta_0) + \delta' \dot{\Psi} \|^2 - \frac{\alpha_{0,\ell}}{2} \| \theta_{H,\delta} - \theta_0 \|^2 \\
    &\quad+ o\left( \| \theta_{H,\delta} - \theta_0 \|^2 + \| (1-\delta')(\theta_{H,\delta} - \theta_0) + \delta' \dot{\Psi} \|^2 \right) & \text{(Condition~\ref{Aquadraticloss})} \\
    &\quad+ \delta' (P_{H,\delta} - P_0) \ell_0'[\dot{\Psi}] - \delta' (P_{H,\delta} - P_0) \ell_0'[\theta_{H,\delta} - \theta_0] + \delta' o(\delta) & \text{(Condition~\ref{Aempiricalprocesslike})}\\
    &= \delta' \alpha_{0,\ell} \langle \theta_{H,\delta} - \theta_0, \dot{\Psi} \rangle - \delta' \alpha_{0,\ell} \| \theta_{H,\delta} - \theta_0 \|^2 + \delta'^2 \frac{\alpha_{0,\ell}}{2} \| \theta_{H,\delta} - \theta_0 + \dot{\Psi} \|^2 \\
    &\quad+ o\left( \| \theta_{H,\delta} - \theta_0 \|^2 + \| (1-\delta')(\theta_{H,\delta} - \theta_0) + \delta' \dot{\Psi} \|^2 \right) \\
    &\quad+ \delta' (P_{H,\delta} - P_0) \ell_0'[\dot{\Psi}] + \delta' o(\delta) \\
    &\leq \delta' \alpha_{0,\ell} \langle \theta_{H,\delta} - \theta_0, \dot{\Psi} \rangle + \delta'^2 \frac{\alpha_{0,\ell}}{2} \| \theta_{H,\delta} - \theta_0 + \dot{\Psi} \|^2 \\
    &\quad+ o\left( \| \theta_{H,\delta} - \theta_0 \|^2 + \| (1-\delta')(\theta_{H,\delta} - \theta_0) + \delta' \dot{\Psi} \|^2 \right) \\
    &\quad+ \delta' (P_{H,\delta} - P_0) \ell_0'[\dot{\Psi}] + \delta' o(\delta).
    \end{align*}
    Since the left-hand side of the above display is nonnegative, by Condition~\ref{Acloseminimizer}, we have that
    \begin{align*}
    0 &\leq \langle \theta_{H,\delta} - \theta_0, \dot{\Psi} \rangle + \alpha_{0,\ell}^{-1} (P_{H,\delta} - P_0) \ell_0'[\dot{\Psi}] + O(\delta') + o(\delta) \\
    &= \langle \theta_{H,\delta} - \theta_0, \dot{\Psi} \rangle + \alpha_{0,\ell}^{-1} (P_{H,\delta} - P_0) \ell_0'[\dot{\Psi}] + o(\delta).
    \end{align*}
    
    Similarly, by replacing $(1-\delta')\theta_{H,\delta} + \delta' (\theta_0 + \dot{\Psi})$ with $(1-\delta')\theta_{H,\delta} + \delta' (\theta_0 - \dot{\Psi})$ in \eqref{eq: regularity key2}, we show that $0 \leq -\langle \theta_{H,\delta} - \theta_0, \dot{\Psi} \rangle - \alpha_{0,\ell}^{-1} (P_{H,\delta} - P_0) \ell_0'[\dot{\Psi}] + o(\delta)$. Therefore, $|\langle \theta_{H,\delta} + \theta_0, \dot{\Psi} \rangle + \alpha_{0,\ell}^{-1} (P_{H,\delta} - P_0) \ell_0'[\dot{\Psi}]| = o(\delta)$ and
    \begin{align*}
    \Psi(\theta_{H,\delta}) - \Psi(\theta_0) &= \langle \theta_{H,\delta} - \theta_0, \dot{\Psi} \rangle + O(\|\theta_{H,\delta} - \theta_0\|^2) \\
    &= -\alpha_{0,\ell}^{-1} (P_{H,\delta} - P_0) \ell_0'[\dot{\Psi}] + o(\delta) + O(\|\theta_{H,\delta} - \theta_0\|^2) \\
    &= -\alpha_{0,\ell}^{-1} (P_{H,\delta} - P_0) \ell_0'[\dot{\Psi}] + o(\delta). & \text{(Condition~\ref{Acloseminimizer})}
    \end{align*}
    Consequently, $\lim_{\delta \rightarrow 0} [\Psi(\theta_{H,\delta}) - \Psi(\theta_0)]/\delta = P_0 \{ -\alpha_{0,\ell}^{-1} \ell_0'[\dot{\Psi}] \cdot H \}$ and hence the canonical gradient of $\Psi$ under a nonparametric model is $\alpha_{0,\ell}^{-1} \{-\ell_0'[\dot{\Psi}] + P_0 \ell_0'[\dot{\Psi}]\}$. Since the influence functions of our asymptotically linear estimators are equal to this canonical gradient, our proposed estimators are efficient under a nonparametric model.
\end{proof}

\section{Simulation setting details} \label{appendix: simulation}

In all simulations, since $\theta_0(x)=\expect_{P_0}[Z|X=x]$ is the conditional mean function, the loss function was chosen to be the square loss $\ell(\theta): v \mapsto (z-\theta(x))^2$.

\subsection{HAL} \label{appendix: HAL setting}

In the simulation, we generate data from the distribution defined by
$$X \sim \text{N}(0,1), \  \theta_0(x)=\exp\{-(-1+2x+2x^2)/2\}, Z|X=x \sim \text{Exponential}(\text{rate}=1/\theta_0(x)).$$
The sample sizes being considered are 500, 1000, 2000, 5000 and 10000. For each scenario we run 1000 replicates. We chose M.gcv+ to be 3.1 times M.cv.

\subsection{Data-adaptive series}

\subsubsection{Demonstration of Theorem~\ref{Tcv}} \label{appendix: data daptive series setting}

In the simulation, we generate data from the distribution defined by $X \sim \text{Unif}(-1,1), \  Z|X=x \sim \text{N}(\theta_0(x),0.25^2)$ where
\begin{align*}
\theta_0: x &\mapsto I(-1 \leq x < -3/4) + \pi I(-3/4 \leq x < -1/2) + 10 x^2 I(-1/4 \leq x < 1/4) \\
&\quad + \sqrt{2} I(1/4 \leq x < 1/2) + \exp(-1) I(1/2 \leq x < 3/4) + \sqrt[3]{3} I(3/4 \leq x \leq 1),
\end{align*}
When using the trigonometric series, we first shift and scale the initial function range to be $[-1/2,1/2]$ and then use the basis for the interval $[-1,1]$ (i.e. $\sin(j \pi z), \cos(j \pi z)$) in sieve estimation to avoid the poor behavior of trigonometric series near the boundary. We consider sample sizes 500, 1000, 2000, 5000, 10000 and 20000. For each sample size, we run 1000 simulations.

\subsubsection{Violation of Condition~\ref{Abadseries}} \label{appendix: data daptive series rough setting}

In the simulation, we generate data from the distribution defined by $X \sim \text{Unif}(-1,1), \  Z|X=x \sim \text{N}(\theta_0(x),1)$ where $\theta_0: x \mapsto \cos(10 x)$. The estimand is $\Psi(\theta_0)=P_0 (f \circ \theta_0)$ where
\begin{align*}
f: z &\mapsto \left[ \frac{3}{10 \pi} \cos(5 \pi z) - \frac{3}{8} \right] I \left( z < -\frac{1}{2} \right) -\frac{3}{2} z^2 I \left( -\frac{1}{2} \leq z < 0 \right) \\
&\quad\ \, + 3 z^2 I \left( 0 \leq z < \frac{1}{2} \right) + \left[ -\frac{3}{2} \exp(2-4 z) - 3 z + \frac{15}{4} \right] I \left( z \geq \frac{1}{2} \right).
\end{align*}
We consider sample sizes 500, 1000, 2000, 5000, 10000 and 20000; for each sample size, we run 1000 simulations. Our goal is to explore the behavior of the plug-in estimator when $f$, instead of $\theta_0$, is rough, so we use kernel regression \citep{Nadaraya1964} to estimate $\theta_0$ for convenience.

\subsection{Generalized data-adaptive series}

\subsubsection{Demonstration of Theorem~\ref{Tcv2}} \label{appendix: general data daptive series setting}

In the simulation, we generate data from the distribution defined by $X \sim \text{Unif}(-1,1), \  A|X=x \sim \text{Bern}(\expit(-x)), \  Y|A=a,X=x \sim \text{N}(\mu_{0,a}(x),0.25^2)$ where
\begin{align*}
\mu_{00}: x &\mapsto I(-1 \leq x < -3/4) + \pi I(-3/4 \leq x < -1/2) + 10 x^2 I(-1/4 \leq x < 1/4) \\
&\quad + \sqrt{2} I(1/4 \leq x < 1/2) + \exp(-1) I(1/2 \leq x < 3/4) + \sqrt[3]{3} I(3/4 \leq x \leq 1), \\
\mu_{01}: x &\mapsto x^2 I(x < -1/3) + \exp(x) I(-1/3 \leq x < 1/3) + I(x > 1/3)
\end{align*}
The series is the tensor product \citep{Chen2007} of univariate trigonometric series in \ref{appendix: data daptive series setting}. The sample sizes are the same as in \ref{appendix: data daptive series setting}.

\subsubsection{Violation of Condition~\ref{Abadseries2}} \label{appendix: general data daptive series rough setting}

In the simulation, we generate data from the distribution defined by $X \sim \text{Unif}(-1,1), \  A|X=x \sim \text{Bern}(g_0(x)), \  Y|A=a,X=x \sim N(\mu_{0,a}(x),0.25^2)$ where $\mu_{0a}: x \mapsto \exp(-x^2 + 0.8 a x + 0.5 a)$ ($a \in \{0,1\}$) and
\begin{align*}
g_0: x &\mapsto \expit \Bigg\{ \left( -\frac{5}{3} x^3 - \frac{15}{4} x^2 - \frac{5}{3} x - \frac{25}{96} \right) I\left( x \leq -\frac{1}{2} \right) + \left( \frac{5}{6} x^4 + \frac{5}{3} x^3 \right) I\left( -\frac{1}{2} < x \leq 0 \right) \\
&\quad + \frac{5}{3} x^3 I\left( 0 < x \leq \frac{1}{2} \right) + \left( 5x^2 - \frac{15}{4} x + \frac{5}{6} \right) I\left( x > \frac{1}{2} \right) \Bigg\}.
\end{align*}
We consider sample sizes 500, 1000, 2000, 5000, 10000 and 20000; for each sample size, we run 1000 simulations. Our goal is to explore the behavior of the plug-in estimator when $\dot{\Psi}$, instead of $\theta_0$, is rough, so we use kernel regression \citep{Nadaraya1964} to estimate $\theta_0$ for convenience.

\end{appendices}

\section*{Acknowledgements}

This work was partially supported by the National Institutes of Health under award number DP2-LM013340 and R01HL137808. The content is solely the responsibility of the authors and does not necessarily represent the official views of the National Institutes of Health.

\bibliography{references}

\end{document}